%% file: main.tex
\documentclass[letterpaper,11pt]{article}
\input{header}

\title{Instance-Optimality of Bidirectional Dijkstra on Simple Graphs\footnote{This work was supported by the VILLUM Foundation grant 54451.}}

\author[1]{Christian Bertram}
\author[1]{Mads Vestergaard Jensen}
\author[2]{Mikkel Thorup}
\author[3]{
\and \vspace{-1em} Hanzhi Wang}
\author[1]{Shuyi Yan}

\vspace{0.5em}
\affil[1,2]{BARC, University of Copenhagen \vspace{0.2em}}
\affil[3]{The University of Melbourne \vspace{1.3em}}
\affil[1]{\texttt{\{chbe, mvje, shya\}@di.ku.dk}\vspace{0.2em}}
\affil[2,3]{\texttt{\{mikkel2thorup, hanzhi.hzwang\}@gmail.com}}

\date{}

\begin{document}

    \maketitle

    \input{abstract}
    \newpage
    \setcounter{tocdepth}{2}
    \tableofcontents
    \newpage
    \input{introduction}
    \input{preliminaries}

    \input{technicaloverview}
    \input{undirected_oblivious}

    \input{analysis_directed}
    \input{no_degree}

    \bibliographystyle{alphaurl}
    \bibliography{paper}

\end{document}

%% file: header.tex
\usepackage{amsmath,amssymb,amsfonts, amsthm}
\usepackage{xcolor}
\usepackage{xspace}
\usepackage{thm-restate}
\usepackage{mathtools}
\usepackage{authblk}

\usepackage[numbers, sort]{natbib}

\usepackage{enumerate}
\usepackage{enumitem}
\usepackage{subfiles}
\usepackage{bm}
\usepackage{tabularx}
\usepackage{multirow}
\usepackage{lineno}
\usepackage{stackengine}
\usepackage[T1]{fontenc}
\usepackage[margin=1in]{geometry}
\usepackage[ruled,vlined,linesnumbered]{algorithm2e}
\usepackage[caption=false,font=normalsize,labelfont=sf,textfont=sf]{subfig}
\usepackage{graphicx} 
\usepackage{booktabs}
\usepackage{multirow}
\usepackage{makecell}
\usepackage{amsmath}
\usepackage[normalem]{ulem}
\usepackage{xcolor}
\usepackage{soul}
\SetCommentSty{\color{gray}}
\newcommand{\algcomment}[1]{%
  \hspace*{\fill}%
  \makebox[8.65cm][l]{\textcolor{gray}{// #1}}%
}

\setstcolor{blue}
\setul{0.6ex}{0.25ex}

\setlist[itemize]{leftmargin=*}
\setlist[enumerate]{leftmargin=*}

\usepackage{hyperref}
\hypersetup{
  linktoc      = all, 
  colorlinks   = true, % use colors instead of ugly boxes
  urlcolor     = blue, % color for external hyperlinks
  linkcolor    = blue, % color for internal links
  citecolor    = red, % color for citations
  linktocpage  = true
}
\usepackage[capitalize,noabbrev]{cleveref}
\usepackage{soul}

\newtheorem{theorem}{Theorem}[section]
\newtheorem{lemma}[theorem]{Lemma}

\newtheorem{claim}[theorem]{Claim}

\crefname{claim}{Claim}{Claims}
\Crefname{claim}{Claim}{Claims}

\def\deg{\textup{\textsc{deg}}\xspace}

\def\indeg{\textup{\textsc{indeg}}\xspace}
\def\innbr{\textup{\textsc{in}}\xspace}
\def\outdeg{\textup{\textsc{outdeg}}\xspace}
\def\outnbr{\textup{\textsc{out}}\xspace}

\def\dout{d_{\mathrm{out}}}

\def\setC{\mathcal{C}}

\def\edelta{\hat{\delta}}

\def\cost{cost}
\def\ratio{\phi}

\def\factor{\phi}

\def\eq{E_Q}
\def\hD{h_D}
\def\HD{H_D}

\renewcommand{\P}[1]{{\Pr\!\left[#1\right]}}
\newcommand{\ceil}[1]{{\left\lceil#1\right\rceil}}

\newcommand*{\eps}{\varepsilon}

%% file: abstract.tex
\begin{abstract}

We study the shortest-path problem on graphs with positive real-valued edge weights. Given a source vertex $s$ and a target vertex $t$, the goal is to calculate the length of the shortest path from $s$ to $t$.
We are particularly interested in instances that can be solved in sublinear time.

Recently, Haeupler, Hladík, Rozhoň, Tarjan, and Tětek~\cite{sosa_bidirectional_HaeuplerHRTT25} proved that (a version of) the   bidirectional Dijkstra’s algorithm is \emph{instance-optimal} on positively weighted \emph{multigraphs}, both directed and undirected, considering the number of vertices and edges queried by the algorithm. 
Here, being instance-optimal means best
within a constant factor
on every possible instance, but this is only compared against
algorithms that are correct on every valid instance.

However, multigraphs are not the canonical setting for the shortest-path problem. The problem is typically formulated on simple graphs without loops and parallel edges. \cite{sosa_bidirectional_HaeuplerHRTT25} therefore left as an open problem whether bidirectional Dijkstra remains instance-optimal on simple weighted graphs.

We answer this question, but for simple graphs, the answer is more complex, depending on the setting. We show that bidirectional Dijkstra is still instance-optimal on simple undirected weighted graphs under the order-oblivious model, where incident edges are given in a random order. In contrast, under the order-dependent model, where incident edges have a given order, we show that bidirectional Dijkstra is \emph{not} instance-optimal: it is off by a factor of $\Theta(\min\{\lceil m/n \rceil, n^{1/3}\})$, where $m$ and $n$ denote the number of edges and vertices, respectively. Nonetheless, we show that this is the best possible in the sense that no algorithm can have an instance-optimality ratio of $o(\min\{\lceil m/n \rceil, n^{1/3}\})$.

For simple directed weighted graphs, we show that bidirectional Dijkstra is \emph{not} instance-optimal under either the order-oblivious or the order-dependent model, being off by a factor of $\Theta(\lceil m/n \rceil)$ in both cases. We further show that no algorithm can have instance-optimality ratio $o(\lceil m/n \rceil)$  under the order-dependent model, or under the order-oblivious model when $m=O(n\sqrt{n})$.
On the positive side, the above results imply that bidirectional Dijkstra is instance-optimal up to logarithmic factors on all sparse directed and undirected graphs satisfying $m/n=\log^{O(1)} n$.

Besides the above main bounds, we also study the instance-optimality if we
have an upper bound $\Delta$ on the maximal degree of the graph, and for instances that can be solved in  $o(n)$ time.

Finally we show that most of the instance-non-optimality is due to degree-queries: if there are no degree-queries then bidirectional Dijkstra is order-dependent instance-optimal on undirected graphs and order-oblivious instance-optimal on directed graphs. 
\end{abstract}

%% file: introduction.tex
\section{Introduction} \label{sec:introduction}

The shortest path problem is one of the most fundamental problems in graph algorithms. Given a graph $G$ with positive real-valued edge weights, a distinguished source vertex $s$, and a target vertex $t$,
the problem asks for the length of the shortest path from $s$ to $t$. Often we will also find the shortest paths, but this is not required for lower-bounds.
Let $n$ and $m$ denote the number of vertices and edges in $G$, respectively. The textbook Dijkstra’s algorithm~\cite{Dijkstra59} solves this problem in $O(m+n\log n)$ time using a Fibonacci heap~\cite{FredmanT87_fibonacci} or a relaxed heap~\cite{DriscollGST88_relaxedheap}. This running time is already near-optimal in the worst case, where $\Omega(m)$ time is necessary.

However, when going beyond worst-case analysis and considering graph instances that can be solved  in $o(m)$ time, a number of algorithms have been observed to significantly outperform Dijkstra’s algorithm. One canonical example is bidirectional search~\cite{Dantzig1963, Nicholson66}, which simultaneously runs Dijkstra’s algorithm forward from $s$ and backward from $t$, terminating when the two searches meet. 
Although bidirectional Dijkstra still requires $\tilde \Theta(m)$ time in the worst case, it is often highly efficient in practice, exploring only a small portion of the graph before terminating. 

\paragraph{Instance-optimality for positively weighted multigraphs.} 
A recent work by Haeupler, Hladík, Rozhoň, Tarjan, and Tětek~\cite{sosa_bidirectional_HaeuplerHRTT25} provided a theoretical explanation for the superiority of bidirectional Dijkstra. They showed that, when alternating edge relaxations in the forward and backward searches, this simple algorithm is, surprisingly, \emph{instance-optimal} on all \emph{weighted multigraphs} with real positive weights.  We are going to study what happens for the more standard case of simple graphs, but first we contextualize the result on multigraphs from~\cite{sosa_bidirectional_HaeuplerHRTT25}.

Here, an instance-optimal algorithm~\cite{FaginLN03_instance_optimal, instance_optimal_Roughgarden19} is the dream of an algorithm that is (close to) \emph{best} on \emph{every} single input instance. This covers not just worst-case or average-case instances, but also ``easy'' instances that can be solved in constant time. This contrasts algorithms with optimal worst-case or average-case guarantees that may spend worst-case or average-case time also on such easy instances.
Notably, when discussing instance-optimality, we compete only against algorithms that are \emph{correct} on every valid input instance. Such algorithms may be \emph{instance-smart}, in the sense that they are tuned for a particular instance and may even know the correct answer for that instance in advance. However, since the actual input may differ, an instance-smart algorithm must still \emph{verify} that its answer is correct on the given input. Note that instance-smart algorithms subsume algorithms that receive good but not fully trusted advice (e.g., advice from AI~\cite{MitzenmacherV22}), which likewise must verify their answers on the actual input instance. 

We say that an algorithm $A$ is \emph{instance-optimal within a factor $\factor$}, or that $A$ is \emph{$\factor$-instance-optimal},
if for every input instance $I$, $A$ is at most a factor $\factor$ slower than any correct (instance-smart) algorithm on $I$. The smallest such $\phi$ is called the \emph{instance-optimality ratio} of $A$. When we say an algorithm is instance optimal, we typically mean that $\phi$ is constant.
Such instance-optimality is often considered the ultimate notion of optimality~\cite{colt_ChenL16_openproblem,instnce-optimal_AfshaniBC17}. Finding an algorithm, especially a simple one, that is
instance-optimal is therefore extremely appealing. 

Unfortunately, instance-optimal algorithms exist only in rare situations~\cite{instnce-optimal_AfshaniBC17, FOCS_best_HaeuplerHRTT24}. 
Some positive examples are known, e.g., in quantity estimation~\cite{siamcomp_DagumKLR00}, database aggregation~\cite{FaginLN03_instance_optimal}, sequential estimation~\cite{stoc_ValiantV16_instance_optimal, dang_neurips_23, focs_NarayananRTT24_thorup}, computational geometry~\cite{instnce-optimal_AfshaniBC17, BergHRRW25_esa_instanceopt_convexhull}, distribution testing and learning~\cite{siamcomp/ValiantV17, books_ValiantV20, focs_BlancCW25_clement}, PageRank estimation~\cite{PageRank_TW}, and contextual bandits~\cite{nips_LiRNJJ22}.
As mentioned above,~\cite{sosa_bidirectional_HaeuplerHRTT25} adds bidirectional Dijkstra to this exclusive list, showing that it is instance-optimal on all multigraphs, directed and undirected, with real positive weights.
The instance-optimality ratio is constant when we measure query complexity (number of vertices and edges queried), and at most $O(\log n)$ if we measure the actual running time.

For the above instance-optimality, it is critical that we only allow positive weights, that is, no zero weights. Bidirectional Dijkstra works correctly for the more general case of non-negative weights, but it is no longer instance-optimal. As proved in 
~\cite{sosa_bidirectional_HaeuplerHRTT25}, for the shortest path problem with non-negative weights, there is no algorithm with instance-optimality ratio $o(n/\log n)$. The basic issue is that an instance-smart algorithm may know a ``secret'' path of length  $0$ from $s$ to $t$. Since no shorter path can exist, it just
has to verify this path. In fact, the bad instances are just instances of reachability where all edge weights are zero, which is hence one of the many problems that do not admit an instance-optimal algorithm.

\paragraph{Simple graphs versus multigraphs.}
The instance-optimality from \cite{sosa_bidirectional_HaeuplerHRTT25} is established only for positively weighted multigraphs, allowing both self-loops and parallel edges. Allowing parallel weighted edges may seem redundant for shortest-path computation, but recall that the main obstacle to obtaining instance-optimal algorithms is the existence of instance-smart algorithms tailored to perform exceptionally well on an individual instance. So to investigate the instance-optimality of Dijkstra's algorithm, which relaxes every incident edge when it visits a vertex, one would like to force any correct algorithm to do more or less the same. Parallel edges achieve exactly this: without them, the one-hop distance from $u$ to $v$ is known as soon as a single $(u, v)$ edge is found and its weight read, whereas with parallel edges one must scan the entire out-adjacency list of $u$ (or the in-adjacency list of $v$) to be sure no lighter edge is hidden, which is always possible since edge weights are positive reals.

However, normally, the shortest-path problem is studied on simple graphs. Multigraphs reduce to this case simply by removing all loops and merging parallel edges into a single edge with the smallest weight available. It is therefore natural to ask what happens on simple graphs.
This question was left open by~\cite{sosa_bidirectional_HaeuplerHRTT25}, whose concluding remarks (where Theorem~1.1 denotes their instance-optimality result for positively weighted multigraphs) state: 
\begin{quote}
``Our proof of Theorem 1.1 works for multigraphs that allow parallel edges and self-loops. While we believe that the possibility of self-loops can be avoided, we are not sure whether the same holds for parallel edges: We believe that whether Theorem 1.1 holds in the setting of simple graphs is an interesting open question.'' ~\cite{sosa_bidirectional_HaeuplerHRTT25}
\end{quote}

\begin{table}[!t]
\centering
\small
\setlength{\tabcolsep}{6pt}
\renewcommand{\arraystretch}{1.4}
\begin{tabularx}{\linewidth}{@{}l l >{\raggedright\arraybackslash}X >{\raggedright\arraybackslash}X@{}}
\toprule
\makecell[l]{\textbf{Graph with} \\ \textbf{degree query}} & \makecell[l]{{\bf Order}\\{\bf model}}
 & \makecell[l]{ {\bf Instance-optimality ratio}\\ {\bf of bidirectional Dijkstra}}
 & \makecell[l]{ {\bf Instance-optimality ratio} \\ {\bf of every good algorithm}} \\
\midrule
\multirow{4}{1.9cm}{Undirected}
    & oblivious
    & $O(1)$~[Thm~\ref{thm:undirected-oblivious-undirected}] (\textbf{instance-optimal})
    & $\Omega(1)$ (trivial) \\
\cmidrule(lr){2-4}
    & \multirow{2}{*}{\makecell[l]{dependent}}
    & $O(\min\{\lceil m/n \rceil,\,n^{1/3}\})$~[Thm~\ref{thm:undirected-fixed-m}]
       & $\Omega(\lceil m/n \rceil)$ if $m=O(n^{4/3})$~[Thm~\ref{thm:undirected-fixed-lower-m}]  \\
  & & $O(\lceil\Delta^{2/3}/n^{1/3}\rceil)$~[Thm~\ref{thm:undirected-fixed}]
    & $\Omega(\lceil\Delta^{2/3}/n^{1/3}\rceil)$~[Thm~\ref{thm:undirected-fixed-lower}]  \\
    & & $O(1)$ if $\tau\le n/20$ [Thm~\ref{thm:undirected-tau}] &
    \\
\midrule
\multirow{5}{1.9cm}{Directed}
& \multirow{2}{*}{\makecell[l]{oblivious}}
    & $\Omega(\lceil m/n \rceil)$ [Thm~\ref{thm:directed-oblivious-m}]
    & $\Omega(\lceil m/n \rceil)$ if $m=O(n\sqrt{n})$~[Thm~\ref{thm:directed-oblivious-m-any}] \\
  & &
   $\Omega(\lceil\Delta^2/n\rceil)$~[Thm~\ref{thm:directed-oblivious}]   &  \\
\cmidrule(lr){2-4}
& \multirow{2}{*}{\makecell[l]{dependent}}
  & $O(\lceil m/n \rceil)$~[Thm~\ref{thm:directed-fixed-d}]
    & $\Omega(\lceil m/n \rceil)$~[Thm~\ref{thm:directed-fixed-lower-m}] \\
 & & $O(\lceil\Delta^2/n\rceil)$ [Thm~\ref{thm:directed-fixed}]
    & $\Omega(\lceil\Delta^2/n\rceil)$ [Thm~\ref{thm:directed-fixed-lower}] \\
    & & $O(h)$ if $\tau\le n/20$ [Thm~\ref{thm:directed-tau}] &
    \\
\bottomrule
\end{tabularx}
\caption{Instance-optimality ratios for shortest path on simple weighted graphs with positive real weights. Note that any upper bound on the instance-optimality ratio established under the order-dependent model extends to the order-oblivious model.
We thus have tight bounds on bidirectional Dijkstra in almost every setting.
Here, $n$ and $m$ denote the numbers of vertices and edges in $G$, respectively,  and $\Delta$ denotes the maximum in- and out-degree of $G$. 
Additionally, $\tau=\tau(I)$ denotes the instance-wise complexity, i.e., the minimum of the expected number of queries made by all good algorithms on instance $I$. Thus, the condition $\tau<n/10$ means that the instance-optimality result holds for all instances whose instance-wise complexity satisfies $\tau<n/10$.
$h\ge1$ is an upper bound of the number of vertices with degree at least $n-10\tau$.
}
\label{tab:results}
\end{table}

\begin{table}[!t]
\centering
\small
\setlength{\tabcolsep}{6pt}
\renewcommand{\arraystretch}{1.4}
\begin{tabularx}{\linewidth}{@{}l l >{\raggedright\arraybackslash}X >{\raggedright\arraybackslash}X@{}}
\toprule
\makecell[l]{\textbf{Graph w/o} \\ \textbf{degree query}} & \makecell[l]{{\bf Order}\\{\bf model}}
 & \makecell[l]{ {\bf Instance-optimality ratio}\\ {\bf of bidirectional Dijkstra}}
 & \makecell[l]{ {\bf Instance-optimality ratio} \\ {\bf of every good algorithm}} \\
\midrule
\multirow{2}{1.9cm}{Undirected}
  & oblivious
    & $O(1)$~[Thm~\ref{thm:order-dependent-undirected-nodegree}] (\textbf{instance-optimal})
    & $\Omega(1)$ (trivial) \\
\cmidrule(lr){2-4}
& dependent  & $O(1)$~[Thm~\ref{thm:order-dependent-undirected-nodegree}] (\textbf{instance-optimal}) & $\Omega(1)$ (trivial)  \\
\midrule
\multirow{2}{1.9cm}{Directed}
& oblivious  & $O(1)$~[Thm~\ref{thm:oblivious-directed-nodegree}] (\textbf{instance-optimal})
    & $\Omega(1)$ (trivial)  \\
\cmidrule(lr){2-4}
& dependent & $O(\lceil\Delta^2/n\rceil)$ [Thm~\ref{thm:directed-fixed}] & $\Omega(\lceil\Delta/\sqrt n\rceil)$ [Thm \ref{thm:directed-dependent-no-degree-lower}]\\
\bottomrule
\end{tabularx}
\caption{Instance-optimality ratios without degree queries for shortest path on simple weighted graphs with positive real weights. All symbols have the same explanations as in Table~\ref{tab:results}. 
}
\label{tab:results_nodegree}
\end{table}

\paragraph{Our contribution.}
We answer the above question, but the answer is more complex for positively weighted \emph{simple graphs}, depending on the setting (the result from \cite{sosa_bidirectional_HaeuplerHRTT25} was that bidirectional Dijkstra is instance-optimal for positively weighted \emph{multigraphs} in all the settings mentioned below). We show that bidirectional Dijkstra is still instance-optimal on simple undirected weighted graphs under the order-oblivious model, where incident edges are given in a random order. In contrast, under the order-dependent model, where incident edges have a given order, we show that bidirectional Dijkstra is \emph{not} instance-optimal: it is off by a factor of $\Theta(\min\{\lceil m/n \rceil, n^{1/3}\})$. Nonetheless, we show that this is best possible, in the sense that every algorithm has instance-optimality ratio $\Omega(\min\{\lceil m/n \rceil, n^{1/3}\})$

For simple directed weighted graphs, we show that bidirectional Dijkstra is \emph{not} instance-optimal under either the order-oblivious or the order-dependent model, being off by a factor of $O(\lceil m/n \rceil)$ in both cases. We further show that every algorithm has an $\Omega(\lceil m/n \rceil)$ instance-optimality ratio under the order-dependent model, and under the order-oblivious model when $m=O(n\sqrt{n})$.
On the positive side, the above results imply that bidirectional Dijkstra is instance-optimal up to logarithmic factors on all sparse directed and undirected graphs satisfying $m/n=\log^{O(1)} n$.

Besides the above main bounds, we also study the instance optimality ratio if we
have an upper bound $\Delta$ on the maximal degree, and for instances that can be solved in $o(n)$ time. A summary of our results is given in~\Cref{tab:results}. We will present these results in~\Cref{sec:more-results} after we have given a more detailed description of the model in~\Cref{sec:model}.

Finally we show that most of the instance-non-optimality is due to degree-queries: bidirectional Dijkstra's algorithm does not use degree-queries, and if degree-queries are not supported, then bidirectional Dijkstra is instance-optimal on both directed and undirected positively weighted graphs in the order-oblivious model. The only exception to instance-optimality is for directed graphs in the order-dependent model. 
The results without degree queries are given in Table \ref{tab:results_nodegree}.

\paragraph{Relation to A$^*$ heuristic search and advice from AI.}

As discussed in  \cite{sosa_bidirectional_HaeuplerHRTT25}, the popular A$^*$ search algorithm~\cite{hart1968formal} is a heuristic variant of Dijkstra’s algorithm that is sometimes observed to run faster than bidirectional Dijkstra in practice. The A$^*$ algorithm relies on a heuristic estimate $\edelta(v)$ of the distance from each vertex $v$ to the destination $t$, and runs Dijkstra’s algorithm on the adjusted edge weights $\ell'(u,v)=\ell(u,v)+\edelta(v)-\edelta(u)$, where $\ell(u,v)$ denotes the original weight of edge $(u,v)$. If $\edelta$ is exact and $\edelta(v)=d(v,t)$ for all $v$, then all edges on these shortest paths have adjusted weight $0$, and A$^*$ will only explore vertices on shortest paths from $s$ to $t$. 

Since Dijkstra does not work with negative edge-weights and A$^*$ builds on Dijkstra, the correctness of the A$^*$ heuristic search requires that the adjusted edge weight $\ell'(u,v)$ is non-negative for every edge $(u,v)$; otherwise A$^*$ may return an incorrect shortest $s$-$t$ path. The heuristic estimate $\edelta$ is therefore defined to be \emph{consistent} only if all adjusted weights are non-negative, that is, if $\ell(u,v)+\edelta(v)\ge \edelta(u)$ holds for every edge $(u,v)$ in the graph.
Verifying consistency of $\edelta$ requires checking the adjusted weights of all edges and therefore takes $\Omega(m)$ time.
This nearly matches the $\tilde{O}(m)$ worst-case running time of Dijkstra, so in practice, with A$^*$, we just have to trust that $\edelta$ is consistent. With this trust in $\edelta$, the A$^*$ search may outperform bidirectional Dijkstra, but this does not contradict instance-optimality of bidirectional Dijkstra since guaranteed correctness assumes the extra information that $\edelta$ is consistent with respect to the given input graph. This point with a trusted $\edelta$ was already discussed in \cite{sosa_bidirectional_HaeuplerHRTT25}.

Suppose now that we run A$^*$ with a $\edelta$ that we do not fully trust to be consistent on the given input graph, e.g., $\edelta$ could be
an ``advice'' on the input provided by an AI system that we do not fully trust~\cite{MitzenmacherV22}. To be sure of correctness, we must verify that the path found is indeed a shortest path in the input graph, just as any other instance-smart algorithm must. Consequently, we cannot be significantly faster than an instance-optimal bidirectional Dijkstra that constructs the shortest path from scratch. Here, by an instance-optimal bidirectional Dijkstra we mean one of the cases where bidirectional Dijkstra is instance-optimal, e.g., positively weighted multi-graphs as proved in \cite{sosa_bidirectional_HaeuplerHRTT25}, or undirected simple graphs in the order-oblivious model as proved in this paper.

\subsection{Model}\label{sec:model}
\def\cI{\mathcal I}
\def\cC{\mathcal C}

\paragraph{Good algorithms.}
Recall that when we discussed instance-optimality, we had a set $\cI$ of \emph{valid instances}, and  we compared only with correct algorithms, namely algorithms that return a \emph{correct} answer on every $I\in\cI$. However, we can also compare with Monte Carlo randomized algorithms whose correctness is probabilistic. Accordingly, we say that a randomized algorithm $A$ is \emph{good} if on every valid instance it returns a correct answer with probability at least $9/10$. The choice of the constant $9/10$ is arbitrary, since standard probability amplification can boost any constant success probability greater than $1/2$ to $9/10$.

\paragraph{Instance-wise complexity and instance-optimality ratio.} 

Recall that we have introduced the instance-optimality ratio $\ratio$ of an algorithm $A$ as the smallest factor by which $A$ is slower than any correct algorithm on every valid input instance $I\in\cI$. We now define this notion more formally. 
  
Let $\cost_A(I)$ denote the expected number of queries made by an algorithm $A$ on an instance $I$. The \emph{instance complexity} $\tau=\tau(I)$ of $I$ is then defined as the minimum $\cost_A(I)$ for any good algorithm $A$ on the instance $I$. Note that for the instance complexity of $I$, we do not care how much time $A$ spends on other instances, so $A$ could be instance-smart, specially tuned for $I$. However, to be good, $A$  has to verify the answer on $I$ so that it does not provide a wrong answer because it gets a different input $I'$.
In particular, we note that $\tau(I)$ cannot decrease if we increase the set $\cI$ of valid instances that algorithms have to be good on.

The \emph{instance-optimality ratio} of $A$ is defined as 
\begin{align}\label{eqn:ratio}
\ratio_A = \sup_{I\in \cI}\{\cost_A(I)/\tau(I)\},
\end{align}
that is, the smallest value of $\ratio_A$ such that $\ratio_A\ge \cost_A(I)/\tau(I)$ for every valid instance $I$.
We call $A$ \emph{instance-optimal} if its instance-optimality ratio $\ratio_A$ is a constant. When $\phi_A$ is not constant, we may want to parameterize it by an instance-parameter such as the size $n$, e.g., 
\begin{align}\label{eqn:ratio-par}
\ratio_A(n) = \sup_{I\in \cI, |I|=n}\{\cost_A(I)/\tau(I)\}. 
\end{align}
Finally, it may be relevant to study the instance-optimality ratio of $A$ on a subset $\mathcal{C}$ of valid instances as
\begin{align*}
\ratio_A(\setC) = \sup_{I\in \setC}\{\cost_A(I)/\tau(I)\}. 
\end{align*}
We call $A$ \emph{instance-optimal} on $\setC$ if $\ratio_A(C)$ is a constant.

An example from \cite{sosa_bidirectional_HaeuplerHRTT25} is that if
the valid instances are multigraphs with positive integer weights, then the instance-optimality ratio of bidirectional Dijkstra is $\Theta(\Delta)$ where $\Delta$ bounds the maximal degree. However, if the valid instances include weight zero, then the instance-optimality ratio is constant on instances with only positive weighs. Another example from \cite{PageRank_TW} is that a certain PageRank algorithm is instance-optimal except on digraphs with many vertices of very high degree. The latter example illustrates that even if an algorithm is not instance optimal on all the valid input instances (all digraphs), we can still hope that is instance-optimal on a rich and interesting subclass.

\paragraph{Query model for sublinear graph algorithms.}
We are particularly interested in instances on which the shortest-path problem can be solved in $o(m)$ time. Like in \cite{sosa_bidirectional_HaeuplerHRTT25}, we therefore work in the standard \emph{adjacency-list} query model for sublinear graph algorithms. 

We assume that the input is a simple graph, with no self-loops or parallel edges; the graph has $n$ vertices, and all edges have positive real-valued weights.
Additionally, we assume the number of vertices $n$ of the graph is sufficiently large and is known in advance. 
The graph is stored in incidence-list format. The vertices are numbered $1$ to $n$, and each is identified by its index; an algorithm has query access to all of them. Each vertex keeps a list of its outgoing edges and a list of its incoming edges. Given a vertex $v\in[1,n]$, the model supports \emph{degree queries} $\indeg(v)$ and $\outdeg(v)$, which return the in- and out-degree of $v$, and \emph{neighbor queries} $\innbr(v,i)$ and $\outnbr(v,i)$, which return the $i$-th in- and out-neighbor of $v$ together with the weight of the corresponding edge (if $i$ exceeds the length of the relevant list, the query returns null). Each query has unit cost. We measure the complexity of an algorithm by its query complexity, i.e., the total number of queries it performs.

Throughout the paper we consider both directed and undirected graphs, but we state our model and algorithms only for the directed case: they naturally transfer to undirected graphs, since an undirected graph is simply a directed graph in which each vertex has the same in- and out-degree and the same in- and out-adjacency lists.

\paragraph{Order-dependent and order-oblivious.}
The shortest path problem normally refers to a graph $G=(V,E)$ with a set $V$ of vertices and a set $E$ of edges. However, when we store the graph for the above query model, we have to order vertices, indexing them $1,\ldots,n$. Moreover, for each vertex $v$, we have to order the incident edges so that we can index the neighbors as $\innbr(v,i)$ and $\outnbr(v,i)$,

When we discuss instance-optimality, it is important whether the orderings are considered part of the input or whether they are just random and provide no information beyond the graph itself. The former case is called \emph{order-dependent} while the latter is \emph{order-oblivious}. 

For an instance-smart algorithm, it is better to be order-dependent, since it offers more information on the instance that it is tuned for. 
For example, the instance-smart algorithm may know the edge indices on the shortest path from $s$ to $t$. All it has to do is to verify this path, and verify that there is no shorter path. In the case of zero-weight edges, it only has to verify the zero-length path, without having to verify that no shorter path exists in the graph. 

From the perspective of the algorithm that hopes to be instance-optimal, the situation is the opposite. Even if the order is part of the instance, we still need to be best for all possible orderings, so we can think of the order as adversarially chosen. Internally, using randomization, it can always simulate a random order, and for bidirectional Dijkstra, it makes no asymptotic difference whether the order is random or adversarial, so we will always think of the order as adversarial. 

Since the order is not a natural part of the graph instance, it may be fairer to say that, as in the order-oblivious setting, the order is not known to the instance-smart algorithm. When it starts, it only knows the indices of $s$ and $t$, say, 1 and $n$. 
For example, in applications where vertices do not have comparable identifiers (e.g., webpages on the Internet, which are identified only by URLs), an adjacency list is more naturally viewed as an unordered set, and a neighbor query is usually implemented by returning a previously unseen neighbor chosen uniformly at random. 
Note that in the order-oblivious model, even though we think of the ordering as random,  we still require a good algorithm to be correct with probability at least $9/10$ for any possible ordering of any input instance.

We also emphasize that the order-oblivious model applies to the ordering of the vertex list. In particular, vertex indices cannot be considered part of the input, and accessing an unseen vertex by its index in the order-oblivious model should be viewed as sampling a vertex uniformly at random from the graph.''

It follows that any upper bound on the instance-optimality ratio established under the order-dependent model also holds under the order-oblivious model. Dually, any lower bound established under the order-oblivious model extends to the order-dependent model.

\subsection{Additional results}\label{sec:more-results}

Recall that \cite{sosa_bidirectional_HaeuplerHRTT25} proved that bidirectional Dijkstra is instance-optimal for positively weighted \emph{multigraphs}, both directed and undirected, in the order-dependent model, and hence also in the order-oblivious model. They left open the more typical case of \emph{simple graphs}. 

As discussed, we answer this question by showing that it remains instance-optimal for undirected graphs in the order-oblivious model, but not in any of the other models. In terms of $n$ and $m$, the instance-optimality ratio is $\Theta(\min\{\lceil m/n \rceil, n^{1/3}\})$ for undirected graphs in the order-dependent model and $\Theta(\lceil m/n \rceil)$ for directed graphs in both the order-dependent and order-oblivious models. These results imply that bidirectional Dijkstra is instance-optimal for $m=O(n)$.

In the following, we present several additional results that further characterize the instance-optimality ratio. All the results are summarized in~\Cref{tab:results}, and all of them are understood to be for positively weighted simple graphs, directed or undirected. 

\paragraph{Max-degree.}
If we have a bound $\Delta$ on the maximal degree, then for directed graphs, we show that the instance-optimality ratio of bidirectional Dijkstra is $\Theta(\lceil \Delta^2/n \rceil)$ in both the order-dependent and order-oblivious models. For undirected graphs, the bound in the order-dependent model is the cube root of this quantity, that is, $\Theta(\lceil \Delta^{2/3}/n^{1/3} \rceil)$. In all cases, bidirectional Dijkstra is instance-optimal when $\Delta=O(\sqrt{n})$. 
More generally, we show that in the order-dependent model, the instance-optimality ratio of bidirectional Dijkstra is $O(\hD+\lceil D^2/n\rceil)$ on directed graphs with at most $h$ vertices whose in-degree or out-degree is larger than $D$.

\paragraph{Instance-optimality on easy instances.}
Recall that our whole motivation for considering bidirectional Dijkstra is that it sometimes only has to consider a small fraction of the graph.
We are therefore particularly interested in ``easy'' cases with instance-complexity $\tau(G)=o(n)$, that is, where some instance-smart algorithm can verify the shortest path in $o(n)$ time. We claim that this setting is also very beneficial for bidirectional Dijkstra.

First, considering the opposite case where the instance-complexity is $\Omega(n)$, then the instance-optimality ratio of $O(\ceil{m/n})$ is obtained by any algorithm with worst-case complexity $O(m)$. Our instance-optimality ratio upper-bound of $O(\ceil{m/n})$ is thus only interesting when the instance-complexity is $o(n)$. 
However, for the lower-bound, with instance-complexity $\Omega(n)$, an instance-smart algorithm can record the degree of all $n$ vertices, and this can be very useful information, allowing
it to solve some instances in $O(n)$ time where bidirectional Dijkstra uses $\Theta(m)$ time. 

Considering the case of ``easy'' instances with instance-complexity $o(n)$ we show that bidirectional Dijkstra is instance-optimal for undirected graphs, also in the order-dependent model.
Thus, if an undirected graph is easy for any good algorithm, then it is equally easy for bidirectional Dijkstra. 

For easy instances, we also have improved an instance-optimality ratio for directed graphs. If a graph $G$ has instance-complexity $\tau\leq n/20$ and at most $h\geq 1$ vertices of degree bigger than $n-10\tau$, then the instance-optimality ratio is $O(h)$. 
This bound is much better than our $O(\ceil{\Delta^2/n})$ bound, for if $h\geq 1$, then $\Delta\geq n/2$. We also note that $h\leq 2m/n$, so $O(h)$ is no worse than $O(\lceil m/n \rceil)$, but typically much better because there are often only few vertices of degree close to $n$ in real-world graphs~\cite[Section 4.5]{networkscience}, e.g., $h=O(\log n)$ on all scale-free networks. Consequently, bidirectional Dijkstra achieves instance-optimality up to polylogarithmic factors on such networks if they admit an $o(n)$ solution.

\paragraph{Valid weights.}
Like in \cite{sosa_bidirectional_HaeuplerHRTT25}, our main focus in this paper is on graphs with positive real weights. Let $d(s,u)$ denote the distance from $s$ to $u$. 
What we really need is that for any vertices $u$ and $v$, if $d(s,u)+d(v,t)<d(s,t)$, then there should be a valid weight $\eps<\frac{d(s,t)-(d(s,u)+d(v,t))}2$ (which could be used on an alternative length-2 path from $u$ to $v$). This means that our results also apply to instances with positive floating point numbers as long as they have not used the full precision allowed in valid weights. Likewise they hold if we say that valid weights are all non-negative integers, but we only consider the instance-optimality ratio on instances with positive integer weights.

In \cite{sosa_bidirectional_HaeuplerHRTT25} they show that these restrictions are necessary for the instance-optimality of bidirectional Dijkstra.
If we consider instances with a smallest valid positive weight, e.g., if the valid weights are the positive integers, then the instance-optimality ratio is $\Theta(\Delta)$.
Note that for $\Delta=o(n)$, this is worse than our general $O(\ceil{\Delta^2/n})$ bound.
Also, they show a lower bound of $\Omega(n/\log n)$ for the instance-optimality ratio of any algorithm if we consider instances with weights zero.

These negative results from \cite{sosa_bidirectional_HaeuplerHRTT25} are all for multigraphs, but they are quite easily translated to simple graphs.

\paragraph{No degree-query.}
We will also consider what happens if there are no degree-queries. In that case it is most natural to say that edges are in a real linked list where you pay every time you visit the next edge. Thus you will only know the degree when you get to the last edge. This is, in fact, one of the most classic representation of a graph, and bidirectional Dijkstra works for this representation. 
\Cref{tab:results_nodegree} summarizes our results on the instance-optimality ratio when degree queries are unavailable.

Corresponding to the order-dependent model, for any given graph instance $G$, we consider the worst performance of bidirectional Dijkstra (or any other algorithm that we study instance-optimality for) over all possible orderings of the incidence lists. However, for the instance-complexity, we consider the best time over all possible orderings of the incidence list. This includes the ordering an order-dependent algorithm would have chosen, but this way, a good algorithm has to visit the whole list to determine the degree. 
Even in this strong model, we will show that bidirectional Dijkstra is instance-optimal on undirected graphs, but is not instance-optimal on directed graphs.

Corresponding to the order-oblivious model, we will consider the
expected performance where the incidence lists are randomly ordered. We show that, bidirectional Dijkstra is instance-optimal on all positively real weighted directed and undirected graphs. 
Note that this is the strongest model for lower bounds, and here we also show that the negative bounds with other weights still hold for simple graphs.
That is, if we have zero-weights, then no algorithm can have instance-optimality ratio below $\Omega(n/\log n)$, and if we study instances with a minimal positive weight, then the instance optimality ratio is $\Theta(\Delta)$. Here we cannot just use the construction from \cite{sosa_bidirectional_HaeuplerHRTT25}, for in the order-oblivious case, they used degrees to code vertex indices, but the issue is simple to fix.

\subsection{Paper Organization}

We start by revisiting the details and correctness of bidirectional Dijkstra in~\Cref{sec:preliminaries}.
Next, we set up the general framework for our analysis in~\Cref{sec:framework}.

In~\Cref{sec:order-oblivious-undirected}, as our first result, we present our main positive answer to the open problem from~\cite{sosa_bidirectional_HaeuplerHRTT25}; namely that bidirectional Dijkstra is instance-optimal on undirected graphs in the order-oblivious model.

Afterwards, in~\Cref{sec:order-dependent-directed,sec:order-oblivious-directed,sec:order-dependent-undirected}, we present our main negative answers, proving lower bounds on the instance-optimality ratio in the remaining settings.
We present these negative results in the order that we deem easiest to follow technically, covering order-dependent directed graphs in~\Cref{sec:order-dependent-directed}, order-oblivious directed graphs in~\Cref{sec:order-oblivious-directed}, and order-dependent undirected graphs in~\Cref{sec:order-dependent-undirected}.

Furthermore, in~\Cref{sec:more-positive}, we present additional positive results, giving upper bounds on the instance-optimality ratio that in many cases match our lower bounds, as well as strong upper bounds for easy instances.

Then in~\Cref{sec:nodegree}, we present our results when degree queries are unavailable. In~\Cref{sec:zero-unit-weight}, we briefly discuss the cases where zero-weight edges are allowed, or the input graph is promised to have a smallest valid positive edge weight. 

%%% Local Variables:
%%% mode: LaTeX
%%% TeX-master: "main"
%%% End:

%% file: preliminaries.tex
\section{Revisiting the Bidirectional Dijkstra's Algorithm}\label{sec:preliminaries}

Since we will investigate the instance-optimality of bidirectional Dijkstra in this paper, we revisit the algorithm in this section. 

As mentioned earlier, bidirectional Dijkstra was first described in~\cite[Section 17-3.B]{Dantzig1963}, albeit only briefly and without a formal correctness proof. More detailed descriptions of the algorithm, together with several variants and their analyses, were later presented in~\cite{Nicholson66, Pohl69}.

\subsection{Dijkstra's Algorithm}
At a high level, bidirectional Dijkstra runs the standard Dijkstra's algorithm from the source vertex $s$ (the \emph{forward search}) and the target vertex $t$ (the \emph{backward search}) simultaneously. 
We present the pseudocode for the forward search in~\Cref{alg:forward}. The backward search is analogous, except that it traverses incoming edges instead of outgoing edges.

\begin{algorithm}[H]
\DontPrintSemicolon
\small
\caption{Forward$(s)$}
\label{alg:forward}

\KwIn{graph $G(V,E)$, source vertex $s$ \algcomment{backward: target vertex $t$}}

$\hat d(s,\cdot) \gets +\infty;\; \hat d(s,s) \gets 0$
\hfill \textcolor{gray}{// backward: $\hat d(\cdot,t)\gets+\infty,\; \hat d(t,t)\gets0$}\;

Open $s$
\hfill \textcolor{gray}{// backward: open $t$}\;

\While{\textup{an open vertex exists}}{
    $u_s \gets \arg\min \hat d(s,x)$ among all open vertices $x$
    \hfill \textcolor{gray}{// backward: select $u_t$ minimizing $\hat d(x,t)$}\;
    Close $u_s$ and open all unclosed out-neighbors of $u_s$ \quad \textcolor{gray}{// backward: replace out- by in-neighbors}\;
    \For{\textup{each out-neighbor} $x$ \textup{of} $u_s$}{
            $\hat d(s,x) \gets \min\{\hat d(s,x),\, \hat d(s,u_s)+\ell(u_s,x)\}$  \quad \textcolor{gray}{// backward: relax $\hat d(x,t)$ via in-edge $(x,u_t)$}\;
    }
}
\end{algorithm}

To describe the forward search in more detail, Dijkstra’s algorithm maintains a tentative distance $\hat d(s,u)$ from the source vertex $s$ to each vertex $u\in V$. Initially, $\hat d(s,s)=0$, while $\hat d(s,u)=\infty$ for every $u\neq s$. The value $\hat d(s,u)$ can be interpreted as the length of the shortest path from $s$ to $u$ discovered so far and therefore serves as an upper bound on the true shortest-path distance $d(s,u)$.

Moreover, Dijkstra’s algorithm assigns each vertex one of three states: \emph{open}, \emph{closed}, or \emph{unlabeled}. Initially, the source vertex $s$ is open, and every other vertex is unlabeled. The algorithm repeatedly selects the open vertex $u$ with the smallest tentative distance $\hat d(s,u)$ and closes it. For every unlabeled out-neighbor $v$ of $u$, the algorithm changes its state to \emph{open}. It then \emph{relaxes} every outgoing edge $(u, v)$ for such $v$ by updating 
$\hat d(s,v) \gets \min\{\hat d(s,v),\,\hat d(s,u)+\ell(u, v)\}$,     
where $\ell(u,v)$ denote the weight of edge $(u,v)$. Updating $\hat d(s,v)$ implies that a shorter path from $s$ to $v$ through $u$ is found. The basic Dijkstra’s algorithm terminates when no open vertex exists, and regards $\hat d(s,u)$ as the true distance $d(s,u)$ for all vertices $u$. 

By the correctness of Dijkstra’s algorithm, we have the following lemma: 
\begin{lemma}[\cite{Nicholson66}]\label{lem:invariant_dijkstra}
In both the forward and backward searches, Dijkstra's algorithm maintains the following invariants: 
\begin{enumerate}[label=(\alph*), font=\normalfont]
\item Every closed vertex has exact tentative distance. That is, $\hat d(s,u)=d(s,u)$ for every forward-closed vertex $u$, and $\hat d(u,t)=d(u,t)$ for every backward-closed vertex $u$.
\item Vertices are closed in nondecreasing order of their true distances. That is, if $u$ is forward-closed before $v$, then $d(s,u)\le d(s,v)$; if $u$ is backward-closed before $v$, then $d(u,t)\le d(v,t)$. Equivalently, when a vertex $v$ is forward-closed (resp.\ backward-closed), every vertex $u$ satisfying $d(s,u)<d(s,v)$ (resp.\ $d(u,t)<d(v,t)$) has already been forward-closed (resp.\ backward-closed).
\end{enumerate}
\end{lemma}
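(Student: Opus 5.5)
We prove both invariants for the forward search by a single induction on the number of closing steps; the backward search is symmetric, obtained by reversing all edges. The induction hypothesis is that (a) and (b) hold for all vertices closed so far. We also use two auxiliary facts that hold throughout. First, $\hat d(s,x)\ge d(s,x)$ for every $x$, since every finite tentative value is the length of an actual $s$-$x$ walk produced by relaxations. Second, for every open vertex $x$, $\hat d(s,x)=\min_{u \text{ closed},\,(u,x)\in E}\{d(s,u)+\ell(u,x)\}$. This holds because every out-edge of a closed vertex is relaxed at the moment that vertex is closed, and at that moment its tentative distance is already exact by (a).

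For (a), consider the step that closes $v$, the open vertex minimizing $\hat d(s,\cdot)$. Take a shortest path $P$ from $s$ to $v$. This exists and is finite as soon as $v$ is open, and since weights are positive, shortest walks are simple paths. Let $y$ be the first vertex on $P$ that is not closed, and let $x$ be its predecessor on $P$, which is closed; $s$ is closed after the first step, and the first step itself is trivial. Then $y$ is open, because $x$ was closed and all out-neighbours of $x$ were opened. By the second auxiliary fact and the exactness of $x$, we get $\hat d(s,y)\le d(s,x)+\ell(x,y)=d(s,y)$. Since edge weights are positive and $y$ precedes $v$ on $P$, $d(s,y)\le d(s,v)$. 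The minimality of $v$ then gives $\hat d(s,v)\le\hat d(s,y)\le d(s,y)\le d(s,v)$, and together with $\hat d(s,v)\ge d(s,v)$ this yields equality.

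For (b), let $u$ be closed before $v$, and compare the moments they are closed. The key claim is that the sequence of minimum keys extracted is nondecreasing. When $u$ is closed with key $d(s,u)$, all remaining open keys are at least $d(s,u)$. Every new key created afterwards has the form $d(s,w)+\ell(w,x)$ for a later-closed $w$, and $d(s,w)\ge d(s,u)$ by induction, so the new key exceeds $d(s,u)$ because weights are positive. Hence $d(s,v)=\hat d(s,v)\ge d(s,u)$. The equivalent formulation follows by contraposition. Suppose $d(s,u)<d(s,v)$ and $u$ is not yet closed when $v$ is closed. If $u$ is closed later, this contradicts the ordering just proved. If $u$ is never closed, the path argument from (a), applied to the first non-closed vertex on a shortest $s$-$u$ path, produces an open vertex with key at most $d(s,u)<d(s,v)$. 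This contradicts the choice of $v$.

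The main obstacle is keeping the bookkeeping precise about which edges have been relaxed. In particular, we need the fact that an open vertex's tentative value reflects all closed in-neighbours. We also need that the "first non-closed vertex on $P$" is genuinely open rather than unlabeled; this follows because closing a vertex opens all of its unclosed out-neighbours. Everything else is the standard Dijkstra argument.
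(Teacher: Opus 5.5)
Your proof is correct. It is the standard correctness argument for Dijkstra's algorithm (the first non-closed vertex on a shortest path is open with key at most its true distance), which is exactly what the paper relies on when it cites \cite{Nicholson66} without giving its own proof. One point worth stating explicitly: in Algorithm~\ref{alg:bi-dj-new} the out-edges of the current $u_s$ are relaxed one at a time, interleaved with backward steps, so your second auxiliary fact holds at the moments a new vertex is selected, and that is the only place you use it.
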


\subsection{Bidirectional Dijkstra}
As mentioned above, bidirectional Dijkstra performs a forward search from the source vertex $s$ and a backward search from the target vertex $t$ simultaneously. Different variants of bidirectional Dijkstra differ in how they alternate between the forward and backward searches (e.g., after relaxing a single edge or after completing the relaxation of all outgoing edges of a vertex), as well as in their stopping conditions (e.g., terminating when the two searches meet or terminating when a distance threshold is satisfied). We refer the reader to~\cite{sosa_bidirectional_HaeuplerHRTT25, Dreyfus69} for a brief overview of these variants.

In particular, \cite{sosa_bidirectional_HaeuplerHRTT25} proposes a variant that alternates between relaxing one edge in the forward search and one edge in the backward search, and terminates once $\hat d(s,u_s)+\hat d(u_t,t)\ge \mu$, where $\mu$ denotes the length of the shortest $s$–$t$ path found so far. 

Initially, $\mu \gets \infty$. Whenever the algorithm relaxes an edge, it also updates $\mu$ if a shorter $s$–$t$ path is discovered. For example, when relaxing an edge $(u_s,x)$ in the forward search, the algorithm additionally updates 
\begin{align}\label{eqn:update-mu}
\mu \gets \min\{\mu, \; \hat{d}(s, u_s)+\ell(u_s,x)+\hat{d}(x, t)\}.     
\end{align}
This is because $\hat d(s,u_s)+\ell(u_s,x)+\hat d(x,t)$ is the length of the shortest $s$–$t$ path currently known that passes through the edge $(u_s,x)$. 
If this length is shorter than $\mu$, then a shorter $s$-$t$ path has been found, and we therefore update $\mu$ accordingly. 
Note that this length is $\infty$ if $\hat{d}(x,t)=\infty$.
An analogous update is performed whenever an edge is relaxed in the backward search.

When the termination criterion $\hat d(s,u_s)+\hat d(u_t,t)\ge \mu$ is satisfied, the algorithm can certify that no $s$–$t$ path shorter than $\mu$ exists. It therefore terminates and returns $\mu$. We present the pseudocode of the algorithm in~\Cref{alg:bi-dj-new}. Now each vertex maintains an independent state (open, closed, or unlabeled) in each of the forward and backward searches. 
\cite{sosa_bidirectional_HaeuplerHRTT25} proves that the algorithm is instance-optimal up to a constant factor when the class of valid inputs consists of all weighted multigraphs, whether directed or undirected.

\begin{algorithm}[t!]
\DontPrintSemicolon
\small{
\caption{Bidirectional Dijkstra's Algorithm}
\label{alg:bi-dj-new}
\KwIn{graph $G(V,E)$, source vertex $s$, target vertex $t$}
\KwOut{length of the shortest path from $s$ to $t$}
$\hat d(s,\cdot) \gets +\infty;\; \hat d(s,s) \gets 0$ \quad \textcolor{gray}{// initialize forward search from $s$}\;
$\hat d(\cdot, t) \gets +\infty;\; \hat d(t,t) \gets 0$ \quad \textcolor{gray}{// initialize backward search from $t$}\;
Mark $s$ as forward-open and $t$ as backward-open\;
$\mu \gets \hat{d}(s,t)$ $\quad$ \textcolor{gray}{// length of the shortest path that we have found so far ($\infty$ unless $s=t$)}\;
$u_s \gets s,\; u_t \gets t$ $\quad$ \textcolor{gray}{// vertices currently being explored in the two executions}\;
$i, j \gets \infty$\;
\While{$\hat d(s,u_s) + \hat d(u_t,t) < \mu$}{
    \textcolor{gray}{// relax one edge in forward search:}\;
    \While{$i\ge \outdeg(u_s)$}{
        \textbf{if} there is no forward-open vertex \textbf{then return} $\mu$\;
        $u_s \gets \arg\min \hat d(s,x)$ among all forward-open vertices $x$\;
        Mark $u_s$ as forward-closed\; 
        $i\gets 0$\;
        \textbf{if} $\hat d(s,u_s) + \hat d(u_t,t) \ge \mu$ \textbf{then return} $\mu$\;
    }
    $x\gets \outnbr(u_s, i)$, and mark $x$ as forward-open if it is not forward-closed\;
    $i\gets i+1$\;
        $\hat d(s,x) \gets \min\{\hat d(s,x), \; \hat d(s,u_s) + \ell(u_s, x)\}$ \;
    $\mu \gets \min\{ \mu, \; \hat d(s,x) + \hat d(x,t)\}$ \;
    
    \vspace{0.8em} \textcolor{gray}{// relax one edge in backward search: }\;
    \While{$j\ge \indeg(u_t)$}{
        \textbf{if} there is no backward-open vertex \textbf{then return} $\mu$\;
        $u_t \gets \arg\min \hat d(y,t)$ among all backward-open vertices $y$\;
        Mark $u_t$ as backward-closed \; 
        $j\gets 0$\;
        \textbf{if} $\hat d(s,u_s) + \hat d(u_t,t) \ge \mu$ \textbf{then return} $\mu$\;
    }
    $y\gets \innbr(u_t, j)$, and mark $y$ as backward-open if it is not backward-closed\;
    $j\gets j+1$\;
        $\hat d(y,t) \gets \min\{\hat d(y, t), \; \hat d(u_t, t) + \ell(y, u_t)\}$ \;
    $\mu \gets \min\{ \mu, \; \hat d(s,y) + \hat d(y,t)\}$ \;
}
\Return $\mu$
}
\end{algorithm}

\subsection{Fixed Minor Error From {\cite{sosa_bidirectional_HaeuplerHRTT25}}}
Note that the bidirectional Dijkstra algorithm presented in~\Cref{alg:bi-dj-new} differs slightly from the one in~\cite{sosa_bidirectional_HaeuplerHRTT25}. Specifically, the pseudocode in~\cite{sosa_bidirectional_HaeuplerHRTT25} updates $\mu$ upon relaxing an edge $(u,v)$ only when both endpoints $u$ and $v$ have been closed in the forward and backward searches, respectively (see Line 25 of Algorithm 2 in~\cite{sosa_bidirectional_HaeuplerHRTT25}). For example, when relaxing an edge $(u_s,x)$ in the forward search, the vertex $u_s$ is already forward-closed, and the algorithm in~\cite{sosa_bidirectional_HaeuplerHRTT25} updates $\mu$ according to~\Cref{eqn:update-mu} only if $x$ is also backward-closed. In contrast, our algorithm in~\Cref{alg:bi-dj-new} performs this update regardless of whether $x$ is backward-closed. 

As we show next, the extra condition from~\cite{sosa_bidirectional_HaeuplerHRTT25} should not be there as it can invalidate the claimed instance-optimality guarantee. To see this, consider the instance in~\Cref{fig:counterexample}, where edge labels denote weights. Here $u_2$ has $\Theta(n)$ outgoing edges and $v_2$ has $\Theta(n)$ incoming edges, while no path connects $u_2$ to $v_2$. The shortest $s$--$t$ path is $s\to u_1\to v_1\to t$, of length $3$. 

\begin{figure}[!h]
\centering
        \includegraphics[width=0.4\linewidth]{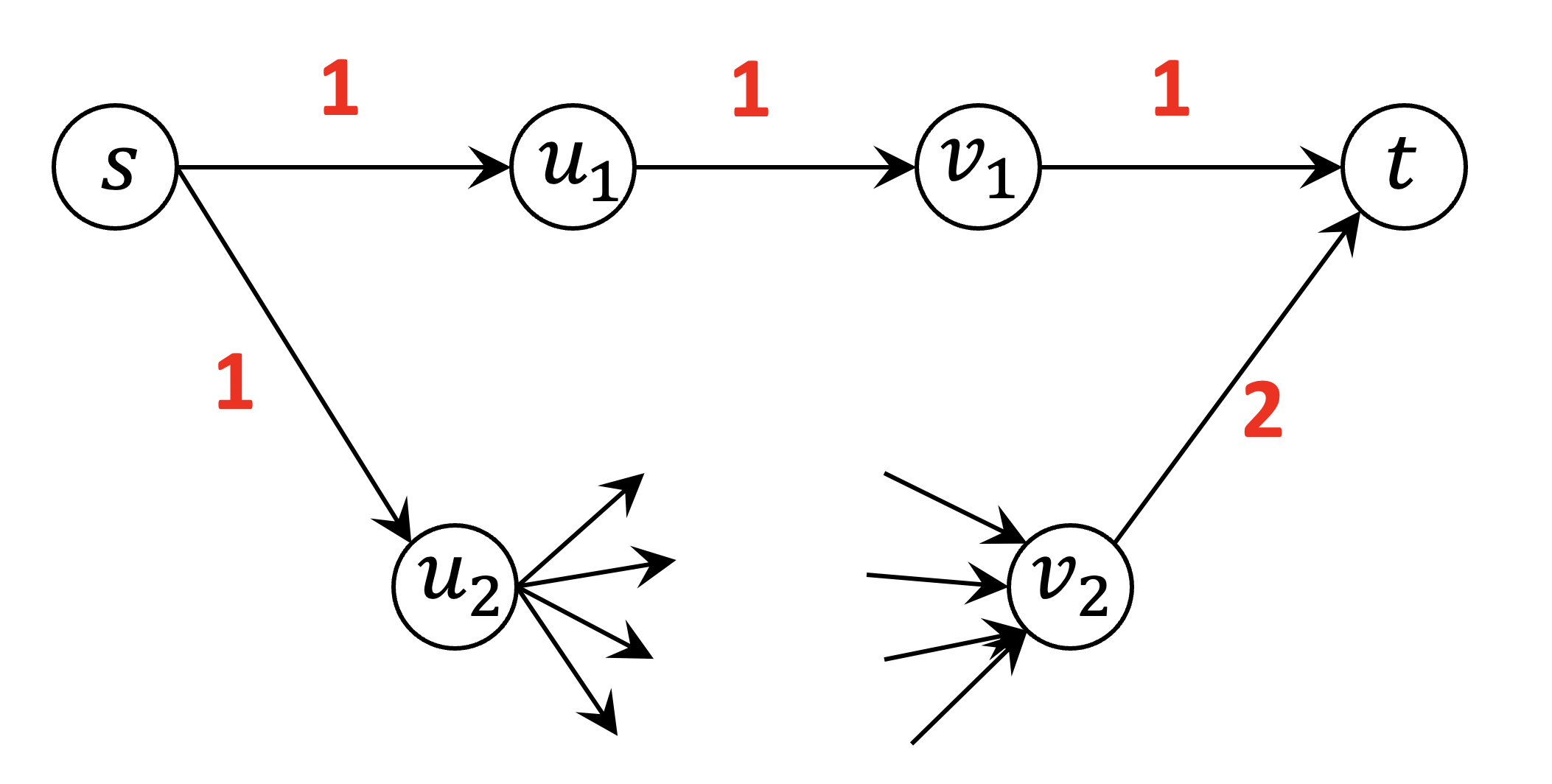}
        \vspace{-2mm}
        \caption{A counterexample arising from a minor error in~\cite{sosa_bidirectional_HaeuplerHRTT25}. The numbers on the edges denote their weights.}
        \label{fig:counterexample}
    \end{figure}

An instance-smart algorithm only needs to inspect the neighborhoods of $s$, $t$, and $v_1$ to certify that this path is optimal, and therefore terminates in $O(1)$ time. In contrast, under the update rule of~\cite{sosa_bidirectional_HaeuplerHRTT25}, after relaxing all outgoing edges of $s$, both $u_1$ and $u_2$ have tentative distance $1$, so either vertex may be selected as the next forward-closed vertex. If $u_2$ is selected, then when the edge $(u_1,v_1)$ is relaxed in the backward search, $\mu$ cannot be updated because $u_1$ has not yet been forward-closed. As a result, the algorithm must relax all $\Theta(n)$ outgoing edges of $u_2$ before $\mu$ can be updated and the stopping condition can be satisfied, resulting in $\Theta(n)$ time on this instance.

Technically speaking, the proof of the instance-optimality of bidirectional Dijkstra on weighted multigraphs (Theorem 5.1 in~\cite{sosa_bidirectional_HaeuplerHRTT25}) implicitly relies on the property that $d(s,u_1)+d(v_2,t)<d(s,t)$ holds for every edge $(u_1, v_1)$ relaxed by the forward search and every $(u_2, v_2)$ relaxed by backward search. However, this property does not hold under the update rule of~\cite{sosa_bidirectional_HaeuplerHRTT25}. We will discuss this with more details later in \Cref{lem:dist-st}. 

This minor error is resolved by removing the additional condition that requires $x$ to be backward-closed before $\mu$ is updated, as shown in~\Cref{alg:bi-dj-new}. Below we will first prove the correctness of our algorithm, and then show that it satisfies the above property. This fixes the minor error in the proof of~\cite{sosa_bidirectional_HaeuplerHRTT25} and establishes its claimed instance-optimality on weighted multigraphs.

\subsection{Correctness of Bidirectional Dijkstra}
The above-mentioned minor error introduces only unnecessary running time but does not affect the correctness of bidirectional Dijkstra. The correctness proof is essentially identical to that in~\cite{sosa_bidirectional_HaeuplerHRTT25}. For completeness, we briefly present it below.

Recall that $d(s,t)$ denotes the true shortest-path distance from $s$ to $t$. We assume without loss of generality that $0<d(s,t)<\infty$. Otherwise, either $s=t$ or no path exists from $s$ to $t$, in which case the correctness of \Cref{alg:bi-dj-new} is immediate.

We aim to show that \Cref{alg:bi-dj-new} returns $\mu=d(s,t)$. 
At the termination of \Cref{alg:bi-dj-new}, we have $\hat{d}(s, u_s)+\hat{d}(u_t,t)\ge \mu$. Since $u_s$ and $u_t$ are forward- and backward-closed, respectively, ~\Cref{lem:invariant_dijkstra}(a) gives
\begin{align}\label{eqn:usut}
\hat{d}(s, u_s)=d(s, u_s), \; \hat{d}(u_t, t)=d(u_t,t), \; \text{ and  }\; d(s,u_s)+d(u_t, t)\ge \mu. 
\end{align} 

We assume for contradiction that $d(s,t)<\mu$, and fix a shortest $s$--$t$ path $P$. 
Let $u$ be the last vertex on $P$ with $d(s,u)<d(s,u_s)$, and let $v$ be the successor of $u$ on $P$. By \Cref{lem:invariant_dijkstra}(b), $u$ is forward-closed. Moreover, $d(s,u_s)\le d(s,v)$, which together with~\Cref{eqn:usut} implies
\begin{align}\label{eqn:mu-upper}
\mu-d(u_t, t)\le d(s,u_s) \le d(s,v). 
\end{align}
Since $v$ lies on the shortest path $P$, $d(s,v)=d(s,t)-d(v,t)<\mu-d(v,t)$, and thus~\Cref{eqn:mu-upper} yields $d(v,t)<d(u_t, t)$. By~\Cref{lem:invariant_dijkstra}(b), $v$ is backward-closed.

Since $u$ and $v$ are forward-closed and backward-closed respectively before termination, the edge $(u,v)$ has been relaxed by both the forward search (when $u$ is closed) and by the backward search (when $v$ is closed), both before termination. At the later of these two events, both $\hat d(s,u)=d(s,u)$ and $\hat d(v,t)=d(v,t)$ hold, so the update rule for $\mu$ in~\Cref{alg:bi-dj-new} yields $\mu\le d(s,u)+\ell(u,v)+d(v,t)=d(s,t)<\mu$, a contradiction.

\subsection{Properties of Bidirectional Dijkstra}\label{subsec:property_bj}
We now summarize some known properties of bidirectional Dijkstra that will be used later. 

\paragraph{Visiting vertices and edges.}
We say that an edge $(u,v)$ is \emph{visited} by an algorithm if, during its execution, either $\outnbr$ or $\innbr$ is invoked at $u$ or $v$ and returns the other. We say that a vertex $v$ is \emph{visited} by an algorithm if, during its execution, either $\indeg(v)$ or $\outdeg(v)$ is invoked, or if at least one incident edge of $v$ is visited. Since each query causes $\Theta(1)$ vertices and edges to be visited, the total number of vertices and edges visited by any algorithm is asymptotically equal to the total number of query operations the algorithm performs.

\paragraph{Edge sets $\boldsymbol{E_s}$ and $\boldsymbol{E_t}$ of bidirectional Dijkstra.}
We now define $E_s$ and $E_t$ to be the sets of edges visited by the forward and backward searches, respectively, in the bidirectional Dijkstra's algorithm shown in~\Cref{alg:bi-dj-new}.
For undirected graphs, we define $E_s$ to be the set of ordered pairs of vertices $(u,v)$ such that the edge $(u,v)$ is visited by the forward search at $u$, and define $E_t$ symmetrically. We have

\begin{lemma}\label{lem:time_bj}
$|E_t|\leq |E_s|\leq |E_t|+1$. 
The number of queries performed by~\Cref{alg:bi-dj-new}, as well as the total number of visited vertices and edges, is $\Theta(|E_s|)=\Theta(|E_t|)$. 
\end{lemma}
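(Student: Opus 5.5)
The plan is to read both claims directly off the control flow of \Cref{alg:bi-dj-new}. The main loop is strictly alternating: each iteration of the outer \texttt{while} loop first (possibly) relaxes one forward edge, then (possibly) relaxes one backward edge. An edge is added to $E_s$ exactly when the line $x\gets\outnbr(u_s,i)$ executes. An edge is added to $E_t$ exactly when $y\gets\innbr(u_t,j)$ executes. In the undirected case the ordered pairs are counted per forward/backward visit, so the same counting applies.

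\textbf{Step 1: $|E_t|\le|E_s|\le|E_t|+1$.} I will show that along any execution, every forward relaxation in iteration $k$ is preceded by exactly $k-1$ backward relaxations and followed (if the algorithm continues) by the $k$-th backward relaxation. The only ways to leave an iteration early are the \texttt{return} statements. These are located either in the forward inner loop, before the forward relaxation, or in the backward inner loop, between the forward and backward relaxations. The outer loop test also sits between iterations. Hence at termination the number of forward relaxations equals the number of backward relaxations, or exceeds it by one. Each relaxation query returns a distinct edge slot (indices $i$, $j$ increase and reset only when switching vertices, and each vertex is closed at most once), so the relaxation counts equal $|E_s|$ and $|E_t|$.

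\textbf{Step 2: queries are $\Theta(|E_s|)$.} The lower bound is immediate: each element of $E_s$ costs one $\outnbr$ query. For the upper bound, the other queries are $\outdeg$/$\indeg$ calls in the inner-loop tests. Each inner-loop iteration closes one vertex. I charge the degree queries of a closed vertex $u_s$ to the first relaxation made from it, or, if $\outdeg(u_s)=0$, there are no relaxations to charge to. This zero-degree case is the main obstacle: a chain of closed vertices with out-degree $0$ could cost many $\outdeg$ queries with no relaxations. To handle it, I will charge each such vertex to the edge that opened it. Every forward-open vertex other than $s$ was opened by a relaxation in $E_s$, and each relaxation opens at most one vertex. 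So the number of forward closures is at most $|E_s|+1$, and similarly the number of backward closures is at most $|E_t|+1$. With $O(1)$ degree queries per closure and per relaxation, the total is $O(|E_s|+|E_t|+1)=O(|E_s|)$ by Step 1. The additive constant is absorbed, noting $|E_s|\ge 1$ unless $\outdeg(s)=0$, where the claim holds trivially up to $O(1)$.

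Finally, by the observation in \Cref{subsec:property_bj}, each query visits $\Theta(1)$ vertices and edges, so the number of visited vertices and edges is also $\Theta(|E_s|)=\Theta(|E_t|)$.
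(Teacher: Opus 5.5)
Your proposal is correct and follows essentially the same route as the paper. The strict forward/backward alternation gives $|E_t|\le|E_s|\le|E_t|+1$, and the fact that every vertex other than $s$ and $t$ is opened by a relaxed edge bounds all remaining vertex and degree-query cost by $O(|E_s|+1)$. You count queries first and then pass to visited items, whereas the paper goes the other way, and your handling of repeated degree calls and zero-out-degree closures is slightly more careful. It is nonetheless the same argument.
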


\begin{proof}
\Cref{alg:bi-dj-new} alternates edge relaxations between the forward and backward searches, starting with the forward search. Therefore, $|E_s|-|E_t|$ is initially $0$ and alternates between $1$ (after a forward relaxation) and $0$ (after a backward relaxation) throughout the execution. 
Consequently, $|E_s|$ and $|E_t|$ differ by at most one. 

Consider the number of vertex and edge queries visited by~\Cref{alg:bi-dj-new}. A vertex will be considered visited the first time it is marked open, and (except for $s$ and $t$) a vertex is marked as open only after one of its incident edges has been relaxed. Hence, the number of visited vertices is asymptotically no larger than the number of relaxed edges. Therefore, the total number of visited vertices and edges is  $\Theta(|E_s|+|E_t|)=\Theta(|E_s|)=\Theta(|E_t|)$. Since by definition each visited vertex or edge incurs only $O(1)$ query operations, the total number of queries performed by \Cref{alg:bi-dj-new} is also $\Theta(|E_s|)=\Theta(|E_t|)$. 
\end{proof}

\paragraph{Vertex sets $\boldsymbol{V_s}$ and $\boldsymbol{V_t}$ of bidirectional Dijkstra.} Moreover, we define $V_s$ to be the set of forward-closed vertices from which at least one outgoing edge has been visited by the forward search, and $V_t$ to be the set of backward-closed vertices into which at least one incoming edge has been visited by the backward search. Note that the last vertex selected as $u_s$ or $u_t$, which triggers the termination criterion (i.e., Line 14 or Line 25 of~\Cref{alg:bi-dj-new}), is not included in $V_s$ or $V_t$, since the termination criterion is checked before any edge incident to the newly selected vertex is visited. 
By definition, for any edge $(u,v)\in E_s$, we have $u\in V_s$; analogously, for any edge $(u,v)\in E_t$, we have $v\in V_t$. 
\begin{lemma}
\label{lem:dist-st}
    For any $u\in V_s$ and $v\in V_t$, $d(s,u)+d(v,t)<d(s,t)$.
\end{lemma}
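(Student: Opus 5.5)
Fix $u\in V_s$ and $v\in V_t$; let $T_u$ be the step at which $u$ was selected as $u_s$ (forward-closed) and $T_v$ the step at which $v$ was selected as $u_t$. Since $u\in V_s$, at least one out-edge of $u$ was relaxed after $T_u$. In particular, the termination check right after selecting $u$ failed, so at that moment $d(s,u)+\hat d(u_t,t)<\mu$, where $u_t$ is the backward vertex current at time $T_u$. The same holds for $v$ at time $T_v$ with the current forward vertex. By \Cref{lem:invariant_dijkstra}(a), the tentative distances of the current vertices are exact. Also, $\mu$ never increases, and by correctness (or just because $\mu$ is always the length of some actual $s$--$t$ path) we have $\mu\ge d(s,t)$ throughout. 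It remains to compare $\mu$ at these times with $d(s,t)$.

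Assume without loss of generality that $T_u\le T_v$; the other case is symmetric. At time $T_v$ the current forward vertex $u'$ was closed at or after $T_u$, so $d(s,u')\ge d(s,u)$ by \Cref{lem:invariant_dijkstra}(b). The failed check at $T_v$ gives $d(s,u)+d(v,t)\le d(s,u')+d(v,t)<\mu_{T_v}$. The case $T_v<T_u$ is symmetric, using the check at $T_u$ and the fact that the backward vertex current then has $d(\cdot,t)\ge d(v,t)$. So in either case we obtain $d(s,u)+d(v,t)<\mu_T$ at some time $T$ before termination. This only gives a bound against $\mu_T$, which may exceed $d(s,t)$. \textbf{This is the main obstacle:} strengthening the bound to $d(s,t)$ is exactly where the update rule on $\mu$ matters.

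To close the gap, take a shortest $s$--$t$ path $P$ and suppose for contradiction that $d(s,u)+d(v,t)\ge d(s,t)$. Then $d(s,t)\le d(s,u)+d(v,t)<\mu_T$ with $\mu_T>d(s,t)$ at time $T=\max(T_u,T_v)$. Write $a=d(s,u_s)$ and $b=d(u_t,t)$ for the current vertices at time $T$; we have $a\ge d(s,u)$ and $b\ge d(v,t)$, hence $a+b\ge d(s,t)$. Now repeat the correctness argument. Let $x$ be the last vertex on $P$ with $d(s,x)<a$ and $y$ its successor. Then $x$ is forward-closed, and $d(y,t)=d(s,t)-d(s,y)\le d(s,t)-a\le b$.

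If $d(y,t)<b$, then $y$ is backward-closed, so the edge $(x,y)$ was relaxed by both searches before $T$. By the unconditional update rule (\Cref{eqn:update-mu}), $\mu_T\le d(s,t)$, a contradiction.

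The boundary case $d(y,t)=b$ with $d(s,y)=a$ needs care. Here I would instead use the strict inequality $d(s,u)+d(v,t)<\mu_T$ more carefully: I would pick the time $T$ as the later of the two selections, and note that the vertex selected at time $T$ is $u$ or $v$ itself. This makes one of $a,b$ exact, say $a=d(s,u)$, so the other satisfies $b\ge d(v,t)$. Then I would walk along $P$ with a tie-breaking choice of $x$ as the last vertex with $d(s,x)<d(s,t)-d(v,t)$, using the positivity of edge weights to make the inequalities on both sides strict. I expect this tie handling to be the delicate point, and the fix of updating $\mu$ regardless of backward-closedness is precisely what makes the relaxation of the crossing edge $(x,y)$ already count.
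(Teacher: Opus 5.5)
Your reduction works until the tie case. The failed stopping checks at $T_u$ and $T_v$ do give $d(s,u)+d(v,t)<\mu_T$ for $T=\max(T_u,T_v)$. When $d(y,t)<b$, the edge $(x,y)$ has indeed been relaxed by both searches, so $\mu_T\le d(s,t)$.

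The gap is the case $d(s,y)=a$, $d(y,t)=b$, $a+b=d(s,t)$, which you leave open, and your proposed repair does not reach it. There, $a\ge d(s,u)$, $b\ge d(v,t)$ and your hypothesis $d(s,u)+d(v,t)\ge d(s,t)$ force $a=d(s,u)$ and $b=d(v,t)$. So your new threshold $d(s,t)-d(v,t)$ equals $a$ and selects the same $x$ and the same tied $y$. Knowing which of $u,v$ was selected at time $T$ adds nothing, and positivity of $\ell(x,y)$ only re-proves $d(s,x)<d(s,y)$. Since $y$ is tied with the current backward vertex, it need not have been backward-closed and processed, so $(x,y)$ need not have been backward-relaxed.

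This is not a technicality. Your main case goes through verbatim under the old update rule of \cite{sosa_bidirectional_HaeuplerHRTT25}, because both endpoints of $(x,y)$ are closed in their respective searches. Yet, as discussed around \Cref{fig:counterexample}, the lemma fails under that rule. So the whole content of the lemma sits in the case you did not handle, and your closing remark puts the modified rule in the wrong place.

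The missing idea is to apply positivity to the \emph{next} edge of $P$. If $y\neq t$, its successor $z$ on $P$ has $d(z,t)<d(y,t)=b$. By \Cref{lem:invariant_dijkstra}(b), $z$ was backward-closed before the current backward vertex, so all its in-edges, in particular $(y,z)$, were relaxed, giving $\hat d(y,t)=d(y,t)$. The forward relaxation of $(x,y)$ gave $\hat d(s,y)=d(s,y)$. Because \Cref{alg:bi-dj-new} updates $\mu$ with $\hat d(s,y)+\hat d(y,t)$ whenever either value changes, whether or not $y$ is closed, you get $\mu_T\le d(s,y)+d(y,t)=d(s,t)$, a contradiction. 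If $y=t$, relaxing $(x,t)$ already gives $\mu_T\le \hat d(s,t)+\hat d(t,t)=d(s,t)$. The same argument with $y=s$ covers the degenerate case $a=0$, where your $x$ does not exist.

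With this patch your route is a valid alternative to the paper's. The paper instead looks at the first moment the stopping condition $\hat d(s,u_s)+\hat d(u_t,t)\ge\mu$ holds. Using $\ell>0$ and the modified rule, it shows this cannot be triggered by a $\mu$-update, so it is triggered by selecting a new vertex. The previous pair then satisfies the strict inequality against the final $\mu$, which equals $d(s,t)$ by correctness. The monotone closing order transfers this bound to all of $V_s\times V_t$. Your patched argument avoids appealing to correctness and establishes the bound directly at time $\max(T_u,T_v)$.
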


\begin{proof}
    Let $X$ denote the event that $\hat{d}(s,u_s)+\hat{d}(u_t,t)\ge \mu$ happens at the first time.
    Consider the moment that $X$ happens.
    Clearly, $X$ can only happen when we update $\mu$ (in Lines 18 or 29), or when we update $u_s$ or $u_t$ (in Lines 11 or 23). 
    By symmetry, we assume that it happens in either Line 18 or Line 11.
    
    In the latter case, the algorithm will terminate immediately before relaxing any other edges, so $u_s\notin V_s$.
    Let $u_s'$ denote the old $u_s$ right before $X$ happens in Line 11, then $d(s,u_s')+d(u_t,t)<\mu$.
    By \cref{lem:invariant_dijkstra}, $d(s,u)\le d(s,u_s')$ for any $u\in V_s$, and $d(v,t)\le d(u_t,t)$ for any $v\in V_t$.
    By the correctness of bidirectional Dijkstra, $\mu=d(s,t)$.
    Therefore $d(s,u)+d(v,t)<d(s,t)$ for any $u\in V_s$ and $v\in V_t$.

    It remains to prove that $X$ cannot happen in Line 18. Assume for contradiction that it happens in Line 18. When it happens, $\mu$ must be strictly decreased, so 
    $\mu=\hat{d}(s,x)+\hat{d}(x,t) \le \hat{d}(s,u_s)+\hat{d}(u_t,t)$. 
    We also have $\hat{d}(s,x)=\hat{d}(s,u_s)+\ell(u_s,x)$.
    This is because, 
    for any vertex $w\in V$, whenever we update $\hat{d}(s,w)$ or $\hat{d}(w,t)$, we will immediately update $\mu\gets\min\{\mu,\hat{d}(s,w)+\hat{d}(w,t)\}$.\footnote{The algorithm in \cite{sosa_bidirectional_HaeuplerHRTT25} does not satisfy this property due to the additional condition on updating $\mu$.}
    Therefore, we have
    \begin{align}\label{eqn:compare_mu}
    \mu=\hat{d}(s,u_s)+\ell(u_s,x)+\hat{d}(x,t) \le \hat{d}(s,u_s)+\hat{d}(u_t,t). 
    \end{align}
    Since $\ell(u_s,x)>0$,\footnote{We remark that this is the only place where this proof does not extend to graphs with zero-weight edges.} we also have $\hat{d}(x,t) \le \hat{d}(u_t,t)-\ell(u_s,x) < \hat{d}(u_t,t)$. 

    At this point, $u_t$ is backward-closed, so by~\Cref{lem:invariant_dijkstra}(b), $x$ was already backward-closed and all incoming edges backward-relaxed.
    In particular, $\hat{d}(u_s,t) \leq \ell(u_s,x)+\hat{d}(x,t)$.
    Again, since the algorithm updates $\mu$ whenever either $\hat{d}(s,w)$ or $\hat{d}(w,t)$ is updated for any vertex $w\in V$, it follows that before Line 18 we already had
    \(
        \mu
        \le \hat{d}(s,u_s)+\hat{d}(u_s,t)
        \le \hat{d}(s,u_s)+\ell(u_s,x)+\hat{d}(x,t)
        = \hat{d}(s,x)+\hat{d}(x,t).
    \)
    Thus Line 18 could not have strictly decreased $\mu$, a contradiction.
\end{proof}

%% file: technicaloverview.tex
\section{Setting Up the Analysis Framework}\label{sec:framework}

In this section, we set up the framework for establishing our upper and lower bounds on the instance-optimality ratio as shown in~\Cref{tab:results}. 
Recall from~\Cref{eqn:ratio} that the instance-optimality ratio of an algorithm $A$ is defined as 
\begin{align}\label{eqn:ratio_again}
\ratio_A = \sup_{I\in \cI}\{\cost_A(I)/\tau(I)\}, 
\end{align}
where $\cost_A(I)$ denotes the expected number of queries made by $A$ on instance $I$, and $\tau=\tau(I)$ is the \emph{instance complexity} of instance $I$, defined as the minimum value of $\cost_A(I)$ over all good algorithms $A$.

\subsection{Lower Bounds}\label{subsec:high-level_lower}
To establish an $\Omega(\ratio)$ lower bound on the instance-optimality ratio of bidirectional Dijkstra, we construct counterexamples exhibiting instances on which an instance-smart algorithm solves the shortest-path problem faster than bidirectional Dijkstra by at least a factor of $\ratio$.

To establish an $\Omega(\ratio)$ lower bound on the instance-optimality ratio of all algorithms, we show that for \emph{every} good algorithm $A$, there exists an instance on which an instance-smart algorithm performs significantly faster than $A$ by at least a factor of $\ratio$.

\subsection{Upper Bounds}
In~\Cref{lem:time_bj}, we established a tight bound on the expected number of queries made by bidirectional Dijkstra. In the following sections, we will establish lower bounds on the \emph{instance complexity}, thereby obtaining upper bounds on the instance-optimality ratio of bidirectional Dijkstra.

To construct the lower bound on the instance complexity, we for now fix a valid graph instance $G$ and an arbitrary good algorithm $A$. The algorithm $A$ may be randomized, with its randomness determined by a random seed $R$. But once the random seed $R$ is fixed, the execution of $A$ on $G$ is completely determined by the answers it receives from the graph oracle. Let $A_R$ denote the execution of $A$ with the random seed $R$. If there exist two graphs $G$ and $G’$ on which $A_R$ receives the same answers to all its queries, then $A_R$ produces exactly the same output on $G$ and $G'$.

The main idea of our proof is to construct another graph $G'$ whose structure differs from that of $G$ only on a small fraction of the graph (typically just a constant number of vertices and edges), which $A$ avoids. At the same time, $G'$ is carefully constructed so that the shortest-path distance from $s$ to $t$ differs between $G$ and $G'$.

Specifically, we use the expected number of vertices and edges visited by bidirectional Dijkstra as the basis of our analysis. Recall from~\cref{subsec:property_bj} that each query operation visits only $O(1)$ vertices and edges. Therefore, the number of vertices and edges visited by bidirectional Dijkstra is asymptotically equivalent to the number of queries it performs, namely $\Theta(|E_s|)=\Theta(|E_t|)$ as shown in~\Cref{lem:time_bj}. 

We assume for contradiction that $A$, as a good algorithm, visits significantly fewer vertices and edges than bidirectional Dijkstra. We then prove that, with constant probability, $A$ does not visit the region where $G'$ differs from $G$. Here, the probability is taken over the choice of the random seed used by $A$. 
Consequently, with constant probability, all query answers received by $A$ are identical on $G$ and $G'$, implying that $A$ will incorrectly produce the same shortest-path output on $G$ and $G'$ with constant probability. This contradicts the assumption on the goodness of $A$. 

To formalize the above idea, we define $p_{A,G}(u,v)$ as the probability that $A$ visits edge $(u,v)$ on $G$, and $p_{A,G}(v)$ as the probability that $A$ visits vertex $v$ on $G$. Moreover, for any $q\in [0,1]$, we say an edge $(u,v)$ is \emph{$q$-bad to $A$} (or simply \emph{$q$-bad} or \emph{bad} if the context is clear; similarly below) if $p_{A}(u,v)< q$; otherwise it is \emph{$q$-good}. Analogously, we say a vertex $v$ is \emph{$q$-bad} if $p_{A}(v)< q$; otherwise it is \emph{$q$-good}. For ease of presentation, we say that a vertex or an edge is \emph{$O(1)$-bad} if it is $q$-bad for some constant $q\in [0,1]$.

Note that in the order-dependent model, the probability is taken over the random seed used by $A$. In the order-oblivious model, the probability is additionally taken over the ordering of the vertices returned by the graph oracle in response to the neighbor queries $\innbr$ and $\outnbr$, where the ordering is chosen uniformly at random. We therefore have the following lemma. 
\begin{lemma}\label{lem:oblivious-edge-prob}\label{lem:oblivious-edge}
In the order-oblivious model, if an edge $(u,v)$ is $p$-good, then either every out-edge of $u$ is $q/4$-good or every in-edge of $v$ is $q/4$-good. 
\end{lemma}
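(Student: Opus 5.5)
Read with $q$ in place of $p$, the statement says: if $p_A(u,v)\ge q$, then one endpoint's whole list (out-list of $u$ or in-list of $v$) consists of $q/4$-good edges. The plan is to split the visit probability of $(u,v)$ by which list revealed it, pick a list carrying at least $q/2$ of the mass, and use the uniform random order of that list to transfer probability from $v$ to any other entry, losing a further factor of $2$. Fix $G$ and $A$. Throughout, the probability is over the seed $R$ and the uniformly random orderings of all lists. We may condition on $R$ and on the orderings of every list except the one we are working with, and argue pointwise.

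\textbf{Step 1 (splitting).} By the definition of a visited edge, $(u,v)$ is visited only if some query $\outnbr(u,i)$ returns $v$, or some query $\innbr(v,j)$ returns $u$. Call these events $E_{\mathrm{out}}$ and $E_{\mathrm{in}}$. A union bound gives
\[
q\le p_A(u,v)\le \Pr[E_{\mathrm{out}}]+\Pr[E_{\mathrm{in}}],
\]
so one of them is at least $q/2$. By symmetry assume $\Pr[E_{\mathrm{out}}]\ge q/2$. We will show every out-edge $(u,w)$ of $u$ is $q/4$-good.

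\textbf{Step 2 (swap coupling).} Fix an out-neighbor $w\neq v$ of $u$. Let $\sigma$ exchange the positions of $v$ and $w$ in $u$'s out-list. This is a measure-preserving bijection on orderings of that list, and it changes no other list, degree or weight. Couple the run on ordering $\pi$ with the run on $\sigma\pi$. The two runs receive identical answers until the first query $\outnbr(u,i)$ whose index $i$ is one of the two positions holding $v$ or $w$. At that query, one run sees $v$ and the other sees $w$.

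Two consequences follow. First, the event $F$ that such a query ever happens is the same in both runs. Second, the events $F_v$ and $F_w$, that $v$ (respectively $w$) is the first of the two returned, are exchanged by $\sigma$. Hence $\Pr[F_v]=\Pr[F_w]=\Pr[F]/2$.

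\textbf{Step 3 (conclusion).} Since $E_{\mathrm{out}}\subseteq F$, we get
\[
\Pr[E_{\mathrm{out}}]\le\Pr[F]=2\Pr[F_w].
\]
On $F_w$, some out-query at $u$ returns $w$, so the edge $(u,w)$ is visited. Therefore $p_A(u,w)\ge \Pr[F_w]\ge \Pr[E_{\mathrm{out}}]/2\ge q/4$. The edge $(u,v)$ itself is trivially $q/4$-good. The symmetric case $\Pr[E_{\mathrm{in}}]\ge q/2$ gives that every in-edge of $v$ is $q/4$-good. In the undirected case the same argument applies to the adjacency lists of $u$ and $v$.

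\textbf{Main obstacle.} The delicate point is that the executions diverge after $v$ or $w$ is revealed, because the algorithm may then behave very differently. So a naive claim that $v$ and $w$ are ``equally likely to be visited'' is false. The coupling sidesteps this by comparing only up to the first hit of either position, where the two runs are still identical. That is where the extra factor of $2$ (on top of the $2$ from Step 1) comes from. It also has to be checked that nothing else an oracle reveals depends on the order within $u$'s out-list. This holds because degree queries return only lengths, and vertex indices are independently random in the order-oblivious model.
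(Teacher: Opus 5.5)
Your proof is correct and follows essentially the same route as the paper. Both arguments split the visit probability of $(u,v)$ between $u$'s out-list and $v$'s in-list, then swap $v$ and $w$ in $u$'s out-list (the runs agree up to the first query hitting either position) to lose one more factor of $2$; the paper phrases this last step as $\Pr[e'\in E_Q]\ge\Pr[e\in E_Q\wedge e'\notin E_Q]$ instead of through your first-hit events $F_v,F_w$, which is only a repackaging of the same coupling.
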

\begin{proof}
    The edge $e=(u,v)$ is queried either as an out-edge of $u$ or as an in-edge of $v$. Suppose without loss of generality that $e$ is queried as an out-edge of $u$ with probability at least $q/2$.

    Let $\eq$ be the set of edges queried as out-edges of $u$, and fix another out-edge $e'\neq e$ of $u$.
    Fix the randomness of the algorithm and some order of the adjacency lists.
    If $e \in \eq$ but $e' \notin \eq$, then swapping $e$ and $e'$ in the out-going adjacency list of $u$ results in $e'$ being queried, since the algorithm behaves identically until this query.
    Since all orderings are equally likely, averaging over all randomness gives $\P{e' \in \eq} \geq \P{e \in \eq \wedge e' \notin \eq}$.
    Therefore,
    \begin{align*}
      \frac q2
      &\leq \P{e \in \eq}\\
      &= \P{e \in \eq \wedge e' \in \eq} + \P{e \in \eq \wedge e' \notin \eq}\\
      &\leq 2\P{e' \in \eq}.
    \end{align*}
    Since $e'$ was arbitrary, every out-edge of $u$ is queried with probability at least $q/4$.
\end{proof}

\begin{lemma}\label{lem:oblivious-edge-bad}
In the order-oblivious model, an edge $(u,v)$ is $4q$-bad if $u$ has at least one $q$-bad out-edge and $v$ has at least one $q$-bad in-edge.
\end{lemma}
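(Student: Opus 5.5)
This is essentially the contrapositive of \Cref{lem:oblivious-edge}, with the constants adjusted. Suppose for contradiction that $(u,v)$ is $4q$-good, i.e., $p_A(u,v)\ge 4q$. Apply \Cref{lem:oblivious-edge} with threshold $4q$ in place of $q$. It yields that either every out-edge of $u$ is $q$-good, or every in-edge of $v$ is $q$-good; here $(4q)/4=q$.

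In the first case this contradicts the hypothesis that $u$ has a $q$-bad out-edge. In the second case it contradicts the hypothesis that $v$ has a $q$-bad in-edge. Hence $p_A(u,v)<4q$, so $(u,v)$ is $4q$-bad.

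The only point needing care is that \Cref{lem:oblivious-edge} is read with its goodness threshold equal to the parameter $q$. Its statement writes ``$p$-good'', but the proof uses $q$ throughout, so it applies to an arbitrary threshold, and substituting $4q$ is legitimate. If $4q>1$, the conclusion $p_A(u,v)<4q$ is trivially true, so we may assume $4q\le 1$.

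I would also check that the swapping argument in the proof of \Cref{lem:oblivious-edge} gives the bound in the right form. It concludes: if $e$ is queried as an out-edge of $u$ with probability at least $2q$, then every out-edge $e'$ of $u$ is queried, and hence visited, with probability at least $q$. The symmetric statement holds for in-edges of $v$. Since visiting $(u,v)$ means it was returned either from $u$'s out-list or from $v$'s in-list, one of the two events has probability at least $2q$, and the conclusion follows. No real obstacle is expected.
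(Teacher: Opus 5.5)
Your proof is correct and matches the paper's: assume $(u,v)$ is $4q$-good, apply \Cref{lem:oblivious-edge} with threshold $4q$ to conclude that every out-edge of $u$ or every in-edge of $v$ is $q$-good, and this contradicts the hypotheses. Your side remarks are accurate, and the paper's proof implicitly relies on them too: the ``$p$-good'' in that lemma's statement is really its threshold $q$, and the case $4q>1$ is trivial.
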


\begin{proof}
\Cref{lem:oblivious-edge-bad} follows directly from~\Cref{lem:oblivious-edge}. Indeed, suppose for contradiction that edge $(u,v)$ is $4q$-good. Then, by~\Cref{lem:oblivious-edge}, either every out-edge of $u$ is $q$-good or every in-edge of $v$ is $q$-good, yielding a contradiction.
\end{proof}

We further define $Q_{A,G}$ as the expected number of vertices and edges in $G$ visited by $A$, that is, 
\begin{align}\label{eqn:querynumber}
Q_{A,G}=\sum_{(u,v)\in E}p_{A,G}(u,v)+\sum_{v\in V}p_{A,G}(v). 
\end{align}
We may omit $G$ or both $A$ and $G$ in the subscripts when the context is clear.

We construct $G'$ in one of the following ways, under the assumption that $A$ visits a significantly smaller total of vertices and edges than bidirectional Dijkstra when solving the shortest-path problem. Let $d_{G}(s,t)$ and $d_{G'}(s,t)$ denote the lengths of the shortest $s$-$t$ path in $G$ and $G'$, respectively. In each construction, we ensure that $d_{G’}(s,t)<d_G(s,t)$, and that $G'$ differs from $G$ only by (i) removing a constant number of $O(1)$-bad edges, (ii) adding a constant number of new edges, and thereby (iii) changing the degrees of a constant number of $O(1)$-bad vertices. By the union bound, the probability that $A$ visits at least one vertex or edge where $G'$ differs from $G$ is at most a constant. Consequently, $A$ receives exactly the same query answers on both $G$ and $G'$ with constant probability (e.g., $1/2$ as we use in our proof). This contradicts the assumption that $A$ is a good algorithm that outputs the correct answer on every instance with probability at least $9/10$. 

\paragraph{Construction (a).}\phantomsection \label{cons:a}
We construct $G’$ by finding an $O(1)$-length path $P$ from a vertex $u\in V_s$ to a vertex $v\in V_t$ in $G$ such that every edge on $P$ is $O(1)$-bad. Recall that \Cref{lem:dist-st} shows that $d(s,u)+d(v,t)<d(s,t)$. We then decrease the edge weights along $P$ to obtain $G’$ such that $d_{G’}(s,t)<d_G(s,t)$. In this construction, $G$ and $G’$ differ only on a few $O(1)$-bad edges on the path $P$, and the degrees of all vertices remain unchanged. 

\paragraph{Construction (b).} \phantomsection \label{cons:b}
There may be no suitable path $P$ in $G$, in which case we construct one by swapping edges. Specifically, we find an outgoing edge $(u_1,v_1)$ with $u_1\in V_s$ and an incoming edge $(u_2,v_2)$ with $v_2\in V_t$, such that both $(u_1,v_1)$ and $(u_2,v_2)$ are $O(1)$-bad edges. We then replace $(u_1,v_1)$ and $(u_2,v_2)$ with $(u_1,v_2)$ and $(u_2,v_1)$ while preserving the degrees of all vertices. This is possible provided that neither $(u_1,v_2)$ nor $(u_2,v_1)$ already exists in $G$. Note that $(u_1,v_2)$ certainly does not exist; otherwise, we could simply apply Construction~(\hyperref[cons:a]{a}). Such an edge swap creates a one-hop path from $u_1\in V_s$ to $v_2\in V_t$, while modifying only two bad edges, adding two new edges, and preserving the degrees of all vertices. Moreover, on directed graph, it does not change the shortest-path distance from $s$ to any vertex in $V_s$, or from any vertex in $V_t$ to $t$. We then apply Construction~(\hyperref[cons:a]{a}) to obtain $G'$ such that $d_{G'}(s,t)<d_G(s,t)$.
    
\paragraph{Construction (c).}\phantomsection \label{cons:c}
We may also modify bad vertices. Specifically, we select a bad vertex $w$ and find an outgoing edge $(u_1,v_1)$ with $u_1\in V_s$ and an incoming edge $(u_2,v_2)$ with $v_2\in V_t$, such that both $(u_1,v_1)$ and $(u_2,v_2)$ are $O(1)$-bad edges. 
We remove the edges $(u_1,v_1)$ and $(u_2,v_2)$, and add the edges $(u_1,w)$ and $(w,v_2)$. We further add edges incident to $v_1$ and $u_2$ from other bad vertices so as to preserve the degrees of all vertices except these bad vertices. 
We then apply Construction~(\hyperref[cons:a]{a}) to construct $G’$ such that $d_{G’}(s,t)<d_G(s,t)$. 

%% file: undirected_oblivious.tex
\section{Order-Oblivious Instance-Optimality for Undirected Graphs}\label{sec:order-oblivious-undirected}
In this section, we show that bidirectional Dijkstra is instance-optimal for undirected graphs in the order-oblivious model, thus positively answering the open question left by~\cite{sosa_bidirectional_HaeuplerHRTT25} regarding the instance-optimality of bidirectional Dijkstra on simple weighted graphs. 

\begin{theorem}
\label{thm:undirected-oblivious-undirected}
In the order-oblivious model, \Cref{alg:bi-dj-new} is instance-optimal on undirected graphs. 
\end{theorem}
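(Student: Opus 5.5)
The plan is to show that every good algorithm $A$ satisfies $Q_{A,G}=\Omega(|E_s|)$ on every undirected instance $G$. By \Cref{lem:time_bj} this gives the theorem. Suppose for contradiction that $Q_{A,G}\le c|E_s|$ for a sufficiently small constant $c$. We then build a $G'$ with $d_{G'}(s,t)<d_G(s,t)$ that differs from $G$ only on $O(1)$ edges, each visited by $A$ with probability at most a small constant. Degrees in $G'$ will be unchanged, so degree queries reveal nothing. This contradicts goodness, following the framework of \Cref{sec:framework}.

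\textbf{Step 1 (counting bad edges and vertices).} Since $\sum_e p_A(e)\le c|E_s|$, at most $(c/q)|E_s|$ edges are $q$-good. Hence, for $c\ll q$, all but a small fraction of the pairs in $E_s$, and likewise in $E_t$ (using $|E_t|\ge |E_s|-1$), correspond to $q$-bad edges.

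Call a vertex \emph{saturated} if all its incident edges are $q/4$-good. The saturated vertices have total degree at most $(4c/q)|E_s|$. By \Cref{lem:oblivious-edge}, every $q$-good edge has a saturated endpoint. So only a small fraction of the pairs $(u,v)\in E_s$ have $u$ saturated, and similarly for $E_t$.

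\textbf{Step 2 (finding a good pair).} I would look for $(u_1,v_1)\in E_s$ and $(v_2,u_2)\in E_t$, with $u_1\in V_s$ and $v_2\in V_t$, such that all of the following hold:
\begin{itemize}
\item both edges are $q$-bad;
\item $u_1$ and $v_2$ are not saturated;
\item $d(s,v_1)\ge d(s,u_1)$ and $d(u_2,t)\ge d(v_2,t)$, so that removing these edges does not destroy the shortest paths $s\leadsto u_1$ and $v_2\leadsto t$.
\end{itemize}
Edges relaxed "backwards", towards an already-closed vertex, can be discarded. Each undirected edge is charged at most twice, so a constant fraction of $E_s$ survives.

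Then split into cases.
\begin{itemize}
\item If $\{u_1,v_2\}$ is already an edge, it is $q$-bad by \Cref{lem:oblivious-edge-bad}, because $u_1$ and $v_2$ are both unsaturated. Construction~(\hyperref[cons:a]{a}) applies: lower its weight to $\eps$. By \Cref{lem:dist-st}, $d(s,u_1)+\eps+d(v_2,t)<d(s,t)$.
\item Otherwise apply Construction~(\hyperref[cons:b]{b}). Replace $\{u_1,v_1\},\{v_2,u_2\}$ by $\{u_1,v_2\}$ with weight $\eps$ and $\{v_1,u_2\}$ with a huge weight. This needs $\{v_1,u_2\}\notin E$ and $v_1\ne u_2$, and it preserves all degrees.
\end{itemize}
The total failure probability is at most $O(q)$ by a union bound. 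Here the vertex-visit probabilities matter only through incident-edge visits and degree queries, and the degrees are unchanged.

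\textbf{Main obstacle.} The hardest step is guaranteeing that some choice avoids the collision where $\{v_1,u_2\}$ already exists or $v_1=u_2$, in a simple graph. I would count the candidate quadruples: there are $\Omega(|E_s|^2)$ pairs $((u_1,v_1),(v_2,u_2))$. If every one of them collides, then for a fixed $v_1$ nearly all far endpoints $u_2$ of bad $E_t$-edges are adjacent to $v_1$. Each such adjacency is an edge $\{v_1,u_2\}$ that joins the two explored regions through unsaturated vertices. I would then try to use these edges themselves in a length-2 or length-3 path $u_1\!-\!v_1\!-\!u_2\!-\!v_2$ of bad edges, and apply Construction~(\hyperref[cons:a]{a}) directly with small weights on this short path. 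This needs these connector edges to be bad. If many of them were good, \Cref{lem:oblivious-edge} would force many saturated endpoints among the $v_1$'s or $u_2$'s, which contradicts the degree budget from Step~1.

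Making this density/pigeonhole dichotomy rigorous is where I expect the bulk of the work. It needs care with the possibility $v_1\in V_t$, and with lowering weights on a longer path without shortening the distances to the $V_s$ and $V_t$ vertices used.
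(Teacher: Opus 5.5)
There is a genuine gap, and it sits exactly where you locate your main obstacle. Your framework matches the paper's: take one bad edge on each side, apply Construction~(a) if $\{u_1,v_2\}$ already exists, and otherwise do the degree-preserving swap of Construction~(b). But you leave the collision case ($v_1=u_2$, or $\{v_1,u_2\}\in E$) to a density/pigeonhole argument that you never carry out. The sketch you give (``if many connectors were good, many endpoints would be saturated'') is not a proof.

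No counting is needed. Work with the single fixed pair. The vertex $v_1$ is an endpoint of the $q$-bad edge $\{u_1,v_1\}$, and $u_2$ is an endpoint of the $q$-bad edge $\{v_2,u_2\}$. Read \Cref{lem:oblivious-edge-bad} for undirected graphs, where out- and in-edges are just incident edges. It says that $\{v_1,u_2\}$, if present, is $4q$-bad. So $u_1,v_1,u_2,v_2$ is a path of at most three $O(q)$-bad edges (two if $v_1=u_2$), and Construction~(a) applies directly.

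The same one-line argument covers every edge between $\{u_1,v_1\}$ and $\{u_2,v_2\}$, including $\{u_1,v_2\}$. So your saturation bookkeeping in Step~1 is also unnecessary: $u_1$ and $v_2$ are already endpoints of the chosen bad edges. This is precisely the paper's proof. Pick one $1/20$-bad edge in each of $E_s$ and $E_t$. If the four endpoints are not distinct, or if any edge joins the two pairs, there is a bad path of length at most three from $u_1$ to $v_2$. Otherwise, swap.

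Your remaining worries do not arise. Construction~(a) only lowers weights, so $d_{G'}(s,u_1)\le d_G(s,u_1)$ and $d_{G'}(v_2,t)\le d_G(v_2,t)$. Then \Cref{lem:dist-st} gives $d_{G'}(s,t)<d_G(s,t)$ once the new weights are small enough, and whether $v_1\in V_t$ is irrelevant.

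One thing you handle more carefully than the paper's write-up is the orientation condition $d(s,u_1)\le d(s,v_1)$, $d(v_2,t)\le d(u_2,t)$. The framework asserts distance preservation under the swap only for directed graphs, and your condition is what makes the swap safe in the undirected case. It also costs nothing: if $d(s,v_1)<d(s,u_1)$, then $v_1$ was closed earlier, so $(v_1,u_1)\in E_s$ is the same bad edge read the other way.

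To complete that point, you should also note the cross case. A shortest $s$--$u_1$ path cannot use $\{u_2,v_2\}$, because passing through $v_2$ would force $d(s,v_2)+d(v_2,t)<d(s,t)$ by \Cref{lem:dist-st}, which is impossible.
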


\begin{proof}
    We adopt the Construction~(\hyperref[cons:a]{a}) and~(\hyperref[cons:b]{b}) from~\Cref{sec:framework} to construct $G'$. 

    Recall from~\Cref{lem:time_bj} that the total number of vertices and edges visited by \Cref{alg:bi-dj-new} is $\Theta(|E_s|)=\Theta(|E_t|)$, where $E_s$ and $|E_t|$ denote the set of edges visited by the forward and backward searches in~\Cref{alg:bi-dj-new}, respectively. 
    Assume for contradiction that $Q_A<|E_t|/20$, where we recall from~\Cref{eqn:querynumber} that $Q_{A}$ denotes the expected number of vertices and edges visited by $A$. 
    Since a $q$-bad edge $(u,v)$ satisfies $p_{A,G}(u,v)<q$, the assumption $Q_A<|E_t|/20$ implies that there exist at least one $1/20$-bad edge $(u_1,v_1)\in E_s$ and at least one $1/20$-bad edge $(u_2,v_2)\in E_t$.
    
    If $u_1,v_1,u_2,v_2$ are not all distinct, then since this is for undirected graphs, $u_1$ and $v_2$ are already connected by at most two $1/20$-bad edges, completing Construction~(\hyperref[cons:a]{a}). 
    
    Otherwise, $u_1,v_1,u_2,v_2$ are all distinct. Note we are considering the order-oblivious model, in which the orders of the vertices and the incident edges are chosen uniformly at random. By~\Cref{lem:oblivious-edge-bad}, if there exists an edge between a vertex in $\{u_1,v_1\}$ and a vertex in $\{u_2,v_2\}$, then this edge must be $1/5$-bad since each of its endpoints has an incident $1/20$-bad edge.
    Hence, $u_1$ and $v_2$ are connected by one $1/5$-bad edge and at most two $1/20$-bad edges.
    Otherwise, there is no edge between ${u_1,v_1}$ and ${u_2,v_2}$. We then replace $\{(u_1,v_1),(u_2,v_2)\}$ by $\{(u_1,v_2),(u_2,v_1)\}$, completing Construction~(\hyperref[cons:b]{b}).
   \end{proof}

We remark that \Cref{thm:undirected-oblivious-undirected} shows the only setting in which bidirectional Dijkstra is instance-optimal over all valid instances. In all other settings, it is off by a super-polylogarithmic factor, as we will see in the following sections.

\section{Order-Dependent Instance-Non-Optimality for Directed Graphs}\label{sec:order-dependent-directed}

We now begin presenting our lower bounds on the instance-optimality ratio. For ease of exposition, we present our results in technical order. 
In this section, we focus on directed graphs in the order-dependent model. The lower bounds for directed graphs in the order-oblivious model and for undirected graphs in the order-dependent model will be given later in \Cref{sec:order-oblivious-directed} and~\Cref{sec:order-dependent-undirected}, respectively.  

Recall that in the order-dependent model, the orders of the vertices and their incident edges are part of the input and may therefore be exploited by an instance-smart algorithm. However, an algorithm that aims to be instance-optimal (e.g., bidirectional Dijkstra) must be the best under every possible ordering. We therefore regard the ordering as being chosen adversarially, and compare its query complexity against that of instance-smart algorithms tuned for this ordering.

We prove the following lower bound, when parameterizing in the number of vertices $n$ and the number of edges $m$.

\begin{theorem}
\label{thm:directed-fixed-lower-m}
    In the order-dependent model, the instance-optimality ratio of every algorithm is $\Omega(\lceil m/n \rceil)$ on directed graphs.
\end{theorem}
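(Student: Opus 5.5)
The plan is to give, for each $d$ with $1\le d\le \sqrt n$ (larger $d$ can be handled by adding filler), a family of directed instances with $m=\Theta(nd)$. On each instance an instance-smart algorithm certifies the answer with $O(d)$ queries, while every good algorithm must spend $\Omega(d^2)$ expected queries on a typical member of the family. Since $\tau\ge 1$, the bound is trivial when $m=O(n)$, so assume $d\ge 2$.

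\textbf{Construction.} The source $s$ has out-edges of weight $1$ to $a_1,\dots,a_d$. The target $t$ has in-edges of weight $1$ from $b_1,\dots,b_d$. One designated edge $(a_{i^*},b_{j^*})$ of weight $1$ gives $d(s,t)=3$. There is also a set $H=\{h_1,\dots,h_d\}$ of hub vertices. Every vertex other than the hubs themselves has an out-edge to every hub, so each hub has in-degree $n-1$ up to the trivial adjustments, and each hub has out-degree $0$. This accounts for $m=\Theta(nd)$. Each $a_i$ therefore has out-degree exactly $d$ if $i\neq i^*$, and $d+1$ if $i=i^*$. The $b_j$'s get in-edges from $O(1)$ filler vertices, say of large weight, which are reachable from nothing relevant. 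All indices of $a_i$, $b_j$, hubs and the position of $(a_{i^*},b_{j^*})$ in the lists are fixed by the instance.

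\textbf{Upper bound on $\tau$.} The smart algorithm reads the $d$ out-edges of $s$, the $d$ in-edges of $t$, and the edge $(a_{i^*},b_{j^*})$ at its known index. It queries $\indeg(h)=n-1$ and $\outdeg(h)=0$ for each hub, and $\outdeg(a_i)$ for each $i$. The hub in-degrees certify that every vertex, in particular every $a_i$, has an edge to every hub. Together with $\outdeg(a_i)=d$ (respectively $d+1$), this certifies that $a_i$ has no other out-edges, so no $a_i\to b_j$ edge or other path shorter than $3$ exists. Hubs are dead ends, so paths through them are irrelevant, and every $s$--$t$ path has length at least $3$. This costs $O(d)$ queries, so $\tau=O(d)$ on every member of the family.

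\textbf{Lower bound for any good $A$.} The family is obtained by a uniformly random relabelling of vertices and a random placement of the designated edge. Fixing $A$, I would use Yao-style averaging: if $A$ uses $o(d^2)$ expected queries on a random member, then for some instance $G$ and some pair $(i,j)\neq(i^*,j^*)$, with constant probability $A$ touches neither the $a_i$-to-hub edge region nor the relevant part of $b_j$'s in-list. It also must not query the degrees of the specific hub and filler vertex involved. This is the step I expect to be the main obstacle. The argument must show that without knowing which labels are hubs, $A$ cannot cheaply rule out a short $a_i\to b_j$ edge for $\Omega(d^2)$ pairs. Each of $A$'s queries should ``cover'' only $O(1)$ pairs. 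The point is that degree queries on $a_i$ alone do not help unless $A$ also identifies and checks all $d$ hubs, and checking the hubs costs at most $d$ queries but is useless without knowing which vertices are hubs among $n$ random labels. I would try to make this precise by counting, for each pair $(i,j)$, the probability that $A$ visits one of the $O(1)$ modified items.

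\textbf{Modification.} For such a pair $(i,j)$ and a bad hub edge $(a_i,h)$, form $G'$ by a Construction~(\hyperref[cons:c]{c})-style swap. Replace $(a_i,h)$ by $(a_i,b_j)$ of weight $\varepsilon$, remove one filler in-edge $(x,b_j)$, and add $(x,h)$. This preserves all degrees and all simplicity constraints, and it makes $d_{G'}(s,t)\le 2+\varepsilon<3$. Since $A$ sees identical answers with constant probability, it errs with probability greater than $1/10$, a contradiction. Hence $\cost_A\ge\Omega(d^2)=\Omega(\lceil m/n\rceil)\cdot\tau$.
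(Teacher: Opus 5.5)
\textbf{The lower bound is false for your family.} On your instances, the certificate you give the instance-smart algorithm can be computed by a generic good algorithm in $O(d)$ queries.

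\begin{itemize}
\item Since $s$ is a non-hub, its out-list contains every $a_i$ and every hub. Read it, and query $\outdeg$ of each out-neighbor; the hubs are exactly those of out-degree $0$.
\item Query $\indeg$ of every hub.
\item Query $\outdeg(a_i)$ for all $i$. This locates $a_{i^*}$ as the unique one of out-degree $d+1$.
\item Scan all $d+1$ out-edges of $a_{i^*}$ and read the in-list of $t$.
\item If everything matches the template, output $3$; otherwise fall back to ordinary Dijkstra.
\end{itemize}

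This algorithm is correct on every instance. It costs $O(d)=O(\tau)$ on every member of your family, whatever the relabelling and list orders, so no $\Omega(d^2)$ bound can hold there.

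The obstacle you flagged is therefore not a technicality. Random labels do not hide the hubs, because they are out-neighbors of $s$ and of every $a_i$. The only instance-specific knowledge your smart algorithm uses is the index of a single edge, which is worth at most $O(d)$ queries.

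Your modification step also breaks. The filler $x$ is a non-hub, so $(x,h)$ already exists, and adding it creates a parallel edge. More fundamentally, $h$ already receives an edge from every possible in-neighbor. So no simple $G'$ can delete $(a_i,h)$ without lowering $\indeg(h)$, which is exactly what the certificate checks.

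\textbf{The missing idea.} The instance-specific information must be the positions of \emph{many} hidden edges, each buried in a long list on \emph{both} sides. The paper reuses the construction of \Cref{thm:directed-fixed-lower} with $|V_i|=\Theta(\sqrt m)$:

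\begin{itemize}
\item there is a perfect matching from $V_3$ to $V_4$;
\item every out-list of $V_3$ is padded with edges to $V_1$, and every in-list of $V_4$ is padded with edges from $V_2$;
\item the full degrees of $V_1$ and $V_2$ certify this padding.
\end{itemize}

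Every matching edge must be visited, since otherwise its weight could be lowered without changing any degree. Without knowing the order, each costs $\Omega(\sqrt m)$ expected queries to locate from either endpoint. This gives $\Omega(m)$ for every good algorithm, against the $O(n)$ instance-smart certificate.

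Your hub idea can be combined with this to get the same ratio on instances with $\tau=O(d)$: use out-hubs to pad the out-lists of the $a_i$, symmetric in-hubs to pad the in-lists of the $b_j$, and a perfect matching $a_i\to b_{\pi(i)}$. Note that with your $O(1)$-length in-lists at the $b_j$, even a hidden matching could be found cheaply from the $b$ side.
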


This result will essentially follow from the proof of the following lower bound, parameterized in the number of vertices $n$ and the maximum in- and out-degree $\Delta$.

\begin{theorem}
\label{thm:directed-fixed-lower}
    In the order-dependent model, the instance-optimality ratio of every algorithm is $\Omega(\lceil\Delta^2/n\rceil)$ on directed graphs.
\end{theorem}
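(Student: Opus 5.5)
The plan is to build a family of instances, parameterised by a hidden permutation, such that two things hold. First, an instance-smart algorithm tuned to one member, and in particular knowing its adjacency-list indices, can certify $d(s,t)$ with $O(n)$ queries, or with $O(1)$ queries when $\Delta^2\le n$, so that the bound is the trivial $\Omega(1)$ there. Second, every good algorithm $A$ must spend $\Omega(\Delta^2)$ queries on some member. Since the members are all valid instances, $\sup_I \cost_A(I)/\tau(I)=\Omega(\Delta^2/n)$. Concretely, I would take $s$ with out-edges of weight $1$ to a set $U$ of $\Delta$ vertices, and $t$ with in-edges of weight $1$ from a set $W$ of $\Delta$ vertices. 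Each $u\in U$ gets $\Delta$ out-edges into a middle set $M$ of $\Theta(n)$ vertices, and each $w\in W$ gets $\Delta$ in-edges from a disjoint set $Y$ of $\Theta(n)$ vertices. All these edges get large weights, and every $M$- and $Y$-vertex has in- and out-degree $O(\lceil\Delta^2/n\rceil)$. Exactly one edge $(u^*,w^*)$ with $u^*\in U$, $w^*\in W$ of weight $1$ gives $d(s,t)=3$. In-edges into $M$ and out-edges out of $Y$ are the only other edges, and $M$ has out-degree $0$.

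\emph{Upper bound on $\tau$.} The smart algorithm knows $u^*,w^*$ and the index of $w^*$ in $\outnbr(u^*,\cdot)$; it reads the path of length $3$ in $O(1)$ queries. It must then rule out any other edge $U\to W$ of weight below $1$, since this is the only way to get a shorter path, by Construction~(\hyperref[cons:b]{b}) style swaps. I would certify this by degree counting. Query $\outdeg$ of every vertex, which costs $n$ queries, to confirm that only $s$, $U$ and $Y$ have positive out-degree. Query $\indeg$ of every vertex in $M$, and check that these in-degrees sum to $\Delta\cdot|U|-1$. Since the graph is simple and $U$ only points to $M\cup W$, this forces at most the one known edge from $U$ into $W$. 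The role of the order-dependent model is that the smart algorithm knows exactly which vertex indices form $U,M,W,Y$. A generic algorithm does not: with the labels hidden by a random permutation, it cannot even identify $M$ without exploring.

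\emph{Lower bound for every $A$.} By Yao's principle, I would put the uniform distribution over the relabelings and over the position of $(u^*,w^*)$ among the $\Delta^2$ slots of the $U$ out-lists, and argue against a deterministic $A$. The aim is the following claim: unless $A$ has read $\Omega(\Delta^2)$ list entries, it cannot distinguish the instance from a neighbour obtained by a degree-preserving swap. Such a swap replaces an unread edge $(u,m)$ together with an edge $(y,w)$ by $(u,w)$ and $(y,m)$, with a tiny weight on $(u,w)$. The neighbour has a shorter $s$-$t$ path but identical degrees everywhere, and identical answers on all queried entries. I would show that $A$ must, with constant probability, either read that slot or the $(y,w)$ slot. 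Summing over the $\Omega(\Delta^2)$ candidate slots, using the notion of $q$-bad edges from \Cref{sec:framework} and a union bound as in the framework, gives $\cost_A=\Omega(\Delta^2)$. Finally, \Cref{thm:directed-fixed-lower-m} follows by taking $m=\Theta(\Delta^2+n)$.

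\emph{Main obstacle.} The hardest step is making the two halves consistent. The certificate must be cheap for the tuned algorithm, yet the swapped neighbours must remain valid instances that preserve every degree the smart algorithm inspects. Otherwise a generic algorithm could run the same degree-counting test, which uses only $n$ queries, and the bound would collapse. In particular, the swap $(y,m)$ creates an edge from $Y$ into $M$, and the certificate must catch this through the known index structure of the tuned instance rather than through degrees alone. I would first try to fix this as follows. Let the smart algorithm additionally read the last entry of each $Y$-list at a known index, $O(n)$ queries in total. Then arrange that any swap must alter such a designated entry for the tuned instance, while under the random relabeling those designated entries are invisible to $A$. If this fails, I would fall back to letting $M$-vertices have in-degree exactly known and distinct, encoding vertex identities as the abstract suggests.
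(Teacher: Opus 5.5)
There is a genuine gap, and it sits in the half you flagged as the main obstacle. On your instance the instance complexity is itself $\Omega(\Delta^2)$, so no $O(n)$ certificate exists and the ratio you obtain is only $O(1)$.

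The swap you use for the lower bound replaces $(u,m)$ and $(y,w)$ by $(u,w)$ with a tiny weight and $(y,m)$. It is available for essentially every pair consisting of a $U\to M$ edge and a $Y\to W$ edge, because $(u,w)\notin E$ unless $(u,w)=(u^*,w^*)$, and there are no $Y\to M$ edges at all. The swap keeps the graph simple, preserves every degree and every unread list entry, and makes $d(s,t)<3$. By the framework of \Cref{sec:framework}, every good algorithm, including the tuned one, must therefore have $Q_A=\Omega(\Delta^2)$. Otherwise some $U\to M$ edge and some $Y\to W$ edge are both $O(1)$-bad, and swapping them fools it.

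For the same reason your degree-sum certificate is unsound. The in-degrees of $M$ count edges from $Y$ as well as from $U$, and nothing you query certifies that $U$ points only into $M\cup W$. (Also, $W$ has out-edges to $t$, contrary to your out-degree check.) Neither patch helps. One designated entry per $Y$-list covers only $\Theta(n)$ of the $\Delta^2$ edges $Y\to W$ when $\Delta^2\gg n$. Degree-encoded identities cannot detect a swap that changes no degree. Finally, knowing which labels form $U,M,W,Y$ is not what the order-dependent model buys, since the tuned algorithm must still verify everything it uses.

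The missing idea is to make such swaps impossible by simplicity, using saturated vertices. The paper sends the filler out-edges of each $u\in V_3$ into a set $V_1$ of only $\Theta(\Delta)$ vertices. Each vertex of $V_1$ receives an edge from every vertex outside $\{t\}\cup V_5$, where $V_5$ is a padding path hanging off $t$. Symmetrically, $V_2$ sends an edge to every vertex outside $\{s\}\cup V_5$.

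The certificate then goes as follows:
\begin{itemize}
\item Scan $s$, $t$ and $V_5$, which costs $O(n)$ queries.
\item One in-degree query per vertex of $V_1$ certifies all the filler edges into $V_1$.
\item Out-degree $|V_1|+1$ then certifies that $u$ has exactly one out-edge leaving $V_1$.
\end{itemize}
A degree-preserving swap of an unread $(u,x)$ with $x\in V_1$ would have to add an edge $(y,x)$ that already exists, so simplicity rules it out. Order-dependence is used precisely here: the tuned algorithm reads that one remaining edge at its known index and checks that its head lies outside $V_1$.

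Since every $u\in V_3$ then has exactly one such edge, the special edges form a perfect matching $V_3\to V_4$, so degrees reveal nothing. A generic algorithm facing adversarial positions must locate all $\Theta(\Delta)$ matching edges, since otherwise lowering the weight of an unread one goes undetected. Each costs $\Omega(\Delta)$ to find, giving $\Omega(\Delta^2)$ queries against $\tau=O(n)$.
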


In~\Cref{sec:more-positive}, we will show that the upper bound on the instance-optimality ratio of bidirectional Dijkstra matches the above lower bounds for every algorithm. 
Consequently, bidirectional Dijkstra is the best possible algorithm in terms of instance-optimality ratio on directed graphs in the order-dependent model.

\begin{proof}[Proof of~\Cref{thm:directed-fixed-lower}]
Recall from~\Cref{subsec:high-level_lower} that we establish lower bounds on the instance-optimality ratio of all algorithms by showing that for every good algorithm $A$, there exists an instance on which an instance-smart algorithm performs significantly faster than $A$.

    For any given $\Delta \in [0, n]$, consider the following graph instance which has $n$ vertices and maximum degree $\Theta(\Delta)$. 
    
    The graph consists of several components: $V=\{s,t\}\cup V_1\cup V_2\cup V_3\cup V_4 \cup V_5$, where $|V_i|=\Theta(\Delta)$ for $i\in[1,4]$,  and $|V_5|=n-\Theta(\Delta)$. The target vertex $t$ together with $V_5$ forms an outgoing path starting from $t$.
    Each vertex except $t$ and those in $V_5$ has an outgoing edge to every vertex in $V_1$, and each vertex except $s$ and those in $V_5$ has an incoming edge from every vertex in $V_2$. In addition, the source vertex $s$ has an outgoing edge to every vertex in $V_3$, each of weight $1$. The target vertex $t$ has an incoming edge from every vertex in $V_4$, each of weight $1$. Finally, there is a perfect matching from $V_3$ to $V_4$. A sketch for the above graph construction is given in~\cref{fig:directed-fixed}.
    
    \begin{figure}[!h]
        \centering
        \includegraphics[width=0.5\linewidth]{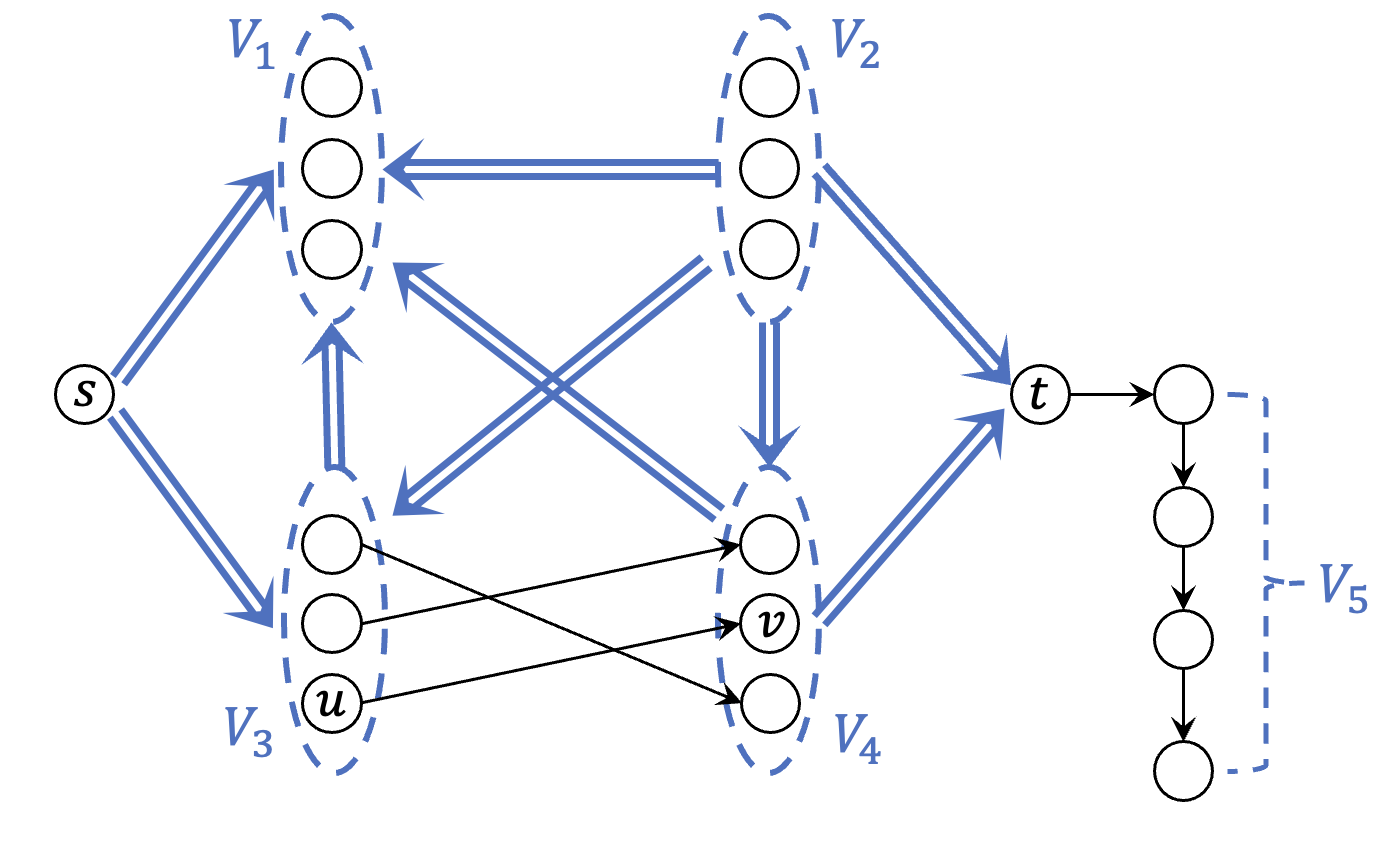}
        \vspace{-2mm}
        \caption{Lower-bound construction for directed graphs in the order-dependent model. A blue block arrow from a set $X$ to a set $Y$ indicates that every vertex in $X$ has a directed edge to every vertex in $Y$.}
        \label{fig:directed-fixed}
    \end{figure}

An instance-smart algorithm first visits all incident edges of $s$ and $t$. From this, it learns the in- and out-degrees of the vertices in $V_1,V_2, V_3,V_4$, as well as the existence of a path from $t$ to $V_5$. By visiting all incident edges of the vertices in $V_5$, the algorithm verifies that $V_5$ is disconnected from the remaining components of the graph.

From the out-degrees of the vertices in $V_1$ and the in-degrees of the vertices in $V_2$, the algorithm concludes that there are no outgoing edges from $V_1$ and that $t\notin V_1$. Similarly, there are no incoming edges to $V_2$ and $s\notin V_2$. Consequently, if there exists an $s$–$t$ path, then no vertex in $V_1\cup V_2$ can lie on any shortest $s$–$t$ path.

Moreover, by comparing the in-degree of each vertex in $V_1$ with the total number of vertices $n$, together with the fact that neither $t$ nor any vertex in $V_5$ has an edge to $V_1$, the algorithm concludes that every vertex other than $t$ and those in $V_5$ has an outgoing edge to every vertex in $V_1$. Analogously, every vertex other than $s$ and those in $V_5$ has an incoming edge from every vertex in $V_2$.
Combining this information with the out-degrees of the vertices in $V_3$, the algorithm concludes that, for each vertex $u\in V_3$, at most one outgoing edge of $u$ needs to be visited in order to solve the shortest-path problem. Let $(u,v)$ denote this edge, and let $i$ denote the position of $v$ in the outgoing adjacency list of $u$, that is, $\outnbr(u,i)=v$. The instance-smart algorithm tuned for this instance first queries $\outnbr(u,i)$. Since it has already visited all vertices in $V_1$, it can verify whether the returned vertex belongs to $V_1$. Once it confirms that $v\notin V_1$, it knows that this is the unique outgoing edge of $u$ outside $V_1$, and therefore no further out-neighbor queries from $u$ are necessary.

Consequently, the instance-smart algorithm visits only one outgoing edge of each vertex in $V_3$ and one incoming edge of each vertex in $V_4$, after which it can compute the shortest $s$–$t$ path. The total number of vertices and edges visited is therefore $O(n)$.

In contrast, for an algorithm that aims to be instance-optimal, as explained above, we should regard the order of the edges in the outgoing adjacency lists of the vertices in $V_3$ as being chosen adversarially. 
Consequently, for each vertex $u\in V_3$, the algorithm makes $\Omega(\dout(u))=\Omega(\Delta)$ queries in expectation before encountering the edge $(u,v)$, where the expectation is taken over the internal randomness of the algorithm.
A symmetric argument applies to the incoming adjacency lists of the vertices in $V_4$. Since every outgoing edge from $s$ to a vertex in $V_3$ and every incoming edge from a vertex in $V_4$ to $t$ has identical weight $1$, the algorithm must query all edges from $V_3$ to $V_4$ to ensure that none of them yields a shorter path.
This results in a total of $\Omega(\Delta^2)$ queries.
\end{proof}

A small modification of this proof gives us~\Cref{thm:directed-fixed-lower-m}.

\begin{proof}[Proof of~\Cref{thm:directed-fixed-lower-m}]
The lower-bound construction for~\Cref{thm:directed-fixed-lower} yields a graph with $\Theta(\Delta^2+n)$ edges. We now have a given $m\in[n,n^2]$. So we set $|V_i|=\Theta(\sqrt{m})$ for each $i\in[1,4]$, so that the resulting graph has $\Theta(m)$ edges. The proof then follows directly from the above construction.
\end{proof}

\section{Order-Oblivious Instance-Non-Optimality for Directed Graphs}\label{sec:order-oblivious-directed}

We move on to the setting of directed graphs in the order-oblivious model. We start with our lower bound for every algorithm.
When parameterizing in $n$ and $m$, we again show a lower bound of $\Omega(\lceil m/n \rceil)$ for every algorithm, but only when the average degree is $O(\sqrt n)$. We will later show an $\Omega(\lceil m/n \rceil)$ lower bound without the assumption on the average degree, when restricting attention to bidirectional Dijkstra.

\begin{theorem}
\label{thm:directed-oblivious-m-any}
    In the order-oblivious model, the instance-optimality ratio of every algorithm is $\Omega(\lceil m/n \rceil)$ on directed graphs with $m=O(n\sqrt n)$.
\end{theorem}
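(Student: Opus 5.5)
The plan is to adapt the construction from the proof of \Cref{thm:directed-fixed-lower}. In the order-oblivious model the instance-smart algorithm can no longer use the \emph{position} of the unique "interesting" edge in an adjacency list. So the hidden information has to be moved into the \emph{graph structure itself}: which side of an edge is cheap to inspect. Fix $D=\Theta(\lceil m/n\rceil)$. The assumption $m=O(n\sqrt n)$ gives $D=O(\sqrt n)$, which I expect to need below so that the dense gadgets fit among $n$ vertices. I will build a family of instances, each with $\Theta(nD)=\Theta(m)$ edges. On every instance of the family, a tailored algorithm certifies $d(s,t)$ with $O(n)$ queries. Yet every good algorithm $A$ has expected cost $\Omega(nD)$ on at least one member. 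That gives ratio $\Omega(D)$.

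\textbf{Construction.} I would take hub sets $V_1,V_2$ of size $D$ and $\Theta(n)$ pairs $(u_k,v_k)$. Each $u_k$ is entered from $s$ with weight $1$, and each $v_k$ has an edge to $t$ of weight $1$. Every ordinary vertex sends an edge to every vertex of $V_1$ and receives an edge from every vertex of $V_2$, exactly as in \Cref{thm:directed-fixed-lower}. For each $k$ there is one extra "critical" edge: $u_k\to v_{\pi(k)}$ for a hidden permutation $\pi$, or a decoy edge, with weights chosen so that only the true matching edges determine $d(s,t)$. To make the instance rigid I would set the in-degree of each $V_1$-vertex (and the out-degree of each $V_2$-vertex) to exactly the number of ordinary vertices. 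The hidden parameter of the family is, for each $k$, whether the extra edge of $u_k$ lands on a vertex whose in-list has length $D+1$ or is a "light" vertex of in-degree $O(1)$. Equivalently, which endpoint of the critical edge is cheap to scan is flipped independently per pair.

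\textbf{Upper bound on $\tau$.} The tailored algorithm first reads all degrees ($O(n)$ queries). It then checks the in-degree of each $V_1$-vertex and the out-degree of each $V_2$-vertex against the counts. This is the counting argument from \Cref{thm:directed-fixed-lower}, now carried out by degree queries alone; it certifies that all but one out-edge of each $u_k$ goes into $V_1$, and all but one in-edge of each $v_k$ comes from $V_2$. Knowing for each $k$ which endpoint is light, it scans only that endpoint's short list to find the critical edge and its weight. The total is $O(n)$ queries, and correctness holds for every ordering because the degree certificate does not depend on the ordering.

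\textbf{Lower bound for every good $A$.} Here I would apply Yao's principle to the family with uniformly random orientation bits. In the random order, finding the extra edge at a heavy endpoint costs $\Omega(D)$ expected neighbor queries, since it sits at a uniformly random position among $D+1$. The light endpoint is indistinguishable until $A$ either finds it or identifies it. I would make it indistinguishable by giving the light vertices of all pairs the same degree, so that identifying the light endpoint among its candidates also costs $\Omega(D)$ or a random-vertex search. If $A$ skips pair $k$ with probability $\ge 1/2$, then the $4q$-badness bound of \Cref{lem:oblivious-edge-bad}, together with the edge-swap Construction~(\hyperref[cons:b]{b}), lets me lower the weight of a new edge $u_k\to v_{k'}$ without changing any degree. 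This yields a $G'$ with $d_{G'}(s,t)<d_G(s,t)$ that $A$ cannot distinguish from $G$, contradicting goodness. Hence $A$ must resolve a constant fraction of the $\Theta(n)$ pairs, at cost $\Omega(D)$ each.

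\textbf{Main obstacle.} The delicate step is the indistinguishability argument. In the oblivious model, vertex indices are random samples, and $A$ may learn the hidden bits cheaply from global degree statistics or by sampling light vertices directly. I would first try to defeat this by padding: every pair gets both a light and a heavy candidate endpoint, with identical degree multisets across all choices of the bits. I expect the requirement $D=O(\sqrt n)$ to enter exactly here, in fitting the $\Theta(D^2)$-size hub gadgets plus this padding into $n$ vertices while keeping the degree sequence independent of the hidden bits.
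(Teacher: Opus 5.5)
There is a genuine gap, and it sits in the source of hardness. The hidden per-pair information in your family is which endpoint of the critical edge is light, and a single degree query reveals that. Any algorithm can perform your tailored algorithm's first step, reading all degrees with $O(n)$ queries, and then scan every list of length $O(1)$ exhaustively. This finds every critical edge that has a light endpoint, whatever the hidden bits are.

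Conversely, consider a critical edge whose two endpoints both have lists of length $\Theta(D)$. Your degree-counting certificate tells the tailored algorithm only that one non-hub edge exists, not where it sits. In a uniformly random order it still needs $\Theta(D)$ expected queries to reach that edge. It must reach it, since lowering the weight of an existing unseen edge is undetectable.

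So $\tau=O(n)$ forces every pair to have a light endpoint. But then your step ``$A$ must resolve a constant fraction of the pairs, at cost $\Omega(D)$ each'' is false. Padding so that the global degree multiset is independent of the bits does not repair this, because the leak is local: it happens at the endpoints $A$ itself inspects.

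The missing idea is advice that is structural, meaning invariant under the random orderings and vertex sampling, yet invisible to degree queries. The paper uses a ``lucky number'' for this. There is a directed path of $\Theta(m/n)$ vertices with identical degrees ending at $s$ (and one starting at $t$). The lucky number is the distance along this path to the special vertex $v^*$ (resp.\ $u^*$), whose hub edge $(v^*,p)$ is replaced by $(v^*,p')$ (resp.\ $(q,u^*)$ by $(q',u^*)$). In $G$ there is no $s$--$t$ path. The tailored algorithm walks to $v^*$ and $u^*$ and scans their lists. It then certifies unreachability by comparing the in-degrees of $V_1$ (and out-degrees of $V_4$) with counts, all in $O(m/n)$ queries.

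Now take $G'$, obtained by removing $(v^*,p'),(x,p),(q,y),(q',u^*)$ and adding $(v^*,p),(x,y),(q,u^*)$. This preserves all degrees except those of $p'$ and $q'$, and creates the path $s\to x\to y\to t$. A generic algorithm cannot distinguish $G$ from $G'$ unless it hits one of $\Theta((m/n)^2)$ equally likely hidden positions, or finds $p'$ or $q'$.

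So the separation is $O(m/n)$ versus $\Omega((m/n)^2)$, not $O(n)$ versus $\Omega(m)$. The hypothesis $m=O(n\sqrt n)$ is not about fitting gadgets into $n$ vertices, as you guessed. It enters because $p'$ and $q'$ can be found by uniform vertex sampling in $O(n)$ queries, which is harmless only when $(m/n)^2=O(n)$.
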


\begin{proof}
    The case $m<n$ is trivial, so we assume that $m\ge n$.
    Since $m=O(n\sqrt n)$, we have $m/n=O(\sqrt n)$.

    We construct the graph $G=(V,E)$ shown in~\cref{fig:directed-oblivious}.
    The graph has several components: $V = \{s,t, p', q'\} \cup V_1 \cup V_2 \cup V_3 \cup V_4 \cup V_5 \cup V_6 \cup V_7$.
    Let $|V_1|=|V_2|=|V_3|=|V_4|=|V_5|=|V_6|=\Theta(m/n)$ and $|V_7|=n-\Theta(m/n)$, (not pictured) so that $|V|=n$.
    Pick vertices $p \in V_1$, $x \in V_2$, $y \in V_3$, $q \in V_4$, $v^* \in V_5$, and $u^* \in V_6$.
    The vertices in $V_5\cup\{s\}$ form a path ending at $s$, and the vertices in $\{t\}\cup V_6$ form a path starting at $t$.
    Moreover, $s$ has an outgoing edge to every vertex in $V_2$, and every vertex in $V_3$ has an outgoing edge to $t$.
    Every vertex in $V\setminus(\{s,p',q',t\}\cup V_1 \cup V_6)$ has an edge to every vertex in $V_1$, apart from the edge $(v^*,p)$ which is left absent.
    Similarly, every vertex in $V_4$ has an edge to every vertex in $V\setminus(\{s,p',q',t\}\cup V_4 \cup V_5)$, apart from the edge $(q, u^*)$ which is left absent.
    Finally, edges $(v^*, p')$ and $(q', u^*)$ are present.
    The number of edges is $\Theta(m)$.
    To construct the modified graph $G'$ from $G$, remove (red dashed) edges $\{(v^*,p'), (x,p), (q,y), (q',u^*)\}$ and add (green) edges $\{(v^*,p), (x,y), (q,u^*)\}$.
    
    \begin{figure}[!h]
        \centering
        \includegraphics[width=0.6\linewidth]{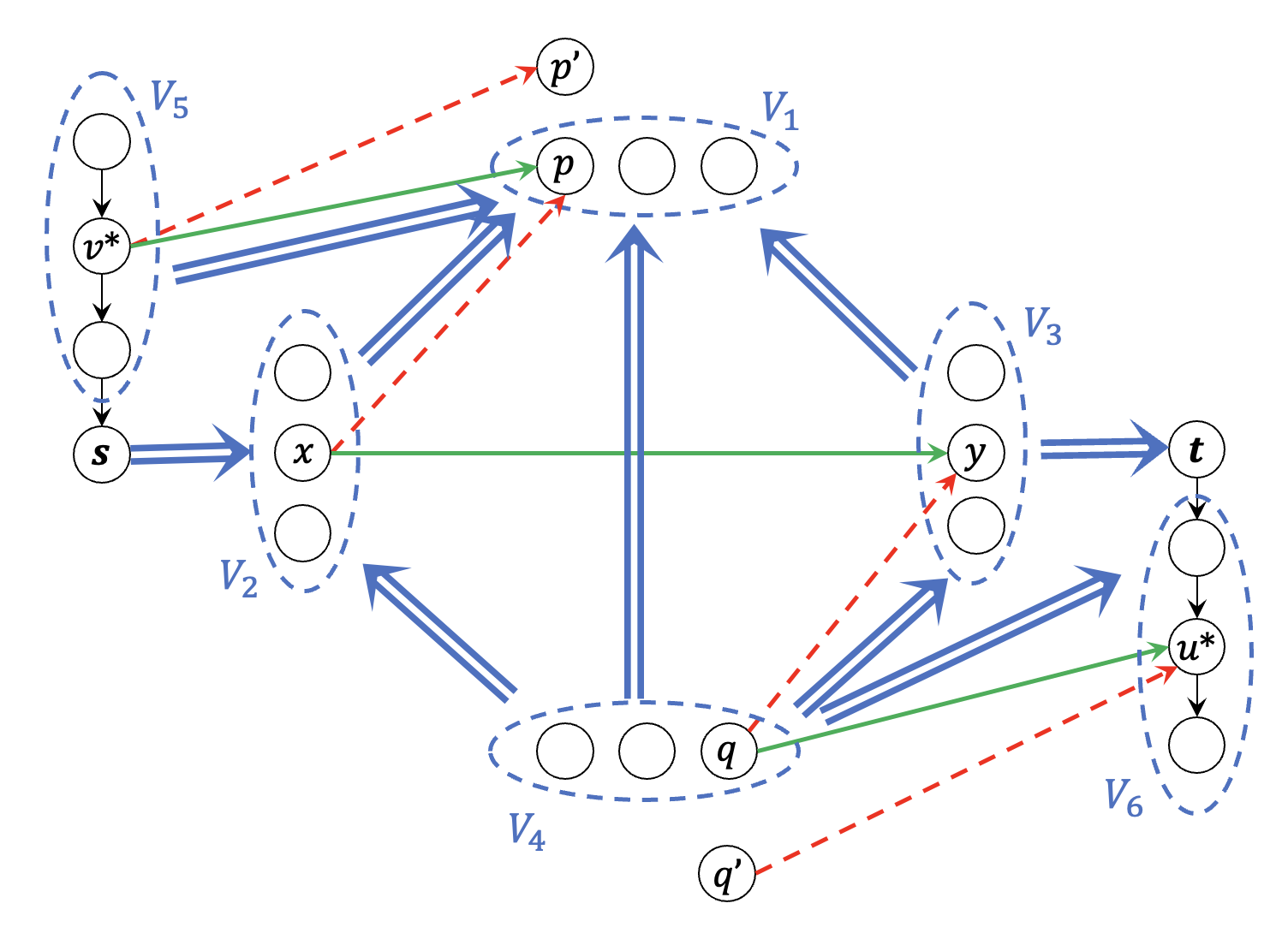}
        \vspace{-2mm}
        \caption{Lower-bound construction for directed graphs in the order-oblivious model. A blue block arrow from a set $X$ to a set $Y$ indicates that every vertex in $X$ has a directed edge to every vertex in $Y$, except that the edges $(v^*,p)$ and $(q,u^*)$ are absent from $G$. Red dashed edges are present in $G$ but absent from $G'$, while green edges are present in $G'$ but absent from $G$.} 
        \label{fig:directed-oblivious}
    \end{figure}

    An instance-smart algorithm can verify that $t$ is not reachable from $s$ in $G$ as follows.
    Identify $V_1$ as the set of out-neighbors of an arbitrary out-neighbor of $s$.
    Observe that every vertex in $V_1$ has in-degree $n-4-|V_1|-|V_6|$ with no in-edges from $\{s,t,p',q'\}\cup V_1 \cup V_6$, other than one vertex, $p$, which has in-degree $n-5-|V_1|-|V_6|$ due to the additional missing in-edge from $v^*$. 
    Walk to $v^*$ from $s$ by following the unique in-edge for a number of steps known to the instance-smart algorithm.
    Query every out-neighbor of $v^*$ to find, that it has no out-neighbor in $V_1$.
    Furthermore, one of these out-neighbors will be $p'$, identifiable as the one outside $V_1 \cup V_5 \cup \{s\}$.
    Similarly find $q'$.
    Observe that no out-edge from $\{s,p',q',t\}\cup V_1 \cup V_6$ goes to $V_1$, and recall that we have observed that $(v^*,p)$ is absent.
    Combining this with the degrees of the vertices in $V_1$ certifies, that every vertex outside $\{s,p',q',t, v^*\}\cup V_1 \cup V_6$ has $|V_1|$ edges to $V_1$.
    Finally, walk to every out-neighbor of $s$, i.e.\ every vertex in $V_2$, and observe that all out-degrees are $|V_1|$.
    This certifies that all out-edges of $V_2$ go to $V_1$, which we know has no out-edges by querying out-degrees.
    This certifies that no edge going out of $\{s\} \cup V_2 \cup V_1$ exists, and thus, that no path from $s$ to $t$ exists.
    This can all be done with instance-complexity $O(m/n)$. 

   Note that it is crucial that the instance-smart algorithm ``knows'' exactly how many steps to walk from $s$ along the path before reaching $v^*$ (or, symmetrically, from $t$ before reaching $u^*$). This number of steps acts as a kind of “lucky number” tailored to this particular instance. Regardless of whether the input graph is $G$ or not, the instance-smart algorithm can simply walk this many steps along the path and then check whether it encounters the edge $(v^*,p')$ (or $(q', u^*)$). On the instance $G$, this allows the instance-smart algorithm to run in only $O(m/n)$ queries.

    In contrast, an algorithm that aims to be instance-optimal on all instances cannot rely on such instance-specific ``lucky numbers.'' Knowing a number that is useful only for a single instance does not help improve the instance-optimality ratio, since the algorithm must, on every instance, achieve a running time no more than a factor of the instance-optimality ratio larger than that of every instance-smart algorithm tuned to the corresponding instance.
    As a result, such an algorithm needs to consider the choices of vertices $p,x,y,q,v^*,u^*$ from their respective sets as adversarial. This is essentially equivalent to assuming that $p,x,y,q,v^*,u^*$ are chosen uniformly at random from their respective sets.
    In the following, we show that any such algorithm requires $\Omega((m/n)^2)$ queries on this construction of $G$.

    Consider the execution of the algorithm on $G$, where the vertices $p,x,y,q,v^*,u^*$ are chosen uniformly at random from their respective sets and the vertex and edge orders are chosen uniformly at random.
    Call a query \emph{revealing} if it visits one of the four dashed red edges $\{(v^*,p'), (x,p), (q,y), (q',u^*)\}$, or finds one of the special vertices $p'$ and $q'$.
    If no revealing query is made, then the algorithm receives exactly the same answers on $G$ and on $G'$.
    
    We now bound the probability that a revealing query happens before time $c(m/n)^2$ for some small positive constant $c$.
    Condition on everything the algorithm has seen so far, assuming no revealing query has happened.
    Then $(x,p)$ is still hidden among $\Theta((m/n)^2)$ possible edges from $V_2$ to $V_1$, except for the positions already ruled out by previous queries.
    As long as fewer than $c(m/n)^2$ queries have been made, the next query therefore hits $(x,p)$ with probability $O((n/m)^2)$.
    The same argument applies to $(q,y)$.
    
    For the edge $(v^*,p')$, conditioned on not having found $p'$, the algorithm has to find both the special vertex $v^*\in V_5$ and the position of $p'$ in its outgoing adjacency list.
    Thus $(v^*,p')$ is again hidden among $\Theta((m/n)^2)$ possible remaining choices.
    The edge $(q',u^*)$ is symmetric.
    
    It remains to account for finding $p'$ or $q'$.
    Finding them by direct vertex queries takes $\Omega(n)$ queries in expectation.
    Since $m=O(n\sqrt n)$, we have $(m/n)^2=O(n)$, so this is also $\Omega((m/n)^2)$.
    Finding them through an incident edge is already covered by the same argument as above, other than as a neighbor of $V_1$ or $V_2$, but every vertex in $V_1$ and $V_2$ has relevant degree $\Omega(n)$, so this is no better.
    
    Therefore, at any time before $c(m/n)^2$ queries, conditioned on no earlier revealing query, the next query is revealing with probability $O((n/m)^2)$.
    Summing over the first $c(m/n)^2$ queries, the probability that the algorithm makes any revealing query is $O(c)$.
    Choosing $c>0$ small enough, this probability is less than $1/10$.
    
    With probability at least $9/10$, the algorithm therefore sees the same query answers on $G$ and on $G'$, and makes the same output.
    However, $G$ has no path from $s$ to $t$, while $G'$ does.
    This contradicts that the algorithm is good.
    So every good algorithm must make $\Omega((m/n)^2)$ queries in expectation on some instance in this family.
\end{proof}

Above, in~\Cref{thm:directed-oblivious-m-any}, we establish an $\Omega(\lceil m/n \rceil)$ lower bound on the instance-optimality ratio of every algorithm, under the assumption that $m/n=O(\sqrt{n})$. 
Now, restricting our attention to bidirectional Dijkstra, we show stronger lower bounds by removing the assumption. 

First, when parameterizing in $n$ and $m$, we get the following bound on bidirectional Dijkstra.

\begin{theorem}
\label{thm:directed-oblivious-m}
    In the order-oblivious model, the instance-optimality ratio of \Cref{alg:bi-dj-new} is $\Omega(\lceil m/n \rceil)$ on directed graphs.
\end{theorem}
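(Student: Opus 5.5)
The plan is to use a single explicit family of instances. On each instance, \Cref{alg:bi-dj-new} is forced to relax $\Theta(m)$ edges, while an instance-smart algorithm certifies the answer with $O(n)$ queries. No good-algorithm argument via Constructions~(\hyperref[cons:a]{a})--(\hyperref[cons:c]{c}) is needed: this is a lower bound for bidirectional Dijkstra only, so it suffices to upper bound $\tau(G)$ by one concrete certifying algorithm. We may assume $m\ge n$ and set $k=\Theta(m/n)\le n/10$.

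\emph{Construction.} Take $V=\{s,t\}\cup A\cup C\cup V_1\cup V_2\cup R$, with $|A|=|C|=\Theta(n)$, $|V_1|=|V_2|=k$, and $R$ the remaining vertices.
\begin{itemize}
\item There is a direct edge $(s,t)$ of weight $3$. The source $s$ has weight-$1$ edges to every vertex of $A$, and every vertex of $C$ has a weight-$1$ edge to $t$.
\item Every vertex outside $V_1\cup\{t\}$ has an edge of weight $1$ to every vertex of $V_1$. In particular every $a\in A$ has out-edges only to $V_1$, and vertices of $V_1$ have out-degree $0$.
\item Symmetrically, every vertex outside $V_2\cup\{s\}$ receives an edge of weight $1$ from every vertex of $V_2$. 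So every $c\in C$ has in-edges only from $V_2$, and $V_2$ has in-degree $0$.
\end{itemize}
The graph is simple, has $\Theta(nk)=\Theta(m)$ edges, and satisfies $d(s,t)=3$, since no path leaves $V_1$ or enters $V_2$.

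\emph{Cost of bidirectional Dijkstra.} First relax $(s,t)$, which gives $\mu=3$. After $s$ and $t$ are closed, every $u_s\in A$ has $\hat d(s,u_s)=1$ and every $u_t\in C$ has $\hat d(u_t,t)=1$, so the termination test $2\ge 3$ fails. Vertices of $V_1$ get distance $2$ and are closed only after all of $A$, because Dijkstra closes vertices in nondecreasing distance order by \Cref{lem:invariant_dijkstra}. Hence the forward search must relax all $\Theta(n)\cdot k$ out-edges of $A$ before the stopping condition can be met. The backward search does the same with the in-edges of $C$, and it runs in lockstep with the forward search by \Cref{lem:time_bj}. The key check here is that $\mu$ never drops below $3$, so the sum $\hat d(s,u_s)+\hat d(u_t,t)$ stays at $2<\mu$ throughout this phase. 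This holds for every edge order, so the cost is $\Theta(m)$ in the order-oblivious model too.

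\emph{Instance-smart certificate in $O(n)$ queries.} The algorithm proceeds as follows.
\begin{enumerate}
\item Query $\outdeg$ and $\indeg$ of all $n$ vertices.
\item Learn $V_1$ by listing the $k$ out-neighbours of one $A$-vertex. To find such a vertex, read out-neighbours of $s$ until it sees a vertex of out-degree $k$ and in-degree $1$ (constant expected queries, since $|A|=\Theta(n)$). Learn $V_2$ symmetrically through $C$ and $t$.
\item Read all out-edges of $s$ and all in-edges of $t$, which costs $O(n)$.
\end{enumerate}
Each $v\in V_1$ has in-degree $n-k-1$, which certifies that $v$ receives an edge from every vertex except $t$ and $V_1$. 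Then every $a\in A$, having out-degree exactly $k$, sends all its edges to $V_1$, and every vertex of $V_1$ has out-degree $0$. So the vertices reachable from $s$ other than $t$ are confined to $\{s\}\cup A\cup V_1$, and the only edge from this set into $t$ is $(s,t)$. By the symmetric argument on the $t$ side, $d(s,t)=3$ is certified. The algorithm must only be correct on all inputs: it verifies these degree facts and otherwise falls back to plain Dijkstra.

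Thus $\tau(G)=O(n)$, and the ratio is $\Omega(m/n)$. The main obstacle I expect is the step verifying that the degree-counting argument is airtight in the order-oblivious model. Specifically, identifying $V_1$ and the $A$-vertices must not depend on vertex indices, and the count $n-k-1$ must really exclude every alternative edge placement; that is why I make $V_1$ adjacent from all vertices, as in the proof of \Cref{thm:directed-fixed-lower}.
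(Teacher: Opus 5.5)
Your proof is correct, but it uses a different instance from the paper's. The paper reuses the order-dependent construction of \Cref{thm:directed-fixed-lower} with $|V_1|,\dots,|V_4|=\Theta(\sqrt m)$ and replaces the $V_3$--$V_4$ matching by a single edge $(u,v)$. Its instance-smart algorithm reads all degrees, spots $u$ and $v$ by their anomalous degrees, and scans only $u$'s out-list. Bidirectional Dijkstra pays $|V_3||V_1|=\Theta(m)$ because it must finish a distance-$1$ layer of $\Theta(\sqrt m)$ vertices of degree $\Theta(\sqrt m)$.

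Your distance-$1$ layer instead has $\Theta(n)$ vertices of out-degree $k=\Theta(m/n)$, all feeding a sink $V_1$. The in-degree $n-k-1$ of $V_1$ certifies that every other vertex points into it, and the answer is simply the direct edge $(s,t)$. This makes the certificate simpler: there is no special vertex to locate and no filler component $V_5$ to verify. In fact the $s$-side alone already shows that $(s,t)$ is the only $s$--$t$ path, so your $t$-side certificate is redundant. It also gives the $m$-bound directly. The paper's instance has maximum degree $\Theta(\sqrt m)$, so the same proof also yields the $\Omega(\lceil\Delta^2/n\rceil)$ bound of \Cref{thm:directed-oblivious}; yours has maximum degree $\Theta(n)$ and does not.

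A few details need fixing, though none affects the conclusion:
\begin{itemize}
\item Since $s\notin V_1\cup\{t\}$, your own rule gives $s$ weight-$1$ edges into $V_1$, so $V_1$ is at distance $1$, not $2$. This is harmless: $V_1$ has out-degree $0$, and all of $A$ must still be closed before $t$ (distance $3$).
\item Vertices of $A$ have in-degree $k+1$ (from $s$ and all of $V_2$), not $1$. Your step~2 search as written never succeeds and would fall back to plain Dijkstra, so identify $A$-vertices by out-degree $k$ instead. The same slip occurs on the $C$ side, where out-degrees are $k+1$.
\item Under a random order, $(s,t)$ need not be relaxed first. It is relaxed before the forward search leaves $s$, which is all you need.
\item The in-degree count $n-k-1$ on $v\in V_1$ certifies that every vertex outside $V_1\cup\{t\}$ points to $v$ only together with the checked facts that $t$ and every vertex of $V_1$ have out-degree $0$, so none of them can be an in-neighbour. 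You obtain these facts in step~1, but you should state them explicitly.
\end{itemize}
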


As in the previous section, this result will follow from the proof of the following bound, which is parameterized in $n$ and the maximum in- and out-degree $\Delta$.

\begin{theorem}
\label{thm:directed-oblivious}
    In the order-oblivious model, the instance-optimality ratio of \Cref{alg:bi-dj-new} is $\Omega(\lceil\Delta^2/n\rceil)$ on directed graphs.
\end{theorem}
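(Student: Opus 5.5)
We construct, for each $\Delta\le n$, a directed instance $G$ with maximum degree $\Theta(\Delta)$ on which some instance-smart good algorithm makes $O(n)$ queries, while bidirectional Dijkstra makes $\Omega(\Delta^2)$ queries in expectation. This gives a ratio of $\Omega(\Delta^2/n)$. For $\Delta=O(\sqrt n)$ the claim is trivial, since the ratio is always at least $1$, so we assume $\Delta\ge\sqrt n$.

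\textbf{Choice of instance.} We adapt the construction from the proof of \Cref{thm:directed-fixed-lower} and keep its skeleton.
\begin{itemize}
\item Take $s$, $t$, sets $V_1,\dots,V_4$ of size $\Theta(\Delta)$, and a filler set $V_5$ forming a path out of $t$.
\item Every eligible vertex has an edge into every vertex of the sink set $V_1$. Every eligible vertex has an edge from every vertex of the source set $V_2$.
\item $s$ has weight-$1$ edges to $V_3$, and $t$ has weight-$1$ edges from $V_4$.
\item There are $\Theta(\Delta^2)$ ``candidate'' edges from $V_3$ to $V_4$, each of weight $1$.
\end{itemize}
In the order-dependent setting, the smart algorithm used the \emph{position} of the single relevant edge in each list. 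In the order-oblivious model positions are random, so the smart algorithm must instead learn structure from degrees. We therefore make the $V_3\to V_4$ part a complete bipartite graph; this does not hurt the smart algorithm, because it never needs to look at these edges individually.

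The certificate is degree-based. We arrange all edges out of $V_3$ except those into $V_1$ to go into $V_4$, and all edges into $V_4$ except those from $V_2$ to come from $V_3$. We then make the answer certifiable without reading the $V_3\times V_4$ edges. Concretely, $d(s,t)$ is witnessed by one explicit light path $s\to a\to b\to t$ through a dedicated pair $a\in V_3$, $b\in V_4$. The edge $(a,b)$ has the smallest weight, and all other $V_3\times V_4$ edges have weight at least that of $(a,b)$. It is cleaner, however, to use the reachability-style version of \Cref{thm:directed-oblivious-m-any}: the smart algorithm checks in-degrees of $V_1$ against $n$, as in that proof, and uses counting to certify that no other short path exists.

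\textbf{Smart algorithm, $O(n)$ queries.} Query all neighbours of $s$ and $t$ and all degrees of the vertices in $V_1,\dots,V_4$ and $V_5$. Then use the counting argument of the previous section: the in-degree of each $V_1$ vertex equals the number of eligible vertices, and the out-degree of each $V_2$ vertex equals it as well. Together these show that every out-edge of a $V_3$ vertex beyond its $|V_1|$ edges into $V_1$ lands in $V_4$, and symmetrically for $V_4$. To compare lengths, the construction must let degree information alone bound all $V_3\to V_4$ weights. Since weights are arbitrary positive reals and cannot be bounded by degrees, the main design step is to ensure the smart certificate never needs these weights. We achieve this by replacing the direct weight-$1$ edges with a structure where any $s$--$t$ route through $V_3\times V_4$ has length at least $d(s,t)$ automatically: give $s\to V_3$ and $V_4\to t$ weight $1$, and add a separate verified path of length $2$.

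\textbf{Bidirectional Dijkstra, $\Omega(\Delta^2)$ queries.} With this structure, $V_3$ and $V_4$ lie at distance $1$ from $s$ and $t$, strictly below half of the relevant threshold. Hence by \Cref{lem:invariant_dijkstra}(b), every vertex of $V_3$ is forward-closed and every vertex of $V_4$ is backward-closed before termination. All their adjacency lists, of total size $\Theta(\Delta^2)$, are therefore scanned, regardless of order. This gives $\Omega(\Delta^2)$ queries by \Cref{lem:time_bj}.

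\textbf{Main obstacle.} The hard part is guaranteeing that the smart algorithm's $O(n)$ certificate is valid: every good algorithm must rule out a shorter path. The plan is to take from \Cref{thm:directed-oblivious-m-any} the idea of making $s$ and $t$ disconnected (or connected only through a known path), with distance certified by degree counting. The single ``lucky'' missing edge should be placed at a filler vertex reached by walking a known number of steps, so that only $O(n)$ queries are needed. The bidirectional-Dijkstra bound does not need adversarial hiding, since it follows deterministically from closing all of $V_3$ and $V_4$.
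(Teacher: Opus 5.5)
Your proposal has a genuine gap. The instance is never fixed, and the version you commit to gives no separation: complete bipartite $V_3\times V_4$, weight-$1$ edges $s\to V_3$ and $V_4\to t$, plus a separate $s$--$t$ path of length $2$. The two halves of your argument need incompatible things.

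\begin{itemize}
\item Your Dijkstra bound needs $d(s,a)+d(b,t)<d(s,t)$ for all $a\in V_3$, $b\in V_4$; this is what forces all of $V_3$ and $V_4$ to be closed, cf.\ \Cref{lem:dist-st}. But then lowering the weight of any edge $(a,b)$ that an algorithm reads with small probability creates a shorter path (Construction (a)). So every good algorithm, including your instance-smart one, must read each of the $\Theta(\Delta^2)$ edges of $V_3\times V_4$ with constant probability.
\item With the extra length-$2$ path you instead have $d(s,a)+d(b,t)=1+1=d(s,t)$. That is exactly half the threshold on each side, not strictly below it. The stopping rule $\hat d(s,u_s)+\hat d(u_t,t)\ge\mu=2$ then fires as soon as both searches have scanned $s$ and $t$ and selected vertices at distance $1$, so bidirectional Dijkstra makes only $O(\Delta)$ queries.
\item The other variants you float fail the same way. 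A dedicated minimum-weight pair still requires checking all other $V_3\times V_4$ weights. Making $s$ and $t$ disconnected is incompatible with keeping $s\to V_3$, complete $V_3\to V_4$, and $V_4\to t$ edges.
\end{itemize}

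The missing idea is that the $V_3\to V_4$ part must be sparse. The $\Theta(\Delta^2)$ wasted work of bidirectional Dijkstra should go into edges into $V_1$ and out of $V_2$, which the smart algorithm certifies irrelevant by degree counting, as in \Cref{thm:directed-fixed-lower}. The paper replaces the matching there by a single edge $(u,v)$ with $u\in V_3$ and $v\in V_4$.

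\begin{itemize}
\item \emph{Smart algorithm.} $u$ is the unique vertex of $V_3$ with out-degree $|V_1|+1$, and $v$ the unique vertex of $V_4$ with in-degree $|V_2|+1$. After querying all degrees and all edges at $s$, $t$, $V_5$ ($O(n)$ queries), it only scans the out-list of $u$ ($O(\Delta)$ queries) to read $\ell(u,v)$. The random edge order does not matter, because $u$ is located by its degree rather than by a list position.
\item \emph{Bidirectional Dijkstra.} Here $d(s,t)=2+\ell(u,v)>2$. At termination $\hat d(s,u_s)+\hat d(u_t,t)\ge\mu>2$, so one search has gone past distance $1$. By \Cref{lem:invariant_dijkstra}(b), all of $V_3$ (or all of $V_4$) has then been closed and fully scanned, giving $\Omega(\Delta^2)$ queries.
\end{itemize}

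No ``lucky'' walk along a filler path is needed, since you only have to beat bidirectional Dijkstra, which ignores degrees.
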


\begin{proof}
    Consider the graph in \cref{fig:directed-fixed} again. Now we replace the matching from $V_3$ to $V_4$ with only one edge $(u,v)$ for some $u\in V_3$ and $v\in V_4$. 
      
    The instance-smart algorithm can first query all in-edges and out-edges of vertices in $\{s,t\}\cup V_5$ and the in-degrees and out-degrees of all vertices. Then it can find $u$ and $v$ by their degrees and it also knows that it suffices to find the edge $(u,v)$. So it only needs to query all out-edges of $u$. The total time complexity is $O(n)$.

    On the other hand, bidirectional Dijkstra will query all out-edges of vertices in $V_3$ and all in-edges of vertices in $V_4$, so it spends $\Omega(\Delta^2)$ time.
\end{proof}

\Cref{thm:directed-oblivious-m} follows from the proof of~\Cref{thm:directed-oblivious}, by setting $|V_i|=\Theta(\sqrt m)$ for $i\in[1,4]$. 

Let us again mention that, 
in~\Cref{sec:more-positive}, we will give matching upper bounds on the instance-optimality ratio of bidirectional Dijkstra in both parameterizations, implying that the above lower bounds on the instance-optimality ratio of bidirectional Dijkstra are already tight.

\section{Order-Dependent Instance-Non-Optimality for Undirected Graphs}\label{sec:order-dependent-undirected}

The final setting is undirected graphs in the order-dependent model, where we establish lower bounds on the instance-optimality ratio for all algorithms. In particular, when parameterizing by $n$ and $m$, we obtain the following result.

\begin{theorem}
\label{thm:undirected-fixed-lower-m}
    In the order-dependent model, the instance-optimality ratio of every algorithm is $\Omega(\lceil m/n \rceil)$ on undirected graphs when $m<n^{4/3}$. \end{theorem}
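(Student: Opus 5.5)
The plan is to adapt the order-dependent directed construction from the proof of \Cref{thm:directed-fixed-lower} to undirected graphs. In an undirected graph we cannot make $V_1$ a ``sink'' or $V_2$ a ``source''. Instead we make the padding edges useless for shortest paths by giving them a huge weight. Throughout, set $d=\lceil m/n\rceil$; the case $d=O(1)$ is trivial.

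\textbf{Construction.} Take a set $V_1$ of $d$ ``hub'' vertices, each adjacent to every vertex outside $V_1$, with all hub edges of weight $W\gg 3$. Take sets $V_3$ and $V_4$ of size $k=\Theta(d)$. Make $s$ adjacent to all of $V_3$ and $t$ adjacent to all of $V_4$, all with weight $1$. Add a perfect matching between $V_3$ and $V_4$, also with weight $1$. The remaining $n-\Theta(d)$ vertices form a path hanging off $t$, with their hub edges.

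The edge count is $\Theta(nd)=\Theta(m)$, and $d(s,t)=3$. Any path through a hub costs at least $2W>3$, so hubs are irrelevant to the answer. Each vertex of $V_3$ has degree $d+2$: its $d$ hub edges, the edge to $s$, and one matching edge.

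\textbf{Cheap certification by an instance-smart algorithm.} The instance-smart algorithm proceeds as follows.
\begin{enumerate}
\item It reads the incidence lists of $s$ and $t$, which identifies $V_1$, $V_3$ and $V_4$.
\item It issues degree queries on the hubs. Each hub has degree $n-d$, so the hubs are adjacent to everything outside $V_1$.
\item For each $u\in V_3$ it queries only the known position $i$ of the matching edge in $u$'s list, and checks that the returned neighbor is not a hub and not $s$.
\end{enumerate}
Together with $\deg(u)=d+2$, this certifies that every other edge of $u$ goes to a hub (weight $W$) or to $s$. The same is done for $V_4$. This shows that no $s$--$t$ path shorter than $3$ exists, at total cost $O(d+k)=O(d)$ queries. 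I must double-check one point here: the hub edge weights must be read or bounded. The algorithm can read each hub edge weight once from the hub side, but that requires knowing positions in the hubs' long lists. Alternatively, I would make the weight certification come from the $V_3$ side. This detail needs care.

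\textbf{Lower bound for every good algorithm.} Treat the position of each matching edge inside the lists of $V_3$ (and of $V_4$) as adversarial, i.e.\ uniformly random. Treat the matching itself as a uniformly random bijection. For $G'$, pick $u\in V_3$ and $v\in V_4$ that are not matched to each other, and use a Construction~(\hyperref[cons:b]{b})-style swap: replace the hub edges $(u,h)$ and $(v,h')$ by a light edge $(u,v)$ of weight $\eps$ together with a compensating heavy edge $(h,h')$. This preserves all degrees and lowers $d(s,t)$ to $2+\eps$. To avoid this failure, an algorithm must essentially examine, for each $u\in V_3$, its whole list of length $\Theta(d)$. The resulting $\Omega(kd)=\Omega(d^2)$ lower bound is then argued by a revealing-query argument as in \Cref{thm:directed-oblivious-m-any}: until $c\,d^2$ queries have been made, each query reveals the modified edges with probability $O(1/d^2)$.

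\textbf{Main obstacle.} The hard step is ruling out shortcuts through the hubs' long lists. In the undirected setting the swapped edge $(u,h)$ can also be discovered from $h$'s side, since $h$ has a list of length $n$. An algorithm might also locate $u$ or $v$ by scanning hub lists, or might exploit the compensating edge $(h,h')$. I expect this is exactly where the condition $m<n^{4/3}$, i.e.\ $d^3<n$, enters. Hiding the modification among $k\cdot d$ candidate (vertex, position) pairs on the $V_3$ side, while the hub side offers $n$ positions, should require $d^2\lesssim n/d$ so that scanning hub lists is no cheaper than $d^2$.

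My first attempt would be to place the compensating edges between hubs and a large pool of $\Theta(n)$ filler vertices. Each revealing event from the hub side would then need $\Omega(n/d)\ge\Omega(d^2)$ queries.
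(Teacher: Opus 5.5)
There is a genuine gap in your cheap-certification step. You flagged the hub-weight issue as a detail, but it cannot be repaired within your construction, and it destroys the separation.

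Knowing that $u\in V_3$ has degree $d+2$ and that the hubs are adjacent to everything outside $V_1$ certifies only that the other edges of $u$ go to hubs. It does not certify that they have weight $W$. Take any hub $h$, any $u\in V_3$ and any $v\in V_4$, and lower $\ell(u,h)$ and $\ell(h,v)$ to $\varepsilon$. The adjacency structure, all degrees and all list orders are unchanged, but $s\to u\to h\to v\to t$ now has length $2+2\varepsilon<3$. So every good algorithm, including the instance-smart one, must satisfy the following for each hub $h$: either all $k$ edges from $h$ to $V_3$, or all $k$ edges from $h$ to $V_4$, are visited with probability at least $1/20$. Hence $\tau(G)=\Omega(dk)$. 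This is exactly the order of the $\Omega(kd)$ bound you want for general algorithms, so the ratio is $O(1)$ for every choice of $k$. Reading the weights ``from the $V_3$ side'' does not help, since that side carries the same $kd$ hub edges. The same objection defeats any scheme that makes padding harmless through heavy weights, because unread positive real weights can always be $\varepsilon$.

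The missing idea is to make the padding edges irrelevant \emph{structurally}, using equal weights at $s$ and $t$, so that their weights never need to be read. The paper reuses the construction of \Cref{thm:undirected-fixed-lower}:
\begin{itemize}
\item $s$ is joined with weight $1$ to a clique $V_1$, and $t$ with weight $1$ to a clique $V_2$.
\item $V_2$ is completely joined to $V_1$ and to a filler set $V_3$ that is adjacent only to $V_2$.
\item The remaining vertices hang off $t$ as a path.
\end{itemize}
Because all edges at $s$ (respectively at $t$) have the same weight, no edge inside $V_1$, inside $V_2$, or between $V_2$ and $V_3$ can lie on a shortest path. The instance-smart algorithm certifies this from degree queries and the lists of $s$ and $t$ alone. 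The only relevant edges are the $|V_1||V_2|=\Theta(n)$ edges of $V_1\times V_2$. The instance-smart algorithm reads them at their known positions in $O(n)$ queries. Any other algorithm must dig them out of lists of length $\Theta(|V_1|)$ on one side and $\Theta(|V_3|)$ on the other.

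Choosing $|V_1|=\Theta(\sqrt m)$, $|V_2|=\Theta(n/\sqrt m)$ and $|V_3|=\Theta(m^{3/2}/n)$ gives $\Theta(m)$ edges and a lower bound of $\Omega(\min\{|V_1|^2,|V_2||V_3|\})=\Omega(m)$, hence ratio $\Omega(m/n)$. The hypothesis $m<n^{4/3}$ enters simply because $|V_3|\le n$ is required. It does not come from hub-side scanning as you conjectured.
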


This result will essentially follow from the proof of the following bound, which is parameterized in $n$ and the maximum in- and out-degree $\Delta$.

\begin{theorem}\label{thm:undirected-fixed-lower}
    In the order-dependent model, the instance-optimality ratio of every algorithm is $\Omega(\lceil\Delta^{2/3}/n^{1/3} \rceil)$ on undirected graphs.
\end{theorem}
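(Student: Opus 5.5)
The plan follows the template of the proof of \Cref{thm:directed-fixed-lower}. I would build an undirected instance in which $s$ has weight-$1$ edges to a set $V_3$ of $a$ vertices and $t$ has weight-$1$ edges to a set $V_4$ of $a$ vertices. There is a perfect matching between $V_3$ and $V_4$ of heavy weight, so that a light $V_3$--$V_4$ edge would change $d(s,t)$. Each vertex of $V_3\cup V_4$ gets $\Theta(\Delta)$ further ``filler'' edges whose endpoints can never lie on a short $s$--$t$ path. The instance-smart algorithm knows the index of the matching edge in each adjacency list, so it pays $O(1)$ per vertex of $V_3\cup V_4$. Any algorithm that must be good under adversarial orders pays $\Omega(\Delta)$ per vertex in expectation, by the same adversarial-position argument as in \Cref{thm:directed-fixed-lower}. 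This gives cost $\Omega(a\Delta)$ against roughly $O(n+\text{certification cost})$.

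The new difficulty compared with the directed case is certifying the filler. In the directed construction the filler set $V_1$ had no out-edges, so it was a dead end. In an undirected graph every filler neighbour can be walked through, so it could lead back to $V_4$ or $t$. I would make the filler a set $C$ of size $k$. The edges between $V_3\cup V_4$ and $C$ are arranged so that every vertex of $C$ is adjacent to all of $V_3\cup V_4$ plus a small clique structure inside $C$. The smart algorithm reads all $n$ degrees and scans the full adjacency lists of $C$ at cost $O(k\cdot\deg_C)$. Comparing the degree counts then shows that $C$ has no edges leaving $C\cup V_3\cup V_4\cup\{s,t\}$, and that every $C$-neighbour is at distance at least the matching weight from $t$ (via weights chosen large on $C$-edges). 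The three parameters $a$, $k$, $\Delta$ are constrained by $\Delta\approx k$ (the filler degree of $V_3$ vertices), by $\deg_C\approx a$, and by the vertex budget. The smart cost is $O(n+ka)$ and the generic cost is $\Omega(a\Delta)$. Balancing so that $ka\lesssim n$ fails to give a gain directly. So I would instead reuse each filler vertex for only $a'$ of the $V_3$-vertices in a blocked design and optimize. The target is $a$ with $a\Delta/\max(n,\text{cert})=\Delta^{2/3}/n^{1/3}$, e.g.\ $a=n^{2/3}\Delta^{-1/3}$, $k=\Delta$, and certification cost $\Theta(n)$, which is where the cube-root exponent should come from.

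The lower bound for every algorithm, not just for bidirectional Dijkstra, then follows the framework of \Cref{sec:framework}. I would randomize which matching pair is light or which filler edge is swapped, as in \Cref{thm:directed-oblivious-m-any}. An algorithm making $o(a\Delta)$ queries misses the modified edge with probability above $1/10$, and so it cannot distinguish $G$ from a degree-preserving $G'$ (Construction~(\hyperref[cons:b]{b})) with smaller $d(s,t)$. I expect the main obstacle to be the certification step: designing the filler so that degree information alone, read in $O(n)$ queries, rules out all alternative short paths through the filler in an undirected simple graph, while still forcing generic algorithms to scan $\Theta(\Delta)$ per vertex. I would first try the blocked bipartite design with large weights on the filler edges and check that degree-preserving swaps remain undetectable to generic algorithms.
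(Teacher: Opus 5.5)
There is a genuine gap. The problem is not only that the ``blocked design'' is left unspecified. The skeleton you fix in the first paragraph cannot beat the generic cost on an undirected graph, whatever filler structure or weights you choose. That skeleton is: a perfect matching as the only $V_3$--$V_4$ edges, $\Theta(\Delta)$ filler at every vertex of $V_3\cup V_4$, and an instance-smart algorithm paying $O(1)$ per vertex.

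To see why, take unmatched $v\in V_3$ and $v'\in V_4$, and filler edges $(v,c)$ and $(c',v')$ that the smart algorithm visits with probability below $1/10$. There are three cases.
\begin{itemize}
\item If $c=c'$, lowering the weights on the path $v,c,v'$ is Construction~(\hyperref[cons:a]{a}).
\item If $(c,c')\notin E$, the degree-preserving swap to $\{(v,v'),(c,c')\}$ with $(v,v')$ light is Construction~(\hyperref[cons:b]{b}). It plants a short $s$--$t$ path at list positions the smart algorithm does not read, and degree queries cannot see it.
\item If $(c,c')$ exists but is rarely read, the path $v,c,c',v'$ again gives Construction~(a).
\end{itemize}
So $(c,c')$ must exist and be read with constant probability; this is exactly \Cref{clm:undirected-good-1}. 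Two outcomes follow. Either on one side all but at most one vertex has most of its $\Theta(\Delta)$ filler edges read, which costs $\Omega(a\Delta)$. Or some unmatched pair keeps $\Omega(\Delta)$ rarely-read filler edges each, and then $\Omega(\Delta^2)\ge\Omega(a\Delta)$ filler--filler edges must be read (note $a\le\Delta$, since $s$ has degree $a$). Either way the smart cost equals the generic cost.

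In the directed proof of \Cref{thm:directed-fixed-lower}, the blocking edge from $V_2$ to $V_1$ exists but cannot be traversed in the useful direction, so it need not be read. In an undirected graph it can be traversed, and that is what defeats the matching idea. ``Large weights on $C$-edges'' do not help, because a good algorithm cannot rely on weights it has not read.

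The missing idea is to make the cross edges themselves the blocking edges. The paper joins a clique $V_1$ (unit edges to $s$) \emph{completely} to a clique $V_2$ (unit edges to $t$). Then for every pair of rarely-read filler edges, the edge $(u_1,v_2)$ already exists and is read. This forces $|V_1||V_2|=\Theta(n)$, and the smart algorithm reads all $n$ cross edges at their known positions rather than $O(1)$ per vertex.

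Certification uses degree queries only, never scans of filler lists. $V_2$ is universal: its degree is $n-2-|V_4|$, so it is adjacent to everything except $s$ and the $t$-path $V_4$. This certifies that a pendant set $V_3$ of size $\Theta(\Delta)$ is adjacent only to $V_2$ and is therefore a dead end. It also certifies that the remaining neighbours of each $V_1$-vertex lie in $V_1$. The unit weights at $s$ and $t$ make the clique edges irrelevant.

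A generic algorithm facing adversarial orders must find the cross edges. Doing so from the $V_1$ side costs about $|V_1|^2$, and from the $V_2$ side about $|V_2|\Delta$. Balancing these under $|V_1||V_2|=n$ gives $|V_1|=n^{1/3}\Delta^{1/3}$, $|V_2|=n^{2/3}\Delta^{-1/3}$, and ratio $\Delta^{2/3}/n^{1/3}$. Your $a$ and $k$ coincide with $|V_2|$ and $|V_3|$. However, the opposite side must be this larger clique fully joined to $V_2$, not a matched copy of size $a$.
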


In~\Cref{sec:more-positive}, we will show matching upper bounds for bidirectional Dijkstra, in both parameterizations.
Consequently, bidirectional Dijkstra is the best possible algorithm in terms of instance-optimality ratio on undirected graphs in the order-dependent model.

\begin{proof}[Proof of \Cref{thm:undirected-fixed-lower}]
    For any $\Delta\in[\sqrt{n},n]$, we construct a graph with $n$ vertices and maximum degree $\Theta(\Delta)$ as follows. 

    Analogous to the directed case, the graph consists of several components: $V=\{s,t\}\cup V_1\cup V_2\cup V_3\cup V_4$, where $|V_1|=\Theta(n^{1/3}\Delta^{1/3}),|V_2|=\Theta(n^{2/3}/\Delta^{1/3}),|V_3|=\Theta(\Delta)$, and $|V_4|=n-2-|V_1\cup V_2 \cup V_3|$ so that the number of vertices in the graph is $n$. The edges are defined as follows. The target vertex $t$ together with $V_4$ forms a path starting from $t$. The source vertex $s$ (respectively, the target vertex $t$) is connected to every vertex in $V_1$ (respectively, $V_2$) by an edge of weight $1$. Every vertex in $V_2$ is connected to every vertex in both $V_1$ and $V_3$. 
    In addition, every pair of vertices within $V_1$ and within $V_2$ is connected by an edge. 
    As a result, the degree of each vertex in $V_2$ is $n-2-|V_4|$, since it is adjacent to every vertex in the graph except itself and the vertices in $s\cup V_4$. 
    A sketch of the construction is shown in~\Cref{fig:undirected-fixed}.
    
     \begin{figure}[!h]
        \centering
        \includegraphics[width=0.6\linewidth]{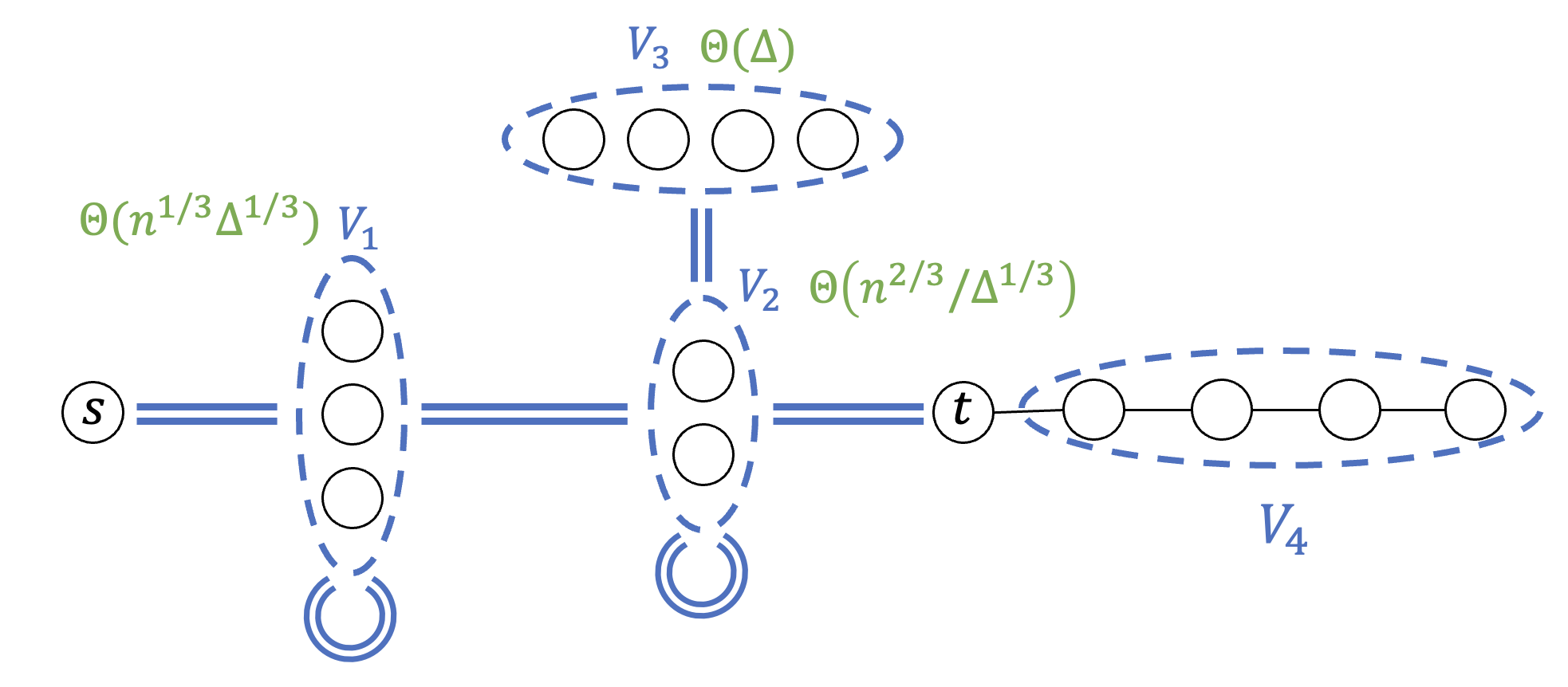}
        \vspace{-1mm}
        \caption{Lower-bound construction for undirected graphs in the order-dependent model. A blue block line between two vertex sets indicates that every vertex in one set is connected by an edge to every vertex in the other set. A circular blue block line around a vertex set indicates that every pair of vertices within the set is connected by an edge. 
        The green numbers indicate the sizes of the corresponding vertex sets.}
        \label{fig:undirected-fixed}
    \end{figure}

An instance-smart algorithm first checks all edges incident to $s$ and $t$, thereby identifying the sets $V_1$, $V_2$, and $V_4$. It then verifies all edges incident to the vertices in $V_4$, confirming that they are disconnected from the remaining vertices. It also confirms that $s$ is not adjacent to any vertex in $V_2$, and $t$ is not adjacent to any vertex in $V_1$.

Next, the algorithm queries the degree of each vertex in $V_2$. Comparing these degrees with the total number of vertices $n$, together with the fact that neither $s$ nor any vertex in $V_4$ is adjacent to $V_2$, it concludes that every vertex in $V_2$ is adjacent to every vertex in the graph except itself and the vertices in ${s}\cup V_4$. 

While examining the edges incident to $t$, the algorithm also observes that all of them, except the edge to $V_4$, have the same weight. Since every neighbor of $t$ belongs to either $V_2$ or $V_4$, and no path from $t$ to $s$ can pass through $V_4$, it concludes that the edge weights within $V_2$ are irrelevant to the $s$–$t$ distance.

The algorithm then queries the degree of each vertex in $V_3$. Comparing these degrees with $|V_2|$, and using the fact that every vertex in $V_2$ is adjacent to every vertex in the graph except $t$ and the vertices in $V_4$, it concludes that every vertex in $V_3$ is adjacent only to vertices in $V_2$. Hence, the edge weights of the edges between $V_2$ and $V_3$ are likewise irrelevant to the $s$–$t$ distance.

Next, the algorithm queries the degree of each vertex in $V_1$. Comparing these degrees with the total number of vertices $n$, together with the fact that no vertex in ${t}\cup V_3\cup V_4$ is adjacent to $V_1$, it concludes that every vertex in $V_1$ is adjacent to every vertex in ${s}\cup V_2$, as well as to every other vertex in $V_1$.

Finally, from the edges incident to $s$, the algorithm observes that they all connect to vertices in $V_1$ and have the same weight. It therefore concludes that the edge weights within $V_1$ are also irrelevant to the $s$–$t$ distance. Consequently, the algorithm only needs to inspect the edge weights of the edges between $V_1$ and $V_2$. The overall query complexity is therefore $O(n)$.

On the other hand, for any algorithm, it is necessary to query all edges between $V_1$ and $V_2$ to find the one with the minimum weight. For an algorithm that aims to be instance-optimal, as explained before, we should regard the order of the edges in the adjacency lists as being chosen adversarially. 
Therefore, the expected time complexity is $\Omega(n^{2/3}\Delta^{2/3})$ since vertices in $V_1$ have degrees $\Theta(n^{1/3}\Delta^{1/3})$ and vertices in $V_2$ have degrees $\Theta(\Delta)$. 
    Therefore, any algorithm must spend $\Omega(n^{2/3}\Delta^{2/3})$ time in the worst input.
\end{proof}

By a small modification of the above proof, we get~\Cref{thm:undirected-fixed-lower-m}.

\begin{proof}[Proof of~\Cref{thm:undirected-fixed-lower-m}]
    The proof follows from the proof of~\Cref{thm:undirected-fixed-lower}.
    Note that the graph we constructed in the proof of~\Cref{thm:undirected-fixed-lower} has $\Theta(n^{2/3}\Delta^{2/3}+n)$ edges. 
    We now have a given $m\in[n,n^2]$ for the number of edges in the graph, and we set $|V_1|=\Theta(\sqrt{m}),|V_2|=\Theta(n/\sqrt{m}),|V_3|=\Theta(m^{1.5}/n)$ for $m\le n^{4/3}$.  
 \end{proof}

%% file: analysis_directed.tex
\section{More Instance-Optimality}\label{sec:more-positive}

In this section, we present upper bounds on the instance-optimality ratio of bidirectional Dijkstra, in all settings other than that of order-oblivious undirected graphs, which we have already covered in~\Cref{sec:order-oblivious-undirected}.

Comparing these upper bounds with the lower bounds established in~\Cref{sec:order-dependent-directed,sec:order-oblivious-directed,sec:order-dependent-undirected}, we note that our analysis of bidirectional Dijkstra is tight in all settings.
More surprisingly, bidirectional Dijkstra achieves the asymptotically best possible instance-optimality ratio among all algorithms in every setting, 
except that of order-oblivious directed graphs, where our results are inconclusive.

Additionally, we show strong upper bounds on the instance-optimality ratio of bidirectional Dijkstra on easy instances, i.e.\ instances with low instance-complexity.

\subsection{Directed Graphs}\label{subsec:upper-directed}
We begin by presenting our upper-bound results for directed graphs.
These results are stated in the order-dependent model and, as discussed in~\Cref{sec:model}, also hold in the order-oblivious model.

When parameterizing in $n$ and $m$, we get the following upper bound.

\begin{theorem}
\label{thm:directed-fixed-d}
    In the order-dependent model, the instance-optimality ratio of \Cref{alg:bi-dj-new} is $O(\lceil m/n \rceil)$ on directed graphs.
\end{theorem}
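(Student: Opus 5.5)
The claim is that $\cost(G)\le O(\lceil m/n\rceil)\,\tau(G)$ for every directed $G$. By \Cref{lem:time_bj}, bidirectional Dijkstra makes $\Theta(|E_s|)$ queries, and trivially $|E_s|\le m$. It therefore suffices to show that for every good algorithm $A$,
\[
Q_A \;\ge\; \min\{\,|E_s|/C,\; n/C\,\}
\]
for a suitable constant $C$. Indeed, if $\tau\ge n/C$ then the ratio is at most $Cm/n$, and otherwise the ratio is at most $C$. So I will assume for contradiction that $Q_A<|E_s|/C$ and $Q_A<n/C$. I will then build $G'$ via Construction~(\hyperref[cons:c]{c}) with $d_{G'}(s,t)<d_G(s,t)$, such that $A$ sees identical answers on $G$ and $G'$ with probability at least $1/2$.

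\textbf{Choosing the pieces.} Since $Q_A<|E_s|/C$ and $|E_t|\ge|E_s|-1$, there are $q$-bad edges in both $E_s$ and $E_t$ for a small constant $q$. I take $u_1$ to be the \emph{earliest} forward-closed vertex in $V_s$ having a $q$-bad edge $(u_1,v_1)\in E_s$. Symmetrically, $v_2$ is the earliest backward-closed vertex in $V_t$ having a $q$-bad edge $(u_2,v_2)\in E_t$. Every vertex of $V_s$ closed before $u_1$ has all its visited out-edges $q$-good, so it contributes at least $q$ to $Q_A$. Hence there are fewer than $Q_A/q<n/(Cq)$ such vertices, and the same holds on the $t$-side. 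Next I pick $w$. Since $Q_A<n/C$, all but $O(n/C)$ vertices are $q$-bad. Also, at most $n/4$ vertices have in- or out-degree above $8m/n$. So I can choose a $q$-bad vertex $w$ that avoids all the following:
\begin{itemize}
\item the $O(n/C)$ vertices closed before $u_1$ or before $v_2$;
\item the vertices $u_1,v_1,u_2,v_2$ and their $O(1)$ relevant neighbors;
\item the high-degree vertices.
\end{itemize}
Every edge incident to $w$ is automatically $q$-bad, since visiting it visits $w$. If $w$ has no in- or out-edge, the degree repair below simplifies.

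\textbf{The modification.} I remove $(u_1,v_1)$, $(u_2,v_2)$, one out-edge $(w,a)$ and one in-edge $(b,w)$. I then add $(u_1,w)$ and $(w,v_2)$ with tiny weight $\eps$, together with $(u_2,a)$ and $(b,v_1)$ with arbitrary weight. All in- and out-degrees are preserved. Moreover, every adjacency-list position that changes held a $q$-bad edge, so by a union bound $A$ notices the change with probability $O(q)$. I must ensure that the new edges do not already exist: $(u_1,w)$, $(w,v_2)$, $(u_2,a)$ and $(b,v_1)$. Here the freedom in choosing $w$ (many candidates) and $a,b$ among the neighbors of $w$ is used. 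If $(u_1,w)$ or $(w,v_2)$ already exists, it is bad by the choice of $w$, so I can instead lower its weight as in Construction~(\hyperref[cons:a]{a}).

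\textbf{Main obstacle: distances.} The key step, and the one I expect to be hardest, is to show that $d_{G'}(s,u_1)=d_G(s,u_1)$ and $d_{G'}(v_2,t)=d_G(v_2,t)$. Given this, \Cref{lem:dist-st} yields
\[
d_{G'}(s,t)\le d(s,u_1)+2\eps+d(v_2,t)<d_G(s,t).
\]
Only vertices closed before $u_1$ lie on shortest $s$--$u_1$ paths, by \Cref{lem:invariant_dijkstra}. These vertices are, by the choice of $w$, distinct from $w$. Also, none of them is $u_2$ or $u_1$ acting through $(u_1,v_1)$, because that edge leaves $u_1$ itself. Hence no removed edge lies on some shortest $s$--$u_1$ path, and added edges only shorten paths. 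I need to check this carefully, including ties: with ties, some shortest path avoids the removed edges, which is enough. The argument for $v_2$ is symmetric. This "earliest bad vertex" choice is exactly what makes the count of excluded vertices depend on $Q_A$ rather than on $n$. Combined with the degree averaging, it is where the $\lceil m/n\rceil$ term enters.
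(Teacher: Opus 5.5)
There is a genuine gap, and it cannot be patched locally. The inequality you reduce to, $Q_A\ge\min\{|E_s|/C,\,n/C\}$, is the directed analogue of \Cref{lem:undirected-QA}, and it is false for directed graphs.

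Take the instance $G$ from the proof of \Cref{thm:directed-oblivious-m-any}, with any fixed ordering. Its certificate is a good algorithm using $O(m/n)$ queries. Meanwhile bidirectional Dijkstra relaxes all of $V_2\times V_1$ forward and $V_4\times V_3$ backward, so $|E_s|=\Theta((m/n)^2)$. For $1\ll m/n\ll n$, both of your assumptions $Q_A<|E_s|/C$ and $Q_A<n/C$ hold, yet no $G'$ can exist.

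The step that fails is the one you treat as bookkeeping. You need an out-neighbour $a$ of $w$ with $(u_2,a)\notin E$ and an in-neighbour $b$ of $w$ with $(b,v_1)\notin E$. If you skip the swap at $w$, you instead need a bad vertex to which $u_2$, or from which $v_1$, can be reconnected. Either way this depends on $\outdeg(u_2)$ and $\indeg(v_1)$, which nothing in your argument controls; bounding the degree of $w$ constrains the wrong vertex. In the example, every bad edge of $E_s$ enters $V_1$. Its vertices have in-degree $n-O(m/n)$, and their few non-in-neighbours are all visited by the certificate, so no choice of $w,a,b$ works. Relatedly, in your plan $\lceil m/n\rceil$ enters only through $|E_s|\le m$ in the final division, not through the construction.

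The missing idea is to put $\lceil m/n\rceil$ into the hypothesis. Prove that $Q_A<|E_t|/(C\lceil m/n\rceil)$ forces $Q_A=\Omega(n)$; together with $|E_t|=O(m)$ this gives the theorem.

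\begin{itemize}
\item Under this hypothesis at least half of $E_t$ is bad. These bad edges have at least $|E_t|/(2|V_t|)$ distinct tails, since each tail sends at most $|V_t|$ edges into $V_t$.
\item If this number exceeds $2m/n$, some tail $u_2$ has $\outdeg(u_2)\le n/2$, because fewer than $2m/n$ vertices have larger out-degree. Replace $(u_2,v_2)$ by $(w,v_2)$, and restore the out-degree of $u_2$ with an edge to one of the more than $n/2$ bad vertices.
\item Otherwise $|E_t|\le 4(m/n)|V_t|$, so the hypothesis yields $Q_A<|V_t|/10$. Then some $v_2\in V_t$ is itself bad, and you just add $(w,v_2)$.
\end{itemize}

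The $s$-side is symmetric. This construction deletes only $(u_1,v_1)$ and $(u_2,v_2)$, so \Cref{lem:dist-st} alone shows neither lies on a shortest $s$--$u_1$ or $v_2$--$t$ path. Your earliest-bad-vertex device is correct but no longer needed.
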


If we instead parameterize in $n$ and $\Delta$, we get the following upper bound.

\begin{theorem}
\label{thm:directed-fixed}
    In the order-dependent model, the instance-optimality ratio of \Cref{alg:bi-dj-new} is $O(\lceil\Delta^2/n\rceil)$ on directed graphs.
\end{theorem}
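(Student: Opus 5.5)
The plan is to lower-bound the instance complexity: I would show that every good algorithm $A$ on $G$ satisfies $Q_A=\Omega\bigl(|E_s|/\lceil\Delta^2/n\rceil\bigr)$. Together with \Cref{lem:time_bj} this gives the theorem. Write $k=|E_s|=\Theta(|E_t|)$. Suppose for contradiction that $Q_A<c\,k/\lceil\Delta^2/n\rceil$ for a small constant $c$. Then at least $k/2$ edges of $E_s$ and at least $k/2$ edges of $E_t$ are $q$-bad, for a suitable constant $q$. By \Cref{lem:dist-st}, creating a cheap path from any $u_1\in V_s$ to any $v_2\in V_t$ using only bad edges or bad vertices lowers $d(s,t)$. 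Such a change hides from $A$ with constant probability, which contradicts goodness.

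In the order-dependent model I would always modify adjacency lists \emph{in place}, i.e.\ a removed entry is overwritten at the same index by the new neighbour. Then $A$ notices the change only if it queries that exact position, i.e.\ visits the removed edge, or queries the degree or adjacency list of a bad vertex whose degree changed.

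\textbf{Case $k\ge C\Delta^2$ (target: constant ratio).} Fix a bad $(u_1,v_1)\in E_s$ and a bad $(u_2,v_2)\in E_t$.
\begin{itemize}
\item If the edge $(u_1,v_2)$ exists and is $q$-bad, apply Construction~(a) along $u_1\to v_2$, lowering its weight.
\item If $(u_1,v_2)$ exists but is $q$-good, record this. Each good edge $(u_1,v_2)$ rules out at most $\Delta^2$ pairs, namely the out-edges of $u_1$ times the in-edges of $v_2$. There are at most $Q_A/q$ good edges, so at most $\Delta^2 Q_A/q$ pairs are ruled out this way, which is $o(k^2)$.
\item For fixed $(u_1,v_1)$, the pairs with $(u_2,v_1)\in E$ number at most $\Delta\cdot\Delta$: at most $\Delta$ choices of $u_2$, each with at most $\Delta$ out-edges.
\end{itemize}
Since $k/2>2\Delta^2$, some pair survives with neither $(u_1,v_2)$ nor $(u_2,v_1)$ present. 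Swapping to $\{(u_1,v_2),(u_2,v_1)\}$ in place is Construction~(b). It preserves all degrees, touches only two bad edges, and in the directed case leaves $d(s,\cdot)$ on $V_s$ and $d(\cdot,t)$ on $V_t$ unchanged. Finishing with Construction~(a) gives the contradiction.

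\textbf{Case $k<C\Delta^2$ (target: ratio $O(\Delta^2/n)$).} Now $Q_A<c\,kn/\Delta^2\le cCn$, so for $c$ small at least, say, $n/2$ vertices are $q$-bad. I would use Construction~(c) with a bad vertex $w$. Overwrite $(u_1,v_1)$ by $(u_1,w)$ in $u_1$'s list and $(u_2,v_2)$ by $(w,v_2)$ in $v_2$'s list. To restore the degrees of $v_1$ and $u_2$, overwrite their freed slots with the edge $(u_2,v_1)$ if it is absent; otherwise route through a second bad vertex $w'$, adding $(u_2,w')$ and $(w',v_1)$. The only degree changes are at the bad vertices $w,w'$, whose new entries are appended at the end of their lists. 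We need $w\notin \outnbr(u_1)\cup\innbr(v_2)$, and similarly for $w'$, to keep the graph simple. This excludes at most $4\Delta$ vertices.

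\textbf{Main obstacle.} The hardest step is the subcase $\Delta=\Omega(n)$, where $\Delta$-sized neighbourhoods may cover all bad vertices and the exclusion above fails. There I would exploit the freedom in choosing the pair $(u_1,v_1),(u_2,v_2)$ among $\Omega(k)$ bad edges. I would also use the trivial bound $\tau\ge\Omega(k n/\Delta^2)=\Omega(k/\Delta)$, which should follow because, with $\Delta=\Theta(n)$, the target ratio $\lceil\Delta^2/n\rceil=\Theta(\Delta)$ is large. So it suffices that $A$ visits $\Omega(1)$ edges per vertex of $V_s\cup V_t$, i.e.\ $\Omega(k/\Delta)$ in total. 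If instead some $u_1\in V_s$ and $v_2\in V_t$ are both bad vertices, we can directly add a two-hop path through a bad $w$ attached at list ends. Balancing these two arguments, with constants chosen so the union bound over the $O(1)$ modified items stays below $1/2$, is where I expect the real work.
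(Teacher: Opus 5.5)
Your proposal follows the paper's proof: the regime $|E_t|>C\Delta^2$ matches the paper's lemma that $Q_A<|E_t|/10$ forces $|E_t|=O(\Delta^2)$ (fix one bad $(u_1,v_1)\in E_s$, discard the at most $2\Delta^2$ edges $(u_2,v_2)\in E_t$ with $(u_1,v_2)\in E$ or $(u_2,v_1)\in E$, handle $u_1=u_2$, $v_1=v_2$, $u_2=v_1$ by Construction~(a), then swap), and your second regime matches its lemma that $Q_A=\Omega(n)$, via Construction~(c) for small $\Delta$. The subcase you flag as the main obstacle needs no balancing or special choice of pairs: for $\Delta\ge n/2$ we have $\lceil\Delta^2/n\rceil\ge\Delta/2$, and $|E_s|\le\Delta|V_s|$, $|E_t|\le\Delta|V_t|$, so for small $c$ your hypothesis already gives $Q_A<\min\{|V_s|,|V_t|\}/5$; hence there are bad vertices $u\in V_s$ and $v\in V_t$, and inserting or reweighting the single edge $(u,v)$ finishes, exactly as in the paper (an existing edge with a bad endpoint is automatically bad, so only the degree-repair vertex, not $w$, must avoid a neighbourhood, which is why the small case extends up to $\Delta<n/2$).
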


Finally, if we restrict to easy instances, we get the following upper bound.

\begin{theorem}
\label{thm:directed-tau}
    In the order-dependent model, the instance-optimality ratio of  \Cref{alg:bi-dj-new} is $O(h)$ on directed graphs when there are at most $h\ge 1$ vertices with degree $\ge n-10\tau$.
\end{theorem}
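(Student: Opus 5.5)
We prove \Cref{thm:directed-tau} within the framework of \Cref{sec:framework}. Fix a directed graph $G$ and a good algorithm $A$ with $Q_A=\tau$ close to optimal, and write $k=|E_t|$. Since bidirectional Dijkstra costs $\Theta(k)$ by \Cref{lem:time_bj}, it suffices to show $k=O(h\tau)$. Assume for contradiction that $k\ge Ch\tau$ for a large constant $C$. We then build $G'$ with $d_{G'}(s,t)<d_G(s,t)$ that differs from $G$ only on $O(1)$-bad edges and vertices; this contradicts the goodness of $A$, exactly as in \Cref{sec:order-oblivious-undirected}.

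\textbf{Supply of bad objects.} Because $Q_A=\tau\le n/20$, Markov's inequality gives at most $10\tau$ vertices that are $1/10$-good. Hence at least $n-10\tau\ge n/2$ vertices are $1/10$-bad. Likewise, at most $c\tau$ edges are $1/c$-good. Call a vertex \emph{heavy} if its in- or out-degree is at least $n-10\tau$; by hypothesis there are at most $h$ heavy vertices. Every non-heavy vertex $u$ has more than $10\tau$ non-out-neighbors, so it has a non-out-neighbor $w$ that is $1/10$-bad. The symmetric statement holds for in-neighbors.

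\textbf{Choosing the edges to redirect.} Among the $\ge k$ edges of $E_s$, fewer than $k/4$ are $O(1/h)$-good. The same bound holds for $E_t$. The edges of $E_s$ whose tail $u\in V_s$ is heavy number at most about $h$ times the maximum forward work at a single vertex. My plan for controlling this is as follows. For a heavy $u\in V_s$, the forward search scans $u$'s out-list, and $u$ is the only vertex through which those $\approx n$ edges are reached. A good algorithm must then either query almost all of them, or query degrees of the $\Omega(n)$ endpoints in order to certify them. I expect this to cost $\tau=\Omega(n)$, contradicting $\tau\le n/20$, unless only $O(\tau)$ out-edges of $u$ lie in $E_s$. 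This is the step that produces the factor $h$: heavy tails contribute at most $h\cdot O(\tau)$ edges to $E_s$, and heavy heads contribute at most $h\cdot O(\tau)$ edges to $E_t$. I regard this as the \emph{main obstacle}. Certifying that the forward search needs only $O(\tau)$ work per heavy vertex requires an auxiliary swap argument at $u$ itself. The idea is to redirect a bad out-edge of $u$ to a bad vertex $w$ not adjacent from $u$, which creates a shortcut if $A$ never touches the edge. With $C$ large, this leaves a bad edge $(u_1,v_1)\in E_s$ with $u_1$ non-heavy and a bad edge $(u_2,v_2)\in E_t$ with $v_2$ non-heavy.

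\textbf{Building $G'$.} If some path of length at most $2$ through bad edges connects $u_1$ to $v_2$, we apply Construction~(\hyperref[cons:a]{a}) together with \Cref{lem:dist-st}. If instead $(u_1,v_2)\notin E$ and $(u_2,v_1)\notin E$, we apply Construction~(\hyperref[cons:b]{b}). Otherwise we use Construction~(\hyperref[cons:c]{c}), and here non-heaviness is essential. We pick a $1/10$-bad vertex $w$ such that $(u_1,w)\notin E$ and $(w,v_2)\notin E$; this is possible because more than $10\tau$ candidates exist on each side, and we need $w$ in the intersection, which we secure by counting against $n/2$ bad vertices. We then delete $(u_1,v_1)$ and $(u_2,v_2)$, add $(u_1,w)$ and $(w,v_2)$ with tiny weight, and repair the degrees of $v_1$, $u_2$, and $w$ by adding $(u_2,v_1)$ when it is absent. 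When it is present we instead route through a second bad vertex $w'$. Each modified object is $O(1/h)$-bad or $1/10$-bad, and there are $O(1)$ of them. A union bound then shows that $A$ sees identical answers on $G$ and $G'$ with probability at least $1/2$, while $d_{G'}(s,t)\le d(s,u_1)+2\varepsilon+d(v_2,t)<d(s,t)$. This contradicts goodness, so $k=O(h\tau)$.
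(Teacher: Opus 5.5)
\textbf{There is a genuine gap in the step you flag as the main obstacle.} You need each heavy $u\in V_s$ to have only $O(\tau)$ out-edges in $E_s$. Your reason is that otherwise a good algorithm must query most of those edges, or the degrees of their $\Omega(n)$ heads. This is not proved, and it is false: degree queries on a few \emph{other} vertices can certify that the heads of $u$'s out-edges are irrelevant.

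For example, let $|X|=|Y|=\sqrt n$ and let $W$ be the remaining vertices. Take edges $(s,u)$; $(u,z)$ for $z\in X\cup W$; $(x,z)$ for $x\in X$ and $z\in W\cup Y$; and $(y,t)$ for $y\in Y$. All weights are $1$ except $\ell(u,x)=100$ for $x\in X\setminus\{x^*\}$. A good algorithm can read:
\begin{itemize}
\item $\indeg(s)$ and the in-lists of $u$, $t$ and of every $x\in X$;
\item $\outdeg(x)$ for every $x\in X$;
\item $\indeg(y)$ and $\ell(x^*,y)$ for every $y\in Y$.
\end{itemize}
With these $O(\sqrt n)$ queries it certifies that every $s$--$t$ path has the form $s\to u\to x\to y\to t$, and hence that $d(s,t)=4$. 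Yet bidirectional Dijkstra relaxes all $\Theta(n)$ out-edges of the heavy vertex $u$, while its backward search scans the $n$ in-edges of $Y$. The theorem survives here only because the $\sqrt n$ vertices of $X$ are heavy too, so $h\tau=\Theta(n)$; no per-heavy-vertex bound of $O(\tau)$ holds.

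Your construction step has two related defects:
\begin{itemize}
\item Two sets, each of more than $10\tau$ non-neighbours, need not intersect.
\item Restoring the degrees of $v_1$ and $u_2$, whether via $(u_2,v_1)$ or via a second bad vertex $w'$, requires $v_1$ and $u_2$ to be non-heavy. Your selection makes $u_1$ and $v_2$ non-heavy instead.
\end{itemize}

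\textbf{The missing idea is that heaviness matters only at the endpoints whose degrees must be repaired.}
\begin{itemize}
\item Take an arbitrary $1/10$-bad vertex $w$. If $(u_1,w)$ already exists, it is automatically bad, because visiting an edge visits its endpoints. So no non-adjacency is needed, and $u_1$ may be heavy.
\item Otherwise, replace $(u_1,v_1)$ by $(u_1,w)$. Restore the in-degree of $v_1$ by adding an edge from a bad non-in-neighbour. Such a vertex exists when $v_1$ is non-heavy, since then its in-degree is below $n-10\tau\le|B|$. Symmetrically, join $w$ to $v_2$.
\end{itemize}
Hence a bad edge in $E_s'=\{(u,v)\in E_s : v\notin H\}$ together with a bad edge in $E_t'=\{(u,v)\in E_t : u\notin H\}$ already gives a contradiction. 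So one of these two sets is entirely $1/10$-good and has size $O(\tau)$.

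The excluded edges are bounded trivially. A heavy vertex receives at most $|V_s|$ edges of $E_s$ and sends at most $|V_t|$ edges of $E_t$. The sizes $|V_s|$ and $|V_t|$ are charged to $\tau$, because bad vertices in $V_s$ and $V_t$ could be joined by a new edge. This gives $|E_t|=O(h\tau)$ without any per-vertex argument.
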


In the following, we will first prove~\Cref{thm:directed-fixed}, building on which we prove~\Cref{thm:directed-fixed-d} and~\Cref{thm:directed-tau}.  

\subsubsection{Proof of~\Cref{thm:directed-fixed}}
\begin{proof}[Proof of~\Cref{thm:directed-fixed}]
The proof builds on several lemmas. First, we claim

\begin{lemma}
\label{lem:Et-Delta^2}
    If $Q_A<|E_t|/10$, then $|E_t|=O(\Delta^2)$.
\end{lemma}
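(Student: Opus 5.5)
We argue by contraposition. Assume $Q_A<|E_t|/10$ and $|E_t|> C\Delta^2$ for a suitable absolute constant $C$ (say $C=10$, which also absorbs lower-order $O(\Delta)$ terms once $\Delta\ge 1$). We then use Construction~(\hyperref[cons:b]{b}) to build a graph $G'$ on which $A$ errs with probability larger than $1/10$.

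First, we count bad edges. Every $1/5$-good edge contributes at least $1/5$ to $Q_A$, so there are fewer than $5Q_A<|E_t|/2$ such edges. By \Cref{lem:time_bj}, $|E_s|\ge|E_t|$. Hence at least half of $E_s$ and at least half of $E_t$ consists of $1/5$-bad edges. In particular there is a $1/5$-bad edge $(u_2,v_2)\in E_t$, so $v_2\in V_t$. Let $B_s$ be the set of $1/5$-bad edges in $E_s$; then $|B_s|>|E_t|/2>C\Delta^2/2$. We want some $(u_1,v_1)\in B_s$ (so $u_1\in V_s$) satisfying all of the following:
\begin{itemize}
\item $u_1$ is not an in-neighbor of $v_2$;
\item $v_1$ is not an out-neighbor of $u_2$;
\item $u_1\neq u_2$, $v_1\neq v_2$, and $u_1\neq v_2$, $u_2\neq v_1$, so that the swap below creates no self-loops or parallel edges.
\end{itemize}

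Next, we show such an edge exists by discarding the forbidden ones. Edges of $B_s$ whose tail is an in-neighbor of $v_2$ number at most $\Delta\cdot\Delta=\Delta^2$: there are at most $\Delta$ such tails, each of out-degree at most $\Delta$. Symmetrically, edges whose head is an out-neighbor of $u_2$ number at most $\Delta^2$. The equality conditions exclude only $O(\Delta)$ further edges, since each fixes the tail or head to a single vertex. So as soon as $|B_s|>2\Delta^2+O(\Delta)$, a valid $(u_1,v_1)$ exists, and this holds for $C$ large enough.

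We then perform the swap. Replace $(u_1,v_1),(u_2,v_2)$ by $(u_1,v_2),(u_2,v_1)$, placing each new edge at the list positions of the removed edges it replaces. Thus $(u_1,v_2)$ takes the slot of $(u_1,v_1)$ in $u_1$'s out-list and of $(u_2,v_2)$ in $v_2$'s in-list, and similarly for $(u_2,v_1)$. All degrees and all other list entries are unchanged, as required in the order-dependent model. Give $(u_2,v_1)$ an arbitrary positive weight, and give $(u_1,v_2)$ weight
\[
\eps<d(s,t)-d(s,u_1)-d(v_2,t),
\]
which is positive by \Cref{lem:dist-st}. If $d(s,t)=\infty$, any $\eps>0$ works, since $d(s,u_1),d(v_2,t)<\infty$.

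We check that this shortens the $s$--$t$ distance. A shortest $s$--$u_1$ path never uses an out-edge of $u_1$, so removing $(u_1,v_1)$ does not increase $d(s,u_1)$. Likewise, a shortest $v_2$--$t$ path never uses an in-edge of $v_2$, so removing $(u_2,v_2)$ does not increase $d(v_2,t)$. The only other removal is harmless for the same reasons: the edge $(u_2,v_2)$ could lie on the shortest $s$--$u_1$ path only if it were not an in-edge of $u_1$, and in that case we must argue more carefully. This is the step I expect to need the most care. The fix is to order the argument differently: note that every vertex on the shortest $s$--$u_1$ path has distance at most $d(s,u_1)$ from $s$. If $(u_2,v_2)$ lay on that path, then $d(s,v_2)+d(v_2,t)\le d(s,u_1)+d(v_2,t)<d(s,t)$, contradicting the triangle inequality. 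The symmetric argument handles $(u_1,v_1)$ on the $v_2$--$t$ path. Hence
\[
d_{G'}(s,t)\le d(s,u_1)+\eps+d(v_2,t)<d_G(s,t).
\]

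Finally, the views of $A$ on $G$ and $G'$ differ only if $A$ visits one of the two swapped list positions. This means visiting one of the removed edges in $G$, which happens with probability less than $1/5+1/5=2/5$ by the union bound. So with probability greater than $3/5$, $A$ outputs the same value on $G$ and $G'$. It is therefore wrong on one of them with probability at least $3/10>1/10$, contradicting goodness. Hence $|E_t|=O(\Delta^2)$.
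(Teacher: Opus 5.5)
Your proposal is correct and follows essentially the same route as the paper: show that most of $E_s$ and $E_t$ consists of bad edges, discard the at most $2\Delta^2+O(\Delta)$ partner edges that would create parallel edges or self-loops, and apply the degree-preserving swap of Construction (b). The differences are cosmetic. You fix the $E_t$ edge first rather than the $E_s$ edge, you use $1/5$-bad instead of $1/4$-bad, and you exclude the coincidence cases by counting instead of falling back on Construction (a). You also spell out two things the paper leaves to its framework section: the list-slot placement needed in the order-dependent model, and the check that the swap does not increase $d(s,u_1)$ or $d(v_2,t)$.
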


 \begin{proof}
We assume for contradiction that $|E_t|>10\Delta^2$. We will then be able to apply either Construction~(\hyperref[cons:a]{a}) or Construction~(\hyperref[cons:b]{b}) from~\Cref{sec:framework} to construct a graph $G’$ such that $d_{G’}(s,t)<d_G(s,t)$, while $A$ produces the same output on both $G$ and $G’$ with probability at least $1/2$. 
This contradicts the assumption that $A$ is a good algorithm that outputs the correct answer on every instance with probability at least $9/10$. The details are given below.

Since $Q_A<|E_t|/10$, at least $\frac{3}{5}|E_t|$ edges in $E_t$ are $1/4$-bad. Otherwise, at least $\frac{2}{5}|E_t|$ edges in $E_t$ would be $1/4$-good, contributing more than $|E_t|/10$ to $Q_A$, a contradiction. Similarly, at least $\frac{3}{5}|E_s|$ edges in $E_s$ are $1/4$-bad. Fix one such edge $(u_1,v_1)\in E_s$. There are at most $\Delta^2$ edges $(u_2,v_2)\in E_t$ satisfying $(u_1,v_2)\in E$, and at most another $\Delta^2$ edges satisfying $(u_2,v_1)\in E$. We call these at most $2\Delta^2$ edges \emph{forbidden edges} for future reference.

Since $|E_t|>10\Delta^2$, it follows that $\frac{3}{5}|E_t|>6\Delta^2$. Therefore, at least $6\Delta^2$ edges in $E_t$ are $1/4$-bad. Excluding the at most $2\Delta^2$ forbidden edges, at least $4\Delta^2>0$ $1/4$-bad edges remain. Fix one such remaining edge $(u_2,v_2)\in E_t$.

If $u_1=u_2$ or $v_1=v_2$, then there is already a bad edge from $V_s$ to $V_t$. We therefore apply Construction~(\hyperref[cons:a]{a}) from~\Cref{sec:framework} to construct $G’$.
Otherwise, if $u_2=v_1$, then the two $1/4$-bad edges already form a path from $u_1$ to $v_2$. Again, we apply Construction~(\hyperref[cons:a]{a}) to construct $G’$.

In the remaining case, we apply Construction~(\hyperref[cons:b]{b}) by replacing $(u_1,v_1)$ and $(u_2,v_2)$ with $(u_1,v_2)$ and $(u_2,v_1)$. 
This completes the proof. 
\end{proof}

Furthermore, we claim

\begin{lemma}
\label{lem:QA-Delta^2}
    If $Q_A< |E_t|/(10\lceil\Delta^2/n\rceil)$, then $Q_A=\Omega(n)$.
\end{lemma}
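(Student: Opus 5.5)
We argue by contradiction, using Construction~(\hyperref[cons:c]{c}) from~\Cref{sec:framework}. Assume $Q_A<|E_t|/(10\lceil\Delta^2/n\rceil)$ and also $Q_A\le \eps n$ for a small constant $\eps$. The goal is to build $G'$ with $d_{G'}(s,t)<d_G(s,t)$ that differs from $G$ only on a constant number of $O(1)$-bad edges and vertices. The union bound then shows that $A$ answers identically on $G$ and $G'$ with probability at least $1/2$, contradicting goodness. So it suffices to show that, whenever $Q_A$ is small compared with both $n$ and $|E_t|/\lceil\Delta^2/n\rceil$, a suitable rerouting through bad vertices exists.

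\textbf{Step 1: supplies of bad objects.} As in the proof of~\Cref{lem:Et-Delta^2}, at least $\frac35|E_s|$ edges of $E_s$ and at least $\frac35|E_t|$ edges of $E_t$ are $1/4$-bad. Since $Q_A\le\eps n$, at least $(1-4\eps)n$ vertices are $1/4$-bad. Denote the bad edge sets by $B_s\subseteq E_s$ and $B_t\subseteq E_t$, and the bad vertex set by $W$.

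\textbf{Step 2: the rerouting.} Choose $(u_1,v_1)\in B_s$, $(u_2,v_2)\in B_t$ and three distinct vertices $w,w',w''\in W\setminus\{u_1,v_1,u_2,v_2\}$. Delete $(u_1,v_1)$ and $(u_2,v_2)$, and insert $(u_1,w)$, $(w,v_2)$, $(w',v_1)$ and $(u_2,w'')$. This preserves the in- and out-degrees of all vertices except $w,w',w''$, which are bad. On directed graphs, removing edges and adding edges into or out of fresh vertices does not decrease $d(s,u_1)$ or $d(v_2,t)$ below their values in $G$, so~\Cref{lem:dist-st} still gives $d(s,u_1)+d(v_2,t)<d(s,t)$. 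Setting the weights of $(u_1,w)$ and $(w,v_2)$ to less than half the gap then yields $d_{G'}(s,t)<d_G(s,t)$, as in Construction~(\hyperref[cons:a]{a}). If some $u_1=u_2$, $v_1=v_2$ or $u_2=v_1$ occurs, or an edge from $u_1$ to $v_2$ is already present, we instead finish directly by Construction~(\hyperref[cons:a]{a}), as in~\Cref{lem:Et-Delta^2}.

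\textbf{Step 3: simplicity of $G'$ (main obstacle).} The new edges must not already exist, that is, we need $(u_1,w),(w,v_2),(w',v_1),(u_2,w'')\notin E$. I would handle this by double counting. A fixed $u_1$ has at most $\Delta$ out-neighbours and a fixed $v_2$ has at most $\Delta$ in-neighbours. Hence, for a fixed pair of bad edges, at most $4\Delta$ vertices are excluded as choices for $w,w',w''$. When $\Delta\le n/10$, this leaves $|W|-4\Delta-O(1)>0$ admissible vertices and we are done.

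The hard case is $\Delta=\Theta(n)$, where the neighbourhoods of $u_1$ and $v_2$ might cover $W$. Here the slack $\lceil\Delta^2/n\rceil$ has to enter. Since each tail has at most $\Delta$ out-edges, $B_s$ has at least $|B_s|/\Delta$ distinct tails, and similarly $B_t$ has at least $|B_t|/\Delta$ distinct heads. A fixed $w$ has in-degree at most $\Delta$, so it is an out-neighbour of at most $\Delta$ of those tails.

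Counting pairs $(u_1,w)$ with $u_1$ a bad tail, $w\in W$ and $(u_1,w)\in E$ gives at most $\Delta\cdot\#\mathrm{tails}$ such pairs. Compared against $\#\mathrm{tails}\cdot|W|$ total pairs, this fraction is only small when $\Delta<|W|$. So I would instead average over $w$ and over the edges of $E_t$ simultaneously, and try to show that the number of conflicting configurations is at most $O(\Delta^2/n)\cdot|B_s|\cdot|B_t|$ per vertex of $W$. The hypothesis $|E_t|>10\lceil\Delta^2/n\rceil Q_A$ would then guarantee a conflict-free choice. If this averaging fails, the fallback is to note that a vertex $w$ adjacent from $u_1$ can be used directly as an intermediate vertex, provided the edge $(u_1,w)$ is itself bad, and then reduce to Construction~(\hyperref[cons:a]{a}).
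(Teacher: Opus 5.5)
Your setup and the case $\Delta\le n/10$ are fine. However, the proof stops exactly where the factor $\lceil\Delta^2/n\rceil$ in the hypothesis has to do its work. For $\Delta=\Theta(n)$ you only announce an averaging argument and never carry it out. Nothing in your counting rules out that every head $v_1$ of a bad edge in $E_s$ has in-degree close to $n$, with almost all bad vertices among its in-neighbours. In that case no admissible $w'$ exists for any choice of edges.

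The missing idea is to drop the intermediate vertex in this regime and find bad vertices inside $V_s$ and $V_t$ themselves. Every edge of $E_t$ enters a vertex of $V_t$, so $|E_t|\le\Delta|V_t|$. Hence for $\Delta\ge n/2$,
\[
Q_A<\frac{|E_t|}{10\Delta^2/n}\le\frac{n|V_t|}{10\Delta}\le\frac{|V_t|}{5},
\]
and likewise $Q_A<|V_s|/5$, using $|E_t|\le|E_s|\le\Delta|V_s|$. So there are $1/5$-bad vertices $u\in V_s$ and $v\in V_t$. Add the single edge $(u,v)$ with weight below the gap from \Cref{lem:dist-st}, or lower its weight if it already exists (it is then automatically bad). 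This changes only the degrees of two bad vertices, so no degree repair is needed and no simplicity issue arises. This is the paper's argument for $\Delta\ge n/2$.

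For $\Delta<n/2$ your rerouting is essentially the paper's construction. It extends from $n/10$ up to $n/2$ once you drop the requirement $(u_1,w),(w,v_2)\notin E$. Visiting an edge visits both endpoints, so an existing edge at a bad vertex $w$ is itself bad and can simply be kept; your ``fallback'' always applies. Only the repairs $(w',v_1)$ and $(u_2,w'')$ then need non-neighbours. Each excludes at most $\Delta<n/2$ vertices, while your Step~1 supplies $(1-4\eps)n$ bad vertices.

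Separately, in Step~2 you check the wrong direction. What matters is that deleting $(u_1,v_1)$ and $(u_2,v_2)$ does not \emph{increase} $d(s,u_1)$ or $d(v_2,t)$; decreases are harmless. This does hold: a shortest $s$--$u_1$ path through $v_2$ would give $d(s,t)\le d(s,v_2)+d(v_2,t)\le d(s,u_1)+d(v_2,t)<d(s,t)$ by \Cref{lem:dist-st}, and the $v_2$--$t$ case is symmetric.
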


\begin{proof}
We assume for contradiction that $Q_A<n/20$.  
We will then be able to apply Constructions~(\hyperref[cons:a]{a}) and~(\hyperref[cons:c]{c}) from~\Cref{sec:framework} to construct $G'$. We consider the two cases $\Delta<n/2$ and $\Delta\ge n/2$ separately, as described below.

\paragraph{Case $\Delta<n/2$.}
Since $Q_A<|E_t|/10$, there exist a $1/10$-bad edge $(u_1,v_1)\in E_s$ and a $1/10$-bad edge $(u_2,v_2)\in E_t$.
Since $Q_A<n/20$, there exists at least one $1/20$-bad vertex $w$. If $(u_1,w)$ already exists, we leave it unchanged. Otherwise we remove the edge $(u_1,v_1)$ and add the edge $(u_1,w)$. We do similar to $v_2$ to connect $(w,v_2)$. 

In particular, consider the vertex $v_1$. Since $\Delta<n/2$, there are at least $n/2$ vertices that are not in-neighbors of $v_1$. Among them, there exists a $1/10$-bad vertex $x$, since $Q_A<n/20$. We add the edge $(x,v_1)$ to restore the in-degree of $v_1$. By a symmetric argument, there exists a $1/10$-bad vertex $y$ that is not an out-neighbor of $u_2$, and we add the edge $(u_2,y)$ to restore the out-degree of $u_2$. 

As a result, the degrees of at most one $1/20$-bad vertex and two $1/10$-bad vertices are modified, and at most two $1/10$-bad edges are removed. 
But now we have a length-two path $u_1\rightarrow w\rightarrow v_2$ connecting $V_s$ and $V_t$, thereby completing Construction~(\hyperref[cons:c]{c}).

\paragraph{Case $\Delta \ge n/2$.}
    We have
    \begin{align*}
        Q_A< \frac{|E_t|}{10 \Delta^2/n} 
        \leq \frac{|V_t| \Delta}{10 \Delta^2/n} 
        \leq \frac{|V_t| n}{10 \Delta}
        \leq \frac{|V_t|}{5}.
    \end{align*}
    Similarly, $Q_A< |V_s|/5$.
    Hence, there exist a $1/5$-bad vertex $u\in V_s$ and a $1/5$-bad vertex $v\in V_t$. 
If $(u,v)$ exists, we reduce its edge weight so that $d_{G’}(s,t)<d_G(s,t)$. Otherwise, we add $(u,v)$ and set its edge weight so that $d_{G’}(s,t)<d_G(s,t)$.
\end{proof}

Combining \cref{lem:Et-Delta^2,lem:QA-Delta^2} gives \Cref{thm:directed-fixed}. 
\end{proof}

\subsubsection{Proof of~\Cref{thm:directed-fixed-d}}
\begin{proof}[Proof of~\Cref{thm:directed-fixed-d}]
The proof builds on the following lemma, which is analogous to \cref{lem:QA-Delta^2}, but we now consider $m$ instead of $\Delta$.

\begin{lemma}
\label{lem:QA-d}
    If $Q_A< |E_t|/(40\lceil m/n \rceil)$, then $Q_A=\Omega(n)$.
\end{lemma}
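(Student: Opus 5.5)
We mirror the proof of \Cref{lem:QA-Delta^2}, replacing the maximum-degree bound $\Delta$ by a counting argument on degrees. Assume for contradiction that $Q_A<n/c$ for a suitable constant $c$ (say $c=80$); we will build $G'$ via Constructions~(\hyperref[cons:a]{a}) and~(\hyperref[cons:c]{c}). Since $\lceil m/n\rceil\ge 1$, the hypothesis gives $Q_A<|E_t|/40$ and likewise $Q_A<|E_s|/40$ by \Cref{lem:time_bj}. Hence there are a $1/10$-bad edge $(u_1,v_1)\in E_s$ and a $1/10$-bad edge $(u_2,v_2)\in E_t$. From $Q_A<n/c$, a $(1-O(1/c))$ fraction of the vertices are $1/20$-bad.

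We split on whether $A$ can afford to avoid low-degree vertices. Call a vertex \emph{heavy} if its in- or out-degree is at least $n/2$. There are at most $4m/n$ heavy vertices.

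\textbf{Case 1.} The relevant endpoints $v_1$ and $u_2$ are not heavy. We run Construction~(\hyperref[cons:c]{c}) exactly as in the case $\Delta<n/2$ of \Cref{lem:QA-Delta^2}. Pick a $1/20$-bad vertex $w$. Redirect $(u_1,v_1)$ to $(u_1,w)$ and $(u_2,v_2)$ to $(w,v_2)$, unless these edges already exist. Then restore the in-degree of $v_1$ with an edge $(x,v_1)$ from a bad non-in-neighbor $x$, and the out-degree of $u_2$ with an edge $(u_2,y)$. Because $\indeg(v_1),\outdeg(u_2)<n/2$, at least $n/2$ candidates exist for each, and since $Q_A<n/c$ most of them are $1/10$-bad. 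The degree changes of $w$, $x$, $y$ involve only bad vertices, so the union bound gives success probability at least $1/2$.

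\textbf{Case 2.} The bad edges chosen might all be incident to heavy vertices. Here we must choose $(u_1,v_1)$ and $(u_2,v_2)$ more carefully, and this is where the factor $\lceil m/n\rceil$ enters. I expect this to be the main obstacle.

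The edges of $E_t$ whose tail $u_2$ is heavy in out-degree all have their head $v_2$ in $V_t$. The backward search touches $v_2$ through at most one in-edge from each of the $\le 4m/n$ heavy vertices. So if every bad edge of $E_t$ had a heavy tail, we would get $|E_t|\lesssim |V_t|\cdot 4m/n$ plus the good edges.

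Combined with $Q_A<|E_t|/(40\lceil m/n\rceil)$, this gives $Q_A<|V_t|/10$ (and symmetrically $Q_A<|V_s|/10$). Then, exactly as in the case $\Delta\ge n/2$, there are a $1/5$-bad vertex $u\in V_s$ and a $1/5$-bad vertex $v\in V_t$. We add the edge $(u,v)$, or shorten it if it already exists, using \Cref{lem:dist-st} to get $d_{G'}(s,t)<d_G(s,t)$.

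If this case does not occur, then at least a constant fraction of the bad edges of $E_t$ (resp.\ $E_s$) have a non-heavy tail $u_2$ (resp.\ non-heavy head $v_1$), and we return to Case 1.

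The delicate point is making this accounting rigorous. Concretely, the number of pairs $(u_2,v_2)\in E_t$ with $u_2$ heavy is at most $\sum_{v\in V_t}\#\{\text{heavy in-neighbors of }v\}\le |V_t|\cdot 4m/n$. The constants $40$ and $c$ must then be tuned so that the bad edges with light endpoints exceed this count whenever $Q_A<|V_t|/10$ fails.
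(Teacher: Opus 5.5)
Your ingredients are the paper's. At most $O(m/n)$ vertices have degree at least $n/2$, and each has at most $|V_t|$ edges into $V_t$. So either some $1/10$-bad edge of $E_t$ has a tail of out-degree below $n/2$, which can be redirected into a bad vertex with the tail's out-degree repaired by an edge to another bad vertex. Or $|E_t|=O(|V_t|\,m/n)$, which with the hypothesis forces $Q_A<|V_t|/5$ and hence a bad vertex in $V_t$. The gap is in how you combine the two sides. Your Case~1 needs a bad edge with a light endpoint on \emph{both} sides. Your Case~2 needs \emph{both} $E_s$ and $E_t$ to be heavy-dominated: the step ``and symmetrically $Q_A<|V_s|/10$'' uses that every bad edge of $E_s$ has a heavy head, which you never establish. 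The negation of ``both sides heavy-dominated'' only gives that \emph{one} side has a light-endpoint bad edge. So the mixed situation is covered by neither case, and ``we return to Case~1'' does not follow. For example: a bad $(u_1,v_1)\in E_s$ with light $v_1$ exists, but every bad edge of $E_t$ has a heavy tail.

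The fix is to decouple the sides, which is exactly what the paper does. Fix one bad vertex $w$ and build $V_s\to w$ and $w\to V_t$ independently. On the $E_t$ side, if there is a bad $(u_2,v_2)\in E_t$ with $\outdeg(u_2)\le n/2$, replace it by $(w,v_2)$ and add $(u_2,y)$ for a bad $y$ that is not an out-neighbor of $u_2$. Otherwise take a bad $v_2\in V_t$ and add $(w,v_2)$; if that edge already exists it is automatically bad, since $w$ is bad. Handle the $E_s$ side symmetrically. Every branch alters only $O(1)$ bad vertices and edges, so the union bound holds in all four combinations, and Construction~(\hyperref[cons:a]{a}) finishes. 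Alternatively, in your mixed case you could redirect $(u_1,v_1)$ straight to $(u_1,v)$ for the bad $v\in V_t$ and repair the in-degree of $v_1$ with an edge from a bad vertex. With either change the argument is complete.
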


\begin{proof}
    Assume for contradiction that $Q_A<n/20$. Let $B$ denote the set of $1/10$-bad vertices, then $|B|>n/2$. Let $w$ be an arbitrary vertex in $B$. We are going to connect $w$ to (some vertex in) $V_t$. Since $Q_A<|E_t|/40$, at least half of the edges in $E_t$ are $1/10$-bad. These edges come from at least $|E_t|/(2|V_t|)$ different vertices, as each vertex can have at most $|V_t|$ edges to $V_t$.
    \begin{itemize}
        \item If $|E_t|/(2|V_t|) > 2m/n$, then at least one of these vertices has an out-degree of at most $n/2$. So there exists a $1/10$-bad edge $(u_2,v_2)\in E_t$ such that the out-degree of $u_2$ is no larger than $n/2$. If $v_2=w$ or $(w,v_2)\in E$, we are done. Otherwise we replace $(u_2,v_2)$ with $(w,v_2)$, and then fix the out-degree of $u_2$. Since $\dout(u_2) \le n/2<|B|$, we can add an edge $(u_2,y)$ for some $y\in B$.
        \item If $|E_t|/(2|V_t|) \le 2m/n$, then $Q_A<|E_t|/(40m/n)<|V_t|/10$. So $V_t \cap B\neq\emptyset$. Let $v_2\in V_t\cap B$. If $v_2=w$ or $(w,v_2)\in E$, we are done. Otherwise we add the edge $(w,v_2)$.
    \end{itemize}
    In both cases, we connect $w$ to some $v_2\in V_t$. Symmetrically, we connect some $u_1\in V_s$ to $w$. Then we connect $u_1$ to $v_2$ through $w$. We then apply Construction~(\hyperref[cons:a]{a}) to construct $G'$. 
\end{proof}

Combining \cref{lem:QA-d} and the fact that $|E_t|=O(m)$ gives~\Cref{thm:directed-fixed-d}. 
\end{proof}

\subsubsection{Proof of~\Cref{thm:directed-tau}}

\begin{proof}[Proof of~\Cref{thm:directed-tau}]
Recall that in \Cref{thm:directed-tau}, we are going to prove that in the order-dependent model, the instance-optimality ratio of \Cref{alg:bi-dj-new} is $O(h)$ on directed graphs when there are at most $h\ge 1$ vertices with degree at least $n-10\tau$. Here we recall that $\tau$ denotes the instance-complexity. We are going to prove that $|E_t|=O(h Q_A)$, which in turn implies the desired $O(h)$ instance-optimality ratio of~\Cref{alg:bi-dj-new}.

    Let $B$ denote the set of $1/10$-bad vertices in $G$. Then every vertex in $V\setminus B$ is visited by $A$ with probability at least $1/10$. Therefore, the instance-complexity satisfies $\tau\ge (n-|B|)/10$, or equivalently, $|B|\ge n-10\tau$. 

    Let $H$ denote the set of vertices whose in- or out-degree is at least $n-10\tau$, then $|H|=h$.
    Let $E_s'=\{(u,v)\in E_s \mid v\notin H\}$. Since every vertex in $H$ has at most $|V_s|$ in-edges from vertex in $V_s$, we have $|E_s'|\ge |E_t|-|V_t|h$. Similarly, we define $E_t'=\{(u,v)\in E_t \mid u\notin H\}$ and have $|E_t'|\ge |E_t|-|V_t|h$. 

    \begin{lemma}
    \label{clm:directed-tau}
        Either all edges in $E_s'$ are $1/10$-good, or all edges in $E_t'$ are $1/10$-good.
    \end{lemma}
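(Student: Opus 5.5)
We argue by contradiction, in the same style as \Cref{lem:Et-Delta^2}. Suppose there is a $1/10$-bad edge $(u_1,v_1)\in E_s'$ and a $1/10$-bad edge $(u_2,v_2)\in E_t'$. We build a graph $G'$ with $d_{G'}(s,t)<d_G(s,t)$ that differs from $G$ only on a constant number of bad edges and bad vertices. Then, by the union bound, $A$ sees identical answers on $G$ and $G'$ with probability at least $1/2$, which contradicts goodness. By definition $u_1\in V_s$ and $v_2\in V_t$, so \Cref{lem:dist-st} gives $d(s,u_1)+d(v_2,t)<d(s,t)$. Any short path of bad edges from $u_1$ to $v_2$ therefore lets Construction~(\hyperref[cons:a]{a}) finish. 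In addition, $v_1\notin H$ and $u_2\notin H$. So $\indeg(v_1)<n-10\tau$ and $\outdeg(u_2)<n-10\tau$.

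We first handle the degenerate cases as in \Cref{lem:Et-Delta^2}. If $u_1=u_2$, $v_1=v_2$, or $v_1=u_2$, the bad edges already connect $u_1$ to $v_2$ in at most two hops, and Construction~(\hyperref[cons:a]{a}) applies. If $(u_1,v_2)\in E$, then either this edge is bad and we lower its weight, or we treat it as below. Otherwise we use Construction~(\hyperref[cons:c]{c}), rerouting through a bad vertex. Recall from the proof of \Cref{thm:directed-tau} that the set $B$ of $1/10$-bad vertices satisfies $|B|\ge n-10\tau$. Since $\tau\le n/20$ in this regime, $B$ is large. We pick $w\in B$ and replace $(u_1,v_1)$ by $(u_1,w)$ and $(u_2,v_2)$ by $(w,v_2)$. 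If one of these edges already exists, we keep it and only adjust weights; a small care point is whether an existing edge is good, in which case we choose a different $w$. This creates the path $u_1\to w\to v_2$, and the degree of $w$ changes, but $w$ is bad.

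It remains to restore $\indeg(v_1)$ and $\outdeg(u_2)$, and this is where we use that $v_1,u_2\notin H$. Since $\indeg(v_1)<n-10\tau\le|B|$, some $x\in B$ is not an in-neighbor of $v_1$; we choose $x\ne v_1$, which we may since we need slack. We add $(x,v_1)$. Symmetrically, some $y\in B$ is not an out-neighbor of $u_2$, and we add $(u_2,y)$. Only bad vertices ($w,x,y$) change degree, and only bad edges are removed. The new edges are not incident to anything $A$ inspects with high probability, because every new edge has a bad endpoint.

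The main obstacle is the counting in this last step. We need strict slack, namely $|B|>\indeg(v_1)$ plus the exclusion of $v_1$ itself and of the previously chosen $w$. We also need the chosen $w,x,y$ to avoid creating parallel edges and to avoid accidentally shortening $d(s,u_1)$ or $d(v_2,t)$. The plan for the slack is to use the non-strict inequality $|B|\ge n-10\tau$ together with the strict inequality $\indeg(v_1)<n-10\tau$; if that is off by one, we pick $w$ distinct from $x,y$ or allow $w=x$. The distance issue is harmless because we may set the new edge weights large except on the path $u_1\to w\to v_2$, which we make small enough to beat $d(s,t)$.
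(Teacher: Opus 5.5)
Your proposal is correct and is essentially the paper's proof: take $w\in B$, replace $(u_1,v_1)$ by $(u_1,w)$ and $(u_2,v_2)$ by $(w,v_2)$ (keeping an edge if it already exists), repair $\indeg(v_1)$ and $\outdeg(u_2)$ with edges to/from $B$ using $\indeg(v_1),\outdeg(u_2)<n-10\tau\le|B|$, and finish with Construction~(a). The side issues you flag resolve easily. An existing edge at $w$ is automatically $1/10$-bad, because visiting an edge visits its endpoints. No repair is needed if $v_1\in B$, so otherwise $x\neq v_1$ is automatic. The relevant distance concern is not shortening but whether deleting $(u_1,v_1)$ or $(u_2,v_2)$ could lengthen $d(s,u_1)$ or $d(v_2,t)$, and \Cref{lem:dist-st} rules this out.
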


    \begin{proof}
        Assume for contradiction that there are $1/10$-bad edges $(u_1,v_1)\in E_s'$ and $(u_2,v_2)\in E_t'$.
        Let $w$ be an arbitrary vertex in $B$.
        We are going to connect $u_1$ to $w$. If $u_1=w$ or $(u_1,w)\in E$, we are done. Otherwise we replace $(u_1,v_1)$ with $(u_1,w)$, and then fix the in-degree of $v_1$.
        Since $v_1\notin H$, the indegree of $v_1$ is thus less than $n-10\tau\le |B|$, so we can add an edge $(x,v_1)$ for some $x\in B$.
        Symmetrically, we connect $w$ to $v_2$. Then we are able to apply Construction~(\hyperref[cons:a]{a}) to construct $G'$. 
    \end{proof}

    By \cref{clm:directed-tau}, $Q_A=\Omega(|E_t|-|V_t|h)$. On the other hand, $Q_A=\Omega(|V_t|)$ since otherwise there are two bad vertices in $V_s$ and $V_t$ respectively. So $|E_t|=(|E_t|-|V_t|h)+|V_t|h=O(hQ_A)$.
\end{proof}

\subsection{Undirected Graphs} \label{subsec:upper-undirected}
We now present our upper-bound results for undirected graphs.
Again, these are stated in the order-dependent model, but also hold in the order-oblivious model by definition.

We first get the following upper bound when parameterizing in $n$ and $m$. 

\begin{theorem}\label{thm:undirected-fixed-m}
    In the order-dependent model, the instance-optimality ratio of \Cref{alg:bi-dj-new} is $O(\min\{\lceil m/n \rceil,n^{1/3}\})$ on undirected graphs.
\end{theorem}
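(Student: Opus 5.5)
We show $|E_t| = O(\min\{\lceil m/n\rceil, n^{1/3}\})\cdot Q_A$ for every good algorithm $A$. Throughout, assume for contradiction that $Q_A$ is much smaller than this, and produce $G'$ via Constructions~(\hyperref[cons:a]{a}), (\hyperref[cons:b]{b}) or (\hyperref[cons:c]{c}). By \Cref{lem:time_bj} this suffices. The argument has two parts.

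\textbf{The $O(\lceil m/n\rceil)$ bound.} We follow the proof of \Cref{lem:QA-d}, which transfers almost verbatim to undirected graphs. Suppose $Q_A<|E_t|/(40\lceil m/n\rceil)$ and $Q_A<n/20$, and let $B$ be the set of $1/10$-bad vertices, so $|B|>n/2$. Fix $w\in B$. We connect $w$ to $V_t$ in one of two ways.
\begin{itemize}
\item If a $1/10$-bad edge $(u_2,v_2)\in E_t$ has $\deg(u_2)\le n/2$, swap it for $(w,v_2)$ and restore $u_2$'s degree with an edge to some $y\in B$ not already adjacent to $u_2$.
\item Otherwise $Q_A<|V_t|/10$, so some $v_2\in V_t\cap B$ exists, and we add $(w,v_2)$ directly.
\end{itemize}
Symmetrically we connect some $u_1\in V_s$ to $w$. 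The one new issue in the undirected case is that restoring a degree also raises the degree of the bad endpoint $y$ (and of $w$). That is harmless, since these vertices are bad and only $O(1)$ of them change. If instead $Q_A\ge n/20$, we use $|E_t|\le m$, which gives $|E_t|=O(\lceil m/n\rceil)\,Q_A$. Together these yield $O(\lceil m/n\rceil)$.

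\textbf{The $O(n^{1/3})$ bound.} This is the main obstacle. The plan is to prove the undirected analogue of \Cref{thm:directed-fixed}, namely the ratio $O(\lceil\Delta^{2/3}/n^{1/3}\rceil)$ of \Cref{thm:undirected-fixed}. Since $\Delta\le n$, this is at most $O(n^{1/3})$.

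Write $a=|V_s|\approx|V_t|$ and $k=|E_t|$. The forward-explored region and the backward-explored region must be nearly ``complete'' to each other: otherwise a $1/4$-bad edge of $E_s$ and a $1/4$-bad edge of $E_t$ with no cross edges between their endpoints can be swapped as in Construction~(\hyperref[cons:b]{b}). Cross edges are forbidden only near vertices adjacent to both sides. Counting forbidden pairs then gives a Lemma~\ref{lem:Et-Delta^2}-style bound. Each bad edge $(u_1,v_1)\in E_s$ forbids only the edges of $E_t$ incident to $N(u_1)\cup N(v_1)$. Within the explored region these number at most $O(\Delta\cdot k/a)$ on average, not $\Delta^2$, because in an undirected graph an edge seen from both sides forces $A$ to pay on both sides.

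Separately, as in \Cref{lem:QA-Delta^2}, we get $Q_A=\Omega(n)$, or else $Q_A=\Omega(a)$ with $a$ related to $\Delta$ when $\Delta\ge n/2$, using Construction~(\hyperref[cons:c]{c}) through a bad vertex $w$. Balancing $k\le a\Delta$, $k\lesssim \Delta\cdot k/a$ (which forces $a\lesssim\Delta$), and $Q_A\gtrsim \max\{n,\ k/(\text{forbidden density})\}$ should give $k/Q_A=O(\Delta^{2/3}/n^{1/3})$. This matches the lower-bound construction of \Cref{thm:undirected-fixed-lower}, where $|V_1|=(n\Delta)^{1/3}$.

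I expect the hardest step to be making the forbidden-edge count rigorous. Concretely, this means showing that if more than $Q_A\cdot \Delta^{2/3}/n^{1/3}$ edges of $E_t$ exist, then two bad edges with no connecting cross edges can be found, or else a bad length-$\le 3$ path can be found via \Cref{lem:dist-st} and Construction~(\hyperref[cons:a]{a}). My first attempt would be a double-counting argument over triples (bad forward edge, bad backward edge, cross edge), using that every cross edge between $V_s$ and $V_t$ must be $O(1)$-good (else Construction~(\hyperref[cons:a]{a}) applies), so the number of cross edges is $O(Q_A)$.
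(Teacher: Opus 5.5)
Your decomposition is the paper's: an $O(\lceil m/n\rceil)$ bound from $Q_A=\Omega(n)$, and an $O(n^{1/3})$ bound from the $\Delta$-parameterized bound with $\Delta\le n$. The $O(\lceil m/n\rceil)$ half is essentially sound, but the directed degree case split is unnecessary for undirected graphs. Replacing a bad edge $\{u,v\}$ by $\{u,w\},\{v,w\}$ through a bad vertex $w$ preserves the degrees of $u$ and $v$. This is how the paper gets \Cref{lem:undirected-QA} ($Q_A<|E_t|/10\Rightarrow Q_A\ge n/10$), before using $|E_t|=O(m)$. That trick also supplies a bypass your version lacks. In an undirected graph, deleting $\{u_2,v_2\}$ can lengthen $d(v_2,t)$ when the shortest $v_2$--$t$ path leaves through $u_2$, which is impossible in the directed case. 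So you should take $y=w$, which also makes the condition $\deg(u_2)\le n/2$ superfluous.

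The genuine gap is the $O(n^{1/3})$ half. It remains a heuristic plan, and the double count you propose does not reach the exponent $1/3$. For bad edges $(u_1,v_1)\in E_s$ and $(u_2,v_2)\in E_t$, the swap of Construction~(b) is blocked only by an edge $\{u_1,v_2\}$ or $\{v_1,u_2\}$, which must then be good. But a single good edge $\{x,y\}$ can block up to $O(\deg(x)\deg(y))=O(\Delta^2)$ pairs, so you only obtain $|E_t|^2=O(\Delta^2Q_A)$. Combined with $Q_A\ge n/10$, this gives $O(\Delta/\sqrt n)$, i.e.\ $O(\sqrt n)$ when $\Delta=n$. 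Your ``$O(\Delta k/a)$ forbidden edges on average'' estimate, which would have to close this gap, is unjustified: nothing prevents the $E_t$-edges from concentrating at the neighbours of $u_1$ and $v_1$.

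The missing idea is \Cref{lem:undirected-Et}. It uses a second degree-preserving rewiring $\{u_1v_1,u_2v_2\}\to\{u_1u_2,v_1v_2\}$, which forces one of $\{u_1,u_2\}$, $\{v_1,v_2\}$ to be good as well. If $v_1$ is good, this rewiring need not yield a bad $V_s$--$V_t$ path, so the paper restricts to \emph{very bad} edges. Such an edge has either $v_1$ bad, or another bad $E_s$-edge into $v_1$ from some $u_3$ with $d(s,u_3)\le d(s,u_1)$. Then $s\to\cdots\to u_3\to v_1\to v_2$ survives the deletion of $\{u_1,v_1\}$. A good vertex is the far endpoint of at most one bad edge that is not very bad, and there are only $O(Q_A)$ good vertices, so a constant fraction of $E_s$ and $E_t$ is very bad.

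Every quadruple of very bad edges then contains a good edge from each of the two pairs, and these two good edges share a vertex $v$. Charge the quadruple to $v$. A vertex with $k_v$ good edges receives at most $\Delta\min\{k_v^2,|E_t|\}\le\Delta\sqrt{|E_t|}\,k_v$ quadruples. Summing gives $|E_t|^2=O(Q_A\Delta\sqrt{|E_t|})$, i.e.\ $|E_t|/Q_A=O(\Delta/\sqrt{|E_t|})$. Splitting at $|E_t|=(n\Delta)^{2/3}$ against $|E_t|/Q_A\le 10|E_t|/n$ then gives $O(\lceil\Delta^{2/3}/n^{1/3}\rceil)=O(n^{1/3})$.
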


If we instead parameterize in $n$ and $\Delta$, we get the following upper bound.

\begin{theorem}\label{thm:undirected-fixed}
    In the order-dependent model, the instance-optimality ratio of \Cref{alg:bi-dj-new} is $O(\lceil\Delta^{2/3}/n^{1/3}\rceil)$ on undirected graphs.
\end{theorem}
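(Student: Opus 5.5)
We plan to show that for every good algorithm $A$, $|E_t| = O(\lceil\Delta^{2/3}/n^{1/3}\rceil\, Q_A)$; the theorem then follows from \Cref{lem:time_bj}. The proof splits into two regimes, mirroring \Cref{lem:Et-Delta^2,lem:QA-Delta^2}.

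\textbf{Dense regime.} The first step is the undirected analogue of \Cref{lem:Et-Delta^2}. Assume $Q_A<|E_t|/10$. Then a constant fraction of $E_s$ and of $E_t$ consists of $1/4$-bad edges. Fix a bad $(u_1,v_1)\in E_s$. The bad edges $(u_2,v_2)\in E_t$ that block Construction~(\hyperref[cons:b]{b}) are those with $v_2\in N(u_1)$ or $u_2\in N(v_1)$; there are at most $2\Delta^2$ of them. So $|E_t|=O(\Delta^2)$, unless Construction~(\hyperref[cons:a]{a}) applies directly when the four vertices coincide. In the undirected case, however, a forbidden edge already yields a path $u_1 - v_2$ of length at most $2$. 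That path consists of $(u_1,v_2)\in E$ together with the bad edge $(u_2,v_2)$, or of $(u_1,v_1)$, $(v_1,u_2)$ and $(u_2,v_2)$. The only obstacle is that the connecting edge $(v_1,u_2)$ or $(u_1,v_2)$ may be good.

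\textbf{Sharpening via double counting.} This is where we improve on $\Delta^2$. Let $B_s\subseteq E_s$ and $B_t\subseteq E_t$ be the bad edges. Call a connecting edge $e$ \emph{heavy} if it closes many pairs. We will show that each good edge $e=(a,b)$ can serve as the connection for at most $\deg(a)\cdot\deg(b)\le\Delta^2$ pairs. We also bound the number of good edges by $10Q_A$, and the number of bad endpoints available on each side by $|V_s|$ and $|V_t|$. Every pair in $B_s\times B_t$ must be blocked by a good connecting edge. Since the endpoints of $B_s$ lie in $V_s$, we can count blocked pairs differently: a good edge $(x,y)$ with $x$ an endpoint of $B_s$ and $y$ an endpoint of $B_t$ blocks at most $\Delta\cdot\Delta$ pairs. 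Yet it can only do so if $x$ is adjacent to many $B_s$ edges.

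Combining these facts, we aim for
\[
|B_s|\,|B_t| \le Q_A\cdot \min\{\Delta^2,\ |V_s|\,|V_t|\}\quad\text{(roughly).}
\]
We then pair this with $|V_t|\le |E_t|$ and with the bound $Q_A=\Omega(|V_t|)$, which holds since otherwise a bad vertex in $V_s$ and one in $V_t$ can be joined directly.

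\textbf{Sparse regime and balancing.} The second ingredient is the analogue of \Cref{lem:QA-Delta^2}. If $Q_A<n/20$, we use Construction~(\hyperref[cons:c]{c}) through a bad vertex $w$ of degree less than $n/2$. In undirected graphs, we repair degrees by rerouting as before, so either $Q_A=\Omega(n)$ or the ratio is $O(1)$. Finally we balance the regimes. With $Q_A\ge n$ and $|E_t|^2\lesssim Q_A\Delta^2$, we get $|E_t|/Q_A\lesssim \Delta/\sqrt{Q_A}$. In the intermediate range we additionally use $|E_t|\le |V_t|\Delta\lesssim Q_A\Delta$. Optimizing the product of these constraints, with $Q_A\ge n$, should give the cube-root behaviour $\lceil\Delta^{2/3}/n^{1/3}\rceil$, matching the construction with $|V_1|=n^{1/3}\Delta^{1/3}$ in \Cref{thm:undirected-fixed-lower}. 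The step we expect to be hardest is the double-counting argument, which must give exactly the three-way trade-off between $Q_A$, $|V_t|$ and $\Delta$. We would first try to bound the number of good edges inside $N(V_s)\times N(V_t)$ that are needed to block all bad pairs.
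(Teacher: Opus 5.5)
Your proposal has a genuine gap at the step you flag as hardest, and the final balancing does not give the cube root.

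\textbf{What your counting yields.} Charging each pair of bad edges $((u_1,v_1),(u_2,v_2))$ to one good connecting edge $(u_1,v_2)$ or $(v_1,u_2)$, each closing at most $\Delta^2$ pairs, gives at best $|E_t|^2=O(Q_A\Delta^2)$. Every inequality you list is satisfied simultaneously by $Q_A=n$, $|V_s|=|V_t|=\Delta$, $|E_t|=\Delta\sqrt n$. This covers $Q_A\ge n/10$, $|E_t|^2\lesssim Q_A\Delta^2$, $|E_t|\lesssim Q_A\Delta$, $|E_t|=O(\Delta^2)$, $Q_A=\Omega(|V_t|)$, and even your $\min\{\Delta^2,|V_s||V_t|\}$ variant. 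So together they give only ratio $O(\lceil\Delta/\sqrt n\rceil)$. That exceeds $\lceil\Delta^{2/3}/n^{1/3}\rceil$ by a factor $(\Delta/\sqrt n)^{1/3}$ whenever $\Delta>\sqrt n$.

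The tight instance of \Cref{thm:undirected-fixed-lower} shows where the loss is. There each good $V_1$--$V_2$ edge closes about $\Delta\cdot n^{1/3}\Delta^{1/3}\approx\Delta\sqrt{|E_t|}$ pairs, not $\Delta^2$. One factor $\Delta$ must become $\sqrt{|E_t|}$, and degree bounds on the endpoints of a single connecting edge cannot achieve that.

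\textbf{The missing idea: a second blocking condition.} For every pair of (very) bad edges, not only is one of $(u_1,v_2),(v_1,u_2)$ good, but also one of $(u_1,u_2),(v_1,v_2)$ is good. Otherwise one swaps $\{(u_1,v_1),(u_2,v_2)\}$ to $\{(u_1,u_2),(v_1,v_2)\}$. If $v_1$ and $u_2$ are both bad vertices, one instead just replaces $(u_1,v_1)$ by $(u_1,u_2)$.

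This swap deletes $(u_1,v_1)$, so the surviving $V_s$--$V_t$ connection must enter through $v_1$. That is why the paper restricts to \emph{very bad} edges: a bad $(u,v)\in E_s$ such that $v$ is a bad vertex, or some other bad $(u',v)\in E_s$ has $d(s,u')\le d(s,u)$ (symmetrically in $E_t$). Then $u',v_1,v_2$ is the new bad path, and $d(s,u')$ is unaffected by the deletion. A constant fraction of $E_s$ and $E_t$ is very bad, because each good vertex $v$ is the head of at most one bad-but-not-very-bad edge $(u,v)$.

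\textbf{The charging.} Any edge from the first pair and any from the second share an endpoint. So each quadruple can be charged to a vertex $v$ carrying two good edges of the quadruple. If $v$ has $k_v$ good incident edges, it receives at most $\Delta\min\{k_v^2,|E_t|\}\le\Delta\sqrt{|E_t|}\,k_v$ charges. Since $\sum_v k_v=O(Q_A)$, this gives $|E_t|^2=O(Q_A\Delta\sqrt{|E_t|})$, i.e.\ $|E_t|/Q_A=O(\Delta/\sqrt{|E_t|})$ (\Cref{lem:undirected-Et}).

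\textbf{Finishing.} Combine this with $|E_t|/Q_A=O(|E_t|/n)$ from your sparse regime (\Cref{lem:undirected-QA}) and split at $|E_t|=(n\Delta)^{2/3}$; this yields the theorem. For the sparse regime in undirected graphs, the simplest repair is to replace $(u_1,v_1)$ by $(u_1,w),(v_1,w)$. This changes only the degree of the bad vertex $w$, so no bound $\deg(w)<n/2$ is needed.
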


Finally, if we restrict to instances with instance-complexity $\tau<n/10$, then we have

\begin{theorem}
\label{thm:undirected-tau}
    In the order-dependent model, \Cref{alg:bi-dj-new} is instance-optimal on undirected graphs with $\tau < n/10$.
\end{theorem}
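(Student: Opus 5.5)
The goal is to show $|E_t| = O(Q_A)$ for every good algorithm $A$ whenever $\tau<n/10$. This yields a constant instance-optimality ratio by \Cref{lem:time_bj}. The argument follows the template of \Cref{thm:directed-tau}, with undirectedness removing the need to handle high-degree vertices separately.

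Let $B$ be the set of $1/10$-bad vertices. As in the proof of \Cref{thm:directed-tau}, we get $\tau\ge (n-|B|)/10$, so $|B|\ge n-10\tau > 0$. Suppose $Q_A < c|E_t|$ for a small constant $c$. The aim is to build $G'$ via Constructions~(\hyperref[cons:a]{a})--(\hyperref[cons:c]{c}) that $A$ fails to distinguish from $G$.

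\textbf{Step 1: $V_s$ and $V_t$.} First, $Q_A=\Omega(|V_s|+|V_t|)$. Otherwise there are $1/5$-bad vertices $u\in V_s$ and $v\in V_t$, and we add or shorten the edge $(u,v)$ as in the case $\Delta\ge n/2$ of \Cref{lem:QA-Delta^2}.

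\textbf{Step 2: choose a bad edge on each side.} Since $Q_A<c|E_t|$, a constant fraction of $E_s$ and of $E_t$ consists of $1/10$-bad edges. Most of these have an endpoint of degree less than $|B|$, which we argue as follows. Each vertex of degree at least $n-10\tau$ (call the set $H$) is adjacent to all but at most $10\tau$ vertices. Fix a vertex $h\in H$. If some bad edge in $E_s$ and some bad edge in $E_t$ are both incident to $h$, or $h$ lies in both $V_s$ and $V_t$, we try a direct Construction~(\hyperref[cons:a]{a}). In an undirected graph, two vertices of degree at least $n-10\tau>n/2$ share a common neighbor, and we may take that neighbor to be a bad vertex, since $|B|+2(n-10\tau)>2n$ when $\tau<n/10$. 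Hence high-degree vertices can be used as hubs: a $u_1\in V_s$ adjacent to $h$ and a $v_2\in V_t$ adjacent to $h'$ with $h,h'\in H$ connect through a common neighbor. The path $u_1 \to h \to w \to h' \to v_2$ has length $O(1)$, and we need its edges to be bad. This needs care. The plan is to pick $w\in B$ and use \Cref{lem:dist-st}, reweighting only edges incident to bad vertices or bad edges. If such hub edges are good, we charge $\Omega(1)$ visits per vertex of $V_s\cup V_t$, which is already covered by Step 1.

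\textbf{Step 3: the generic case.} Take bad edges $(u_1,v_1)\in E_s$ and $(u_2,v_2)\in E_t$ whose non-search endpoints $v_1,u_2$ are not in $H$. Apply Construction~(\hyperref[cons:c]{c}) exactly as in \Cref{clm:directed-tau}: reroute $(u_1,v_1)$ to $(u_1,w)$ and $(u_2,v_2)$ to $(w,v_2)$ for some $w\in B$. Then restore the degrees of $v_1$ and $u_2$ by adding edges to fresh vertices of $B$; this is possible because their degree is below $n-10\tau\le|B|$. In the undirected setting, $w$'s degree rises by $2$, which is acceptable since $w$ is bad. The degrees of $u_1$ and $v_2$ are unchanged. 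The added degree-fixing edges must avoid creating an even shorter path that breaks \Cref{lem:dist-st}; we give them large weight.

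\textbf{Main obstacle.} The crux is bounding the number of edges of $E_s$ and $E_t$ whose far endpoint lies in $H$, without losing a factor $h$ as in \Cref{thm:directed-tau}. I would exploit symmetry: in an undirected graph, an edge $(u,v)\in E_s$ with $v\in H$ also lets the backward search reach $v$. Moreover, since $|H|\cdot(n-10\tau)\le 2m$ and all vertices in $H$ pairwise share bad common neighbors, the whole of $H$ acts as a single hub. I expect to show that either $A$ visits $\Omega(1)$ edges per such edge, or a hub path of length at most $4$ with all edges bad exists. Making this charging argument precise is the delicate step.
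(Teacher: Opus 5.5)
Your proposal has a genuine gap, and it sits at the step you flag yourself. You port the degree-repair argument of \Cref{thm:directed-tau}. Degree repair only works at far endpoints of degree below $|B|$, so the bad edges of $E_s$ and $E_t$ whose far endpoint lies in $H$ are exactly what cost the factor $h$ there, and your plan never removes them. The hub path $u_1-h-w-h'-v_2$ needs $\{u_1,h\}$ and $\{h',v_2\}$ to be bad, which nothing guarantees. The fallback of charging $\Omega(1)$ per vertex of $V_s\cup V_t$ only gives $Q_A=\Omega(|V_s|+|V_t|)$, which brings back the $O(h)$ bound rather than $|E_t|=O(Q_A)$. The numerics of Step~2 also fail on part of the range: $n-10\tau>n/2$ needs $\tau<n/20$, and $|B|+2(n-10\tau)>2n$ needs $\tau<n/30$. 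Finally, Step~3 has an undirected-specific flaw. Deleting $\{u_1,v_1\}$ can increase $d(s,u_1)$, because the edge from $u_1$ back to its shortest-path parent is itself in $E_s$. In the directed setting, removing an out-edge of $u_1$ can never do this.

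The missing idea is that undirected graphs need no degree repair at all: you subdivide the bad edge through a bad vertex. Take a $1/10$-bad vertex $w$ and $1/10$-bad edges $\{u_1,v_1\}$ from $E_s$ and $\{u_2,v_2\}$ from $E_t$. If these four endpoints are not distinct, a bad path of length at most $2$ from $u_1$ to $v_2$ already exists. If $\{u_1,w\}$ or $\{v_1,w\}$ is already an edge, you already have a bad $u_1$--$w$ path, since every edge at a bad vertex is bad. Otherwise, replace $\{u_1,v_1\}$ by $\{u_1,w\}$ and $\{w,v_1\}$, placing each new edge in the adjacency-list slot vacated by $\{u_1,v_1\}$, with $\ell(u_1,w)+\ell(w,v_1)\le \ell(u_1,v_1)$. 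This leaves the degrees of $u_1$ and $v_1$ unchanged, alters only the degree of the bad vertex $w$, and increases no distance. Doing the same on the $t$-side yields $G'$ with a route $u_1-w-v_2$ whose weights you can set so that $d_{G'}(s,t)<d_G(s,t)$, using \Cref{lem:dist-st}. $A$ detects the change only by visiting $w$, $\{u_1,v_1\}$ or $\{u_2,v_2\}$. This is \Cref{lem:undirected-QA}: if $Q_A<|E_t|/10$ then $Q_A\ge n/10$. For $\tau<n/10$ (after adjusting constants relating $Q_A$ to query cost) the second alternative is excluded, so $|E_t|=O(Q_A)$, and the set $H$ never enters the argument.
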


In the following, we will first prove~\Cref{thm:undirected-tau}, building on which we prove~\Cref{thm:undirected-fixed-m} and~\Cref{thm:undirected-fixed}. 

\subsubsection{Proof of \Cref{thm:undirected-tau}}
\begin{proof}[Proof of \Cref{thm:undirected-tau}]
The proof builds on the following lemma. 
\begin{lemma}
\label{lem:undirected-QA}
    If $Q_A<|E_t|/10$, then $Q_A\ge n/10$.
\end{lemma}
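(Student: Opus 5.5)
We argue by contradiction: assume $Q_A<|E_t|/10$ and $Q_A<n/10$. From these we build a graph $G'$ with $d_{G'}(s,t)<d_G(s,t)$ that differs from $G$ only on $O(1)$-bad edges and vertices, in the spirit of Construction~(\hyperref[cons:c]{c}). Since the graph is undirected, edges are unordered pairs, and in- and out-degrees coincide.

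First we collect the bad objects. Since $Q_A<|E_t|/10$ and $|E_s|\ge|E_t|$, a counting argument as in \Cref{lem:Et-Delta^2} gives a constant fraction of $1/4$-bad edges in each of $E_s$ and $E_t$. Fix a bad edge $(u_1,v_1)\in E_s$ with $u_1\in V_s$, and a bad edge $(u_2,v_2)\in E_t$ with $v_2\in V_t$. Since $Q_A<n/10$, more than half of the vertices are $1/5$-bad; call this set $B$, so $|B|>n/2$. If $u_1,v_1,u_2,v_2$ are not all distinct, the two edges already form a path of length at most two from $u_1$ to $v_2$, and we apply Construction~(\hyperref[cons:a]{a}) directly. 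Hence we assume they are distinct.

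The main step is to route through a bad vertex $w\in B$. We want $u_1\!-\!w\!-\!v_2$ to be a path whose new edges touch only bad vertices, while all degrees are preserved except at bad vertices. We delete $(u_1,v_1)$ and $(u_2,v_2)$ and add $(u_1,w)$ and $(w,v_2)$; this keeps the degrees of $u_1$ and $v_2$ and raises the degree of $w$ by $2$, which is harmless since $w$ is bad. It remains to restore the degrees of $v_1$ and $u_2$. The simplest choice is to add the single edge $(v_1,u_2)$ if it is absent; then only the bad vertex $w$ changes degree. If $(v_1,u_2)$ is present, we instead attach $v_1$ and $u_2$ to bad vertices $x,y\in B$ that are not already adjacent to them.

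This is where undirectedness matters, and it is the step I expect to be the obstacle. A vertex $v_1$ may have degree close to $n$, so a non-neighbor in $B$ need not exist; $|B|>n/2$ only guarantees one when $\deg(v_1)<|B|$. The plan to handle high-degree vertices is as follows. If $\deg(v_1)\ge n/2$, then $v_1$ is adjacent to more than a constant fraction of $B$; in particular it is adjacent to some bad vertex $w'$. If we choose $w$ among the bad neighbors of $v_1$, or, symmetrically, of $u_2$, then $u_1\!-\!v_1\!-\!w$ is already a short path of bad edges, up to checking that the edge $(v_1,w)$ is $O(1)$-bad. That check has to come from the bad endpoint $w$: since $w$ is visited with probability below $1/5$, every edge incident to $w$ is visited with probability below $1/5$. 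Thus every edge at a bad vertex is bad.

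With this observation the cases close. If $v_1$ has a bad neighbor $w$ and $u_2$ has a bad neighbor $w'$, then the path $u_1,v_1,w,\dots$ needs a link from $w$ to $w'$ or to $v_2$. We obtain it by swapping the edges $(v_1,w)$ and $(u_2,w')$ into $(v_1,u_2)$ and $(w,w')$, when those are absent, or into $(w,u_2)$ and $(v_1,w')$. Each of these swaps changes only edges at bad vertices and preserves all degrees. If instead some endpoint has low degree, the non-neighbor repair above works. In every case we obtain an $O(1)$-length path of $O(1)$-bad edges from $V_s$ to $V_t$, after which Construction~(\hyperref[cons:a]{a}) and \Cref{lem:dist-st} give the contradiction via a union bound over $O(1)$ modified objects.
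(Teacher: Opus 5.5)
Your plan has the right ingredients: a bad vertex $w$ as a relay, and the observation that every edge incident to a bad vertex is itself bad. But the construction built around them has a genuine gap.

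\textbf{The unresolved case.} The case your high-degree discussion must handle is the one where $(v_1,u_2)$ is present and $1/10$-good, since that is what sends you looking for non-neighbors in $B$. Here neither swap works.
\begin{itemize}
\item The first swap $\{(v_1,w),(u_2,w')\}\to\{(v_1,u_2),(w,w')\}$ is unavailable, because $(v_1,u_2)$ already exists.
\item The second swap $\{(v_1,w),(u_2,w')\}\to\{(w,u_2),(v_1,w')\}$ only exchanges which bad vertex hangs off $v_1$ and which hangs off $u_2$. Afterwards you still have $u_1$--$v_1$--$w'$ and $w$--$u_2$--$v_2$, with no edge between $w$ and $w'$.
\end{itemize}
Simply adding $(w,w')$ would be fine, since only degrees of bad vertices change, but that is not what you propose.

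\textbf{Distances are not controlled.} Your main construction deletes $(u_1,v_1)$ and $(u_2,v_2)$ without replacing them by paths. In an undirected graph, $(u_1,v_1)$ can be the last edge of every shortest $s$--$u_1$ path (when $v_1$ is $u_1$'s predecessor). Then $d_{G'}(s,u_1)$ may exceed $d_G(s,u_1)$, and the new link $u_1$--$w$--$v_2$ need not give $d_{G'}(s,t)<d_G(s,t)$. \Cref{lem:dist-st} only controls distances in $G$.

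\textbf{The missing idea: subdivide the bad edge through $w$.} Replace $(u_1,v_1)$ by $(u_1,w)$ and $(w,v_1)$. This removes both problems and the whole high-degree case distinction.
\begin{itemize}
\item In an undirected graph this leaves the degrees of $u_1$ and $v_1$ unchanged. Only the degree of the bad vertex $w$ changes, so you never need a non-neighbor of $v_1$ in $B$. A single bad vertex suffices; you do not need $|B|>n/2$.
\item Put weights $\varepsilon$ and $\ell(u_1,v_1)-\varepsilon$ on the two new edges. Every path of $G$ then survives at its old length, so no distance increases.
\item If $(u_1,w)$ or $(v_1,w)$ is already present, it is bad because $w$ is bad. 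So $u_1$ reaches $w$ by at most two bad edges with no modification at all.
\end{itemize}
Doing the same with $(u_2,v_2)$ gives a bad path $u_1$--$w$--$v_2$, and Construction~(a) finishes. This is exactly the paper's proof.
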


\begin{proof}
    We assume for contradiction that $Q_A<n/10$. Then there must exist a $1/10$-bad vertex $w$.
    Since $Q_A<|E_t|/10$, there must exist $1/10$-bad edges $(u_1,v_1)\in E_s$ and $(u_2,v_2)\in E_t$.
    
    If $u_1,v_1,u_2,v_2$ are not all distinct, then $u_1$ and $v_2$ are already connected by bad edges. Then we are able to apply Construction~(\hyperref[cons:a]{a}) to construct $G'$. 
    
    Otherwise, we are going to connect $u_1$ with $w$ (and symmetrically connect $v_2$ with $w$).
    If $(u_1,w)\in E$ or $(v_1,w)\in E$, then $u_1$ is already connected to $w$. Otherwise, we can replace $(u_1,v_1)$ by $\{(u_1,w),(v_1,w)\}$, without changing the degrees of $u_1$ and $v_1$. Then we are able to apply Construction~(\hyperref[cons:b]{b}) to construct $G'$. 
\end{proof}

\cref{lem:undirected-QA} directly implies the \Cref{thm:undirected-tau}.
\end{proof}

\subsubsection{Proof of \Cref{thm:undirected-fixed-m} and \Cref{thm:undirected-fixed}}
The proofs build on the following lemma.
\begin{lemma}
\label{lem:undirected-Et}
    If $Q_A<|E_t|/40$, then $|E_t|/Q_A = O(\Delta/\sqrt{|E_t|})$.
\end{lemma}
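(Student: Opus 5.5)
We aim for the equivalent bound $Q_A=\Omega(|E_t|^{3/2}/\Delta)$. The argument is by contradiction, using Constructions~(\hyperref[cons:a]{a}) and~(\hyperref[cons:b]{b}) from~\Cref{sec:framework}. Since $Q_A<|E_t|/40$, at least half of the edges in $E_s$ and at least half of those in $E_t$ are $1/20$-bad. Call these sets $B_s\subseteq E_s$ and $B_t\subseteq E_t$, so $|B_s|,|B_t|\ge |E_t|/2-1$. For any $(u_1,v_1)\in B_s$ and $(u_2,v_2)\in B_t$, the argument of \Cref{lem:Et-Delta^2} applies. If the four endpoints are not distinct, or if $(u_1,v_2)\in E$, or if $(u_2,v_1)\in E$ with that edge bad, we get a short path of bad edges from $V_s$ to $V_t$ and Construction~(\hyperref[cons:a]{a}) applies. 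If neither $(u_1,v_2)$ nor $(u_2,v_1)$ is in $E$, the swap of Construction~(\hyperref[cons:b]{b}) applies. Either way $A$ is contradicted. Hence for every such pair at least one of the undirected edges $\{u_1,v_2\}$, $\{u_2,v_1\}$ (or $\{v_1,u_2\}$, $\{v_1,v_2\}$, etc., by undirected symmetry) must exist and be $1/5$-good. The usable alternative patterns, one per way of orienting the two bad edges, give one of the four ``cross'' pairs between $\{u_1,v_1\}$ and $\{u_2,v_2\}$.

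So every pair $(e,f)\in B_s\times B_t$ is \emph{covered} by a good edge joining an endpoint of $e$ to an endpoint of $f$. The next step is a charging argument. A good edge $g=\{a,b\}$ can cover the pair $(e,f)$ only if $e$ is incident to $a$ and $f$ is incident to $b$, or vice versa. Since degrees are at most $\Delta$, a single good edge covers at most $2\Delta^2$ pairs. This alone gives $Q_A\gtrsim |E_t|^2/\Delta^2$, which is too weak; we need the finer count.

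For the finer count, fix $e=(u_1,v_1)\in B_s$. The edges of $B_t$ it must be covered against number about $|E_t|/2$. Each of them is covered through a good edge incident to $u_1$ or $v_1$. Since each good edge at $u_1$ (say $\{u_1,b\}$) covers at most $\deg(b)\le\Delta$ edges $f$, vertex $u_1$, or $v_1$, has at least $|E_t|/(4\Delta)$ incident good edges. The point is to choose $e$ to avoid double counting. Because $B_s$ has $\Omega(|E_t|)$ edges, the endpoints of $B_s$ carry $\Omega(|E_t|)$ edge-incidences. Here I use that $V_s$ is small relative to $E_s$ only via $|E_s|\le |V_s|\Delta$.

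I expect the main obstacle to be balancing two regimes; I would try the following split first. Let $k$ be the number of distinct vertices among endpoints of $B_s$ that receive the ``forced'' $|E_t|/(4\Delta)$ good edges. If $k\ge\sqrt{|E_t|}$, then $Q_A\ge k|E_t|/(8\Delta)\ge |E_t|^{3/2}/(8\Delta)$, summing good edges with each counted at most twice. If $k<\sqrt{|E_t|}$, then all of $B_s$ is concentrated on fewer than $\sqrt{|E_t|}$ vertices, so their pairwise edges among themselves are at most $|E_t|/2$. Hence a constant fraction of $B_s$ leaves this set, and I would exploit this by a symmetric application to $B_t$ on the other side. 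If the analogous backward count is also small, the bad edges on both sides are confined to $O(\sqrt{|E_t|})$ vertices each, and the swap argument forces good edges between these small sets. Bounding the required good edges by $|E_t|^2$ divided by the coverage capacity $\sqrt{|E_t|}\cdot\Delta$ again yields $Q_A=\Omega(|E_t|^{3/2}/\Delta)$. Making this second case rigorous, specifically controlling how many pairs one good edge covers when endpoints are concentrated, is the delicate step.
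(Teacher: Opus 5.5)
Your proposal has a genuine gap, and it is in the second regime, where you already expected trouble. The only structural fact you extract from the swap argument is that every pair $(e,f)\in B_s\times B_t$ is covered by \emph{one} good edge joining an endpoint of $e$ to an endpoint of $f$. This is essentially the paper's first claim, that $\{u_1,v_2\}$ or $\{v_1,u_2\}$ is good. That property alone cannot give $Q_A=\Omega(|E_t|^{3/2}/\Delta)$, so no refinement of your counting will close the second regime.

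Here is a configuration showing this. Let $B_s$ be complete bipartite between a set $X$ of $a$ vertices (the $V_s$-endpoints) and a set $Y$ of $\Delta$ vertices. Let $B_t$ be complete bipartite between a set $X'$ of $a$ vertices (the $V_t$-endpoints) and a set $Y'$ of $\Delta$ vertices, with $a=o(\Delta)$. Let exactly the $a^2$ edges of $X\times X'$ be good.
\begin{itemize}
\item Every pair is covered, and $|E_t|\approx a\Delta$.
\item Your heavy vertices are $X$, so $k=a<\sqrt{a\Delta}$; this is your second regime.
\item There are only $a^2=|E_t|^2/\Delta^2=o(|E_t|^{3/2}/\Delta)$ good edges.
\item Each good edge covers $\Delta^2\gg\sqrt{|E_t|}\,\Delta$ pairs, so the ``coverage capacity'' you invoke is false.
\end{itemize}
Your first regime is fine; the failure is confined to the second.

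The missing idea is a \emph{second} forced good edge per pair, coming from the other pairing: one of $\{u_1,u_2\}$, $\{v_1,v_2\}$ must also be good. The alternative swap $\{u_1v_1,u_2v_2\}\to\{u_1u_2,v_1v_2\}$ disconnects $u_1$ from $v_1$, so the paper restricts to \emph{very bad} edges. An edge $(u,v)\in E_s$ is very bad if either of the following holds.
\begin{itemize}
\item There is another bad $(u',v)\in E_s$ with $d(s,u')\le d(s,u)$. Then $u'\to v_1\to v_2$ gives the path, and $d(s,u')$ is unaffected by removing $(u_1,v_1)$.
\item $v$ is a bad vertex. Then its degree may change, so one can simply replace $(u_1,v_1)$ by $(u_1,u_2)$ when $u_2$ is also bad; otherwise $u_2$ is good and the symmetric argument applies.
\end{itemize}
Good vertices are few, and each carries at most one bad-but-not-very-bad edge. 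Hence a constant fraction of $E_s$ and $E_t$ is very bad, giving $\Omega(|E_t|^2)$ quadruples. In each quadruple, the good edge from $\{u_1v_2,v_1u_2\}$ and the good edge from $\{u_1u_2,v_1v_2\}$ share an endpoint $v$, and the quadruple is charged to $v$. Let $k_v$ be the number of good edges at $v$. There are at most $\Delta$ choices for the very bad edge at $v$, and at most $\min\{k_v^2,|E_t|\}\le k_v\sqrt{|E_t|}$ choices for the rest. Summing over $v$ gives $|E_t|^2=O(\Delta\sqrt{|E_t|}\,Q_A)$. In the configuration above, this second claim forces $\Omega(a\Delta)=\Omega(|E_t|)$ good edges, which is why it is indispensable.

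A minor point: $(u_1,v_2)\in E$ gives a bad path only when that edge is bad.
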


\begin{proof}
    For a $1/10$-bad edge $(u,v)\in E_s$, we say it is \emph{very bad} if (1) there is another $1/10$-bad edge $(u',v)\in E_s$ such that $d(s,u')\le d(s,u)$, or (2) $v$ is a $1/10$-bad vertex. We symmetrically define very bad edges in $E_t$.
    Since $Q_A<|E_t|/40$, at least half of the edges in $E_t$ are $1/10$-bad, and there are at most $|E_t|/4$ $1/10$-good vertices.
    For each of these $1/10$-good vertices, it has at most one incident edge that is a $1/10$-bad edge in $E_t$ but not very bad.
    Therefore, at least $1/8$ of the edges in $E_t$ are very bad.
    Similarly, at least $1/8$ of the edges in $E_s$ are very bad.
    
    Let $S$ be the set of quadruples $(u_1,v_1,u_2,v_2)$ such that $(u_1,v_1)$ is a very bad edge in $E_s$ and $(u_2,v_2)$ is a very bad edge in $E_t$. Then
    \begin{equation}
    \label{eqn:S-lower}
        |S| = \Omega(|E_t|^2).
    \end{equation}

    The next step is to give an upper bound of $|S|$.

    \begin{claim}
    \label{clm:undirected-good-1}
        For any $(u_1,v_1,u_2,v_2)\in S$, at least one of $(u_1,v_2)$ and $(v_1,u_2)$ is a $1/10$-good edge.
    \end{claim}

    \begin{proof}
        If one of them is a $1/10$-bad edge, then $u_1$ is connected to $v_2$ by bad edges.
        If both of them do not exist, then we can replace $\{(u_1,v_1),(u_2,v_2)\}$ by $\{(u_1,v_2),(v_1,u_2)\}$.
    \end{proof}

    \begin{claim}
        For any $(u_1,v_1,u_2,v_2)\in S$, at least one of $(u_1,u_2)$ and $(v_1,v_2)$ is a $1/10$-good edge.
    \end{claim}

    \begin{proof}
        If one of them is a $1/10$-bad edge, then $u_1$ is connected to $v_2$ by bad edges.
        Now assume both of them do not exist. If both $v_1$ and $u_2$ are $1/10$-bad, we can replace $(u_1,v_1)$ by $(u_1,u_2)$.
        Otherwise, assume $v_1$ is $1/10$-good by symmetry. We replace $\{(u_1,v_1),(u_2,v_2)\}$ by $\{(u_1,u_2),(v_1,v_2)\}$.
        By the definition of very bad edges, we connect some $u_3\in V_s$ to $v_2$ where $(u_3,v_1)$ is a $1/10$-bad edge and $d(s,u_3)\le d(s,u_1)$. Since $d(s,u_3)\le d(s,u_1)$, the shortest path from $s$ to $u_3$ will not be affected by removing $(u_1,v_1)$.
    \end{proof}

    By the two claims, the quadruple contains one $1/10$-good edge from $\{(u_1,v_2),(v_1,u_2)\}$ and one $1/10$-good edge from $\{(u_1,u_2),(v_1,v_2)\}$.
    Any such two edges share an endpoint.
    Thus some vertex in $\{u_1,v_1,u_2,v_2\}$ is incident to two $1/10$-good edges within the quadruple.
    We charge the quadruple to this vertex.
    
    For a vertex $v$, let $k_v$ be the number of $1/10$-good edges incident to $v$.
    Suppose a quadruple is charged to $v$.
    Then $v$ is incident to one very bad edge of the quadruple and to two $1/10$-good edges inside the quadruple.
    There are at most $\Delta$ choices for the very bad edge and at most $k_v^2$ choices for the two $1/10$-good edges.
    Also, there are at most $|E_t|$ choices for the other very bad edge of the quadruple.
    So $v$ is charged at most $\Delta\min\{k_v^2,|E_t|\} \leq \Delta \sqrt{|E_t|} k_v$ times.
    
    We further note that $\sum_v k_v$ is twice the number of $1/10$-good edges.
    Since each $1/10$-good edge is queried with probability at least $1/10$, the number of good edges is at most $10Q_A$.
    Summing over all vertices, we get
    \begin{equation}
    \label{eqn:S-upper}
        |S| \leq \Delta \sqrt{|E_t|}\sum_{v}  k_v = O(Q_A\Delta \sqrt{|E_t|}).
    \end{equation}
    Combining \cref{eqn:S-lower,eqn:S-upper} proves the lemma.
\end{proof}

\begin{proof}[Proof of~\Cref{thm:undirected-fixed}]
    If $|E_t| \le n^{2/3}\Delta^{2/3}$, the theorem follows from \cref{lem:undirected-QA}; if $|E_t| > n^{2/3}\Delta^{2/3}$, the theorem follows from \cref{lem:undirected-Et}.
\end{proof}

\begin{proof}[Proof of~\Cref{thm:undirected-fixed-m}]
The instance-optimality ratio of \Cref{alg:bi-dj-new} is $O(\lceil m/n \rceil)$ by \cref{lem:undirected-QA} and $|E_t|=O(m)$, or $O(n^{1/3})$ by \cref{thm:undirected-fixed} and $\Delta=O(n)$.
\end{proof}

\input{fewhighdegree}

%% file: fewhighdegree.tex
% \christian{
\subsection{Few high-degree vertices}
We now extend our results on the instance-optimality ratio of bidirectional Dijkstra on easy instances by showing that it admits an upper bound of $O(\hD+\lceil D^2/n\rceil)$ on directed graphs with at most $\hD$ vertices whose in-degree or out-degree exceeds $D$.

\begin{theorem}\label{thm:directed-fixed-h-D}
In the order-dependent model, the instance-optimality ratio of~\Cref{alg:bi-dj-new} is $O(\hD+\lceil D^2/n\rceil)$ on directed graphs with at most $\hD$ vertices whose in-degree or out-degree is larger than $D$.
\end{theorem}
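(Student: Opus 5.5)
We aim to show $|E_t| = O\bigl((\hD+\lceil D^2/n\rceil)\,Q_A\bigr)$ for every good algorithm $A$. Combined with \Cref{lem:time_bj}, this gives the claimed ratio. The plan is to merge the arguments behind \Cref{thm:directed-fixed} and \Cref{thm:directed-tau}. The high-degree vertices are treated as an exceptional set $H$ with $|H|\le\hD$, as in the proof of \Cref{thm:directed-tau}. The remaining low-degree vertices are handled by the counting argument of \Cref{lem:Et-Delta^2} and \Cref{lem:QA-Delta^2}, with $\Delta$ replaced by $D$.

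\textbf{Step 1: discard edges touching $H$.} Define $E_s'=\{(u,v)\in E_s \mid v\notin H\}$ and $E_t'=\{(u,v)\in E_t\mid u\notin H\}$. Each vertex of $H$ receives at most $|V_s|$ edges from $V_s$, so $|E_s'|\ge |E_s|-\hD|V_s|$, and similarly for $E_t'$. We already know $Q_A=\Omega(|V_t|)$ and $Q_A=\Omega(|V_s|)$: otherwise some bad $u\in V_s$ and bad $v\in V_t$ exist, and adding or reweighting $(u,v)$ gives Construction~(\hyperref[cons:a]{a}). Hence the discarded edges cost at most $O(\hD\,Q_A)$. It remains to show $|E_t'|=O(\lceil D^2/n\rceil Q_A)$ when, say, $Q_A<|E_t'|/40$.

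\textbf{Step 2: analogue of \Cref{lem:Et-Delta^2}.} Take a $1/4$-bad edge $(u_1,v_1)\in E_s'$ and a $1/4$-bad edge $(u_2,v_2)\in E_t'$, chosen so that $(u_1,v_2)\notin E$ and $(u_2,v_1)\notin E$. Then apply Construction~(\hyperref[cons:b]{b}), or Construction~(\hyperref[cons:a]{a}) in the degenerate cases. The forbidden edges are those with $(u_1,v_2)\in E$ or $(u_2,v_1)\in E$. Here $u_1$ may lie in $H$, so I would count differently. The edges $(u_2,v_2)\in E_t'$ with $(u_2,v_1)\in E$ number at most $D\cdot D$, since $v_1\notin H$ has in-degree at most $D$ and each $u_2\notin H$ has out-degree at most $D$. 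For the other condition, we choose $(u_1,v_1)$ so that $u_1\notin H$ as well, or else use symmetry. If too many bad edges of $E_s'$ start in $H$, these edges number at most $\hD|V_s|=O(\hD Q_A)$ and can be discarded. This yields $|E_t'|=O(D^2+\hD Q_A)$ unless $A$ fails.

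\textbf{Step 3: analogue of \Cref{lem:QA-Delta^2}, the expected main obstacle.} We must show $Q_A=\Omega(n)$ whenever $Q_A<|E_t'|/(10\lceil D^2/n\rceil)$. In the case $D<n/2$, I would follow Construction~(\hyperref[cons:c]{c}) verbatim. Take a bad vertex $w$ and bad edges $(u_1,v_1)\in E_s'$, $(u_2,v_2)\in E_t'$ with $v_1,u_2\notin H$. Reroute them through $w$, and repair the degrees of $v_1$ and $u_2$ using bad non-neighbours $x,y$. These exist because $v_1$ has in-degree at most $D<n/2$ and $Q_A<n/20$. This low-degree requirement is exactly why $E_s',E_t'$ exclude $H$ on the appropriate side. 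In the case $D\ge n/2$, we have $\lceil D^2/n\rceil=\Theta(D)$ and $|E_t'|\le D|V_t|+\hD|V_t|$, so $Q_A=\Omega(|V_t|)$ already gives the bound. Combining the steps yields $|E_t|=O((\hD+\lceil D^2/n\rceil)Q_A)$ whenever $Q_A=o(n)$. If $Q_A=\Omega(n)$, Step 2 bounds $|E_t|$ by $O(D^2+\hD Q_A)=O((\lceil D^2/n\rceil+\hD)Q_A)$.

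The delicate point is ensuring that every modified vertex whose degree must be repaired lies outside $H$, while the discarded edges stay charged to $\hD\cdot Q_A$.
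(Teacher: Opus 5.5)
\paragraph{Verdict.} There is a genuine gap. Your decomposition matches the paper's: an analogue of \Cref{lem:Et-Delta^2} plus an analogue of \Cref{lem:QA-Delta^2}, with $D$ in place of $\Delta$. The problem is how Step 1 disposes of $H$, and that step is load-bearing.

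\paragraph{The gap in Step 1.} You claim $Q_A=\Omega(|V_s|)$ and $Q_A=\Omega(|V_t|)$ because otherwise bad vertices $u\in V_s$ and $v\in V_t$ exist. That argument needs a bad vertex on \emph{both} sides at the same time. So it only yields $Q_A=\Omega(\min\{|V_s|,|V_t|\})$, and $|V_s|$ and $|V_t|$ can differ wildly. For example, the forward search may stay at a single $s$ of large out-degree while the backward search closes many in-degree-one vertices. You use the one-sided bounds in two places:
\begin{itemize}
\item to charge the up to $\hD|V_t|$ discarded edges of $E_t\setminus E_t'$ to $O(\hD Q_A)$;
\item implicitly in Step 3. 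The hypothesis $Q_A<|E_t'|/(10\lceil D^2/n\rceil)$ gives a bad edge in $E_t'$ but not in $E_s'$, since only $|E_s'|\ge |E_t'|-\hD|V_s|$ holds. If almost all of $E_s$ points into $H$ and $E_s'$ is entirely good, your Construction~(c) has nothing to reroute on the $s$-side. Because $\hD$ is missing from your denominator, you also cannot force a bad vertex into $V_s$. So Step 3's conclusion $Q_A=\Omega(n)$ does not follow.
\end{itemize}

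\paragraph{How the paper avoids this.} It never discards edges. It keeps $\hD$ in the hypothesis, $Q_A<|E_t|/(10\lceil \hD+D^2/n\rceil)$, and uses the bad hub $w$ so that each side is handled on its own. On the $s$-side there are two cases:
\begin{itemize}
\item Some bad $(u_1,v_1)\in E_s$ has $v_1\notin H$. It is rerouted to $w$, and $\indeg(v_1)$ is repaired by a bad non-neighbour (Case~2.1 of \Cref{lem:QA-h-D}).
\item All such edges are good. Then $|E_s|=O(\hD|V_s|)$, and the $\hD$ in the denominator gives $Q_A<|V_s|/5$. This produces a bad $u\in V_s$ that is simply joined to $w$ (Case~2.2).
\end{itemize}
The $t$-side is symmetric. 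This one-sided dichotomy through $w$ is the missing idea, and with it your Step 1 becomes unnecessary. For $D\ge n/2$, the paper uses $|E_t|\le n|V_t|$ and $|E_t|\le |E_s|\le n|V_s|$ to get bad vertices on both sides. That also repairs your appeal to $Q_A=\Omega(|V_t|)$ in that case.

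\paragraph{A secondary overcount in Step 2.} Two counts are too optimistic:
\begin{itemize}
\item Edges of $E_s'$ with tail in $H$ can number up to $\hD n$, not $\hD|V_s|$.
\item Forbidden edges $(u_2,v_2)\in E_t'$ with $(u_1,v_2)\in E$ are not $O(D^2)$, because $v_2$ may lie in $H$; $E_t'$ restricts only tails.
\end{itemize}
So Step 2 yields only $|E_t|=O(n\hD+D^2)$, exactly as in \Cref{lem:Et-h-D}. This is harmless, because you invoke Step 2 only when $Q_A=\Omega(n)$.
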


Let $\HD$ be the set of vertices of degree larger than $D$, and write $\hD = |\HD|$.

\begin{lemma}
\label{lem:Et-h-D}
    If $Q_A<|E_t|/10$, then $|E_t|=O( n \hD+D^2)$.
\end{lemma}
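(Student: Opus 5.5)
We adapt the proof of \Cref{lem:Et-Delta^2}, using the degree bound $D$ for low-degree vertices and treating the at most $\hD$ vertices of $\HD$ separately. Assume for contradiction that $Q_A<|E_t|/10$ and $|E_t|>C(n\hD+D^2)$ for a large constant $C$. As in that proof, at least $\frac35|E_t|$ edges of $E_t$ and at least $\frac35|E_s|$ edges of $E_s$ are $1/4$-bad, since otherwise the good edges alone contribute more than $|E_t|/10$ to $Q_A$.

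\textbf{Step 1: removing edges that touch high-degree vertices.} Let $E_s''$ be the set of $1/4$-bad edges $(u_1,v_1)\in E_s$ with $v_1\notin\HD$. Each $v\in\HD$ has at most $n$ in-edges, so at most $n\hD$ edges of $E_s$ end in $\HD$. Hence $|E_s''|\ge \frac35|E_s|-n\hD>0$. Symmetrically, let $E_t''$ be the set of $1/4$-bad edges $(u_2,v_2)\in E_t$ with $u_2\notin\HD$; then $|E_t''|\ge\frac35|E_t|-n\hD$.

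\textbf{Step 2: choosing two edges that admit a swap.} We want a pair $(u_1,v_1)\in E_s''$ and $(u_2,v_2)\in E_t''$ such that neither $(u_1,v_2)$ nor $(u_2,v_1)$ is in $E$, so that Construction~(\hyperref[cons:b]{b}) applies. If one of these edges exists and is $1/4$-bad, or the endpoints coincide, Construction~(\hyperref[cons:a]{a}) applies instead.

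Fix $(u_1,v_1)\in E_s''$. The edges $(u_2,v_2)\in E_t''$ with $(u_2,v_1)\in E$ have $u_2$ an in-neighbor of $v_1$, and $v_1\notin\HD$ gives at most $D$ such $u_2$. Each such $u_2\notin\HD$ has out-degree at most $D$, so these edges number at most $D^2$.

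The edges with $(u_1,v_2)\in E$ are harder to bound, since $u_1$ may lie in $\HD$. When $u_1\notin\HD$, $v_2$ is one of at most $D$ out-neighbors of $u_1$, and the in-edges of these $v_2$ inside $E_t''$ number at most $D\cdot D$ only if $v_2\notin\HD$. Otherwise they number up to $n$ per vertex.

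\textbf{The main obstacle} is controlling these edges, where the endpoints $u_1$ and $v_2$ are unrestricted. I would handle it by also discarding from $E_t''$ the edges whose head $v_2$ lies in $\HD$; there are at most $n\hD$ of them. Similarly, I would restrict $(u_1,v_1)$ to have tail $u_1\notin\HD$, losing at most $n\hD$ further edges of $E_s$. Then every remaining forbidden edge has both endpoints of degree at most $D$, so there are at most $2D^2$ forbidden edges in total. The surviving count $\frac35|E_t|-2n\hD-2D^2$ is positive once $C$ is large, so a valid swap pair exists.

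Applying Construction~(\hyperref[cons:a]{a}) or~(\hyperref[cons:b]{b}) then yields $G'$ with $d_{G'}(s,t)<d_G(s,t)$ on which $A$ errs with probability at least $1/2$, contradicting that $A$ is good.
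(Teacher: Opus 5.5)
Your proof is correct and follows essentially the same argument as the paper: you count $1/4$-bad edges and discard those touching the at most $h_D$ high-degree vertices, losing $O(nh_D)$ edges. You then bound the ``forbidden'' partners of a fixed bad edge of $E_s$, so that Construction~(b) applies, or Construction~(a) when $u_2=v_1$. The only cosmetic difference is that you also discard $E_t$-edges with an endpoint in $H_D$, which cuts the forbidden count to $2D^2$; the paper keeps those edges and bounds the forbidden count by $2(nh_D+D^2)$, and either way the total loss is $O(nh_D+D^2)$.
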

\begin{proof}
    Assume for contradiction that $|E_t|>10(n\hD+D^2)$.
    At least $\frac{3}{5}|E_t|>6(n\hD+D^2)$ edges in $E_t$ are $1/4$-bad; otherwise these $1/4$-good edges in $E_t$ already contribute more than $|E_t|/10$ to $Q_A$.
    Recalling that $|E_s| \ge |E_t|$, we thus have that at least $\frac{3}{5}|E_s|$ edges in $E_s$ are $1/4$-bad.
    Since $|\HD| = \hD$, at most $2n\hD$ edges have an endpoint in $\HD$.
    Avoiding such edges, there remain more than $6(n\hD+D^2)-2n\hD  \geq 1$ edges in $E_s$ which are $1/4$-bad.
    In particular, there is a $1/4$-bad edge $(u_1,v_1) \in E_s$ with $u_1 \not\in \HD$ and $v_1 \not\in \HD$.

    We now find a $\frac14$-bad edge $(u_2,v_2)\in E_t$ such that $(u_1,v_2)\notin E$ and $(u_2,v_1)\notin E$.
    There are at most $2(n\hD+D^2)$ edges $(u_2,v_2) \in E_t$ that do not satisfy this: at most $n\hD+D^2$ with $(u_1,v_2) \in E$, since $u_1$ has at most $D$ neighbors, and at most $\hD$ of these have degree larger than $D$, and analogously, at most $n\hD+D^2$ with $(u_2,v_1) \in E$.
    Thus, at least $6(n\hD+D^2)-2(n\hD+D^2) \ge 1$ edges in $E_t$ are $1/4$-bad and satisfy both conditions, so an edge $(u_2,v_2)$ like we were looking for indeed exists.
    Note that $u_1\neq u_2$, $v_1\neq v_2$, and $u_1\neq v_2$, otherwise we can apply Construction~(\hyperref[cons:a]{a}) to construct $G'$. 

    If $u_2=v_1$, then the two $1/4$-bad edges already constitute a path from $u_1$ to $v_2$. 
    Otherwise we replace $(u_1,v_1)$ and $(u_2,v_2)$ by $(u_1,v_2)$ and $(u_2,v_1)$.
\end{proof}

\begin{lemma}
\label{lem:QA-h-D}
    If $Q_A< |E_t|/(10\lceil \hD +D^2/n\rceil)$, then $Q_A=\Omega(n)$.
\end{lemma}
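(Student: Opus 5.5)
We follow the template of \cref{lem:QA-Delta^2} and \cref{lem:QA-d}. Assume for contradiction that $Q_A<n/40$. Then at least $n/2$ vertices are $1/20$-bad (otherwise the good ones contribute more than $n/40$). Let $B$ be this set. Since $Q_A<|E_t|/10$, at least half of the edges in $E_t$ and in $E_s$ are $1/10$-bad. We aim to build a length-two path $u_1\to w\to v_2$ with $u_1\in V_s$, $v_2\in V_t$ and $w\in B$, changing only $O(1)$ bad edges and the degrees of $O(1)$ bad vertices. This is Construction~(\hyperref[cons:c]{c}), after which Construction~(\hyperref[cons:a]{a}) finishes via \cref{lem:dist-st}.

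We split into two cases according to the size of $|V_t|$ relative to $|E_t|$, as in \cref{lem:QA-d}.

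\emph{Case 1: $|V_t|\ge |E_t|/(2\lceil \hD+D^2/n\rceil)$.} Then $Q_A<|V_t|/5$, so some $v_2\in V_t$ is $1/5$-bad, and symmetrically (using $|E_s|\ge|E_t|$) some $u_1\in V_s$ is $1/5$-bad. We add the edge $(u_1,v_2)$ directly, or lower its weight if it exists. This changes only the degrees of two bad vertices.

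\emph{Case 2: $|E_t|>2|V_t|\lceil \hD+D^2/n\rceil$.} We want a $1/10$-bad edge $(u_2,v_2)\in E_t$ whose tail satisfies $u_2\notin\HD$.
\begin{itemize}
\item Edges of $E_t$ with tail in $\HD$ number at most $\hD|V_t|$, which is below $|E_t|/2$ by the case assumption.
\item Hence at least one $1/10$-bad edge of $E_t$ has tail of out-degree at most $D$. To make this count robust, we use $1/4$-badness for the $3/5$ fraction, exactly as in \cref{lem:Et-h-D}.
\end{itemize}
Fix a vertex $w\in B$. If $(w,v_2)\notin E$ and $w\neq v_2$, we replace $(u_2,v_2)$ by $(w,v_2)$. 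We then restore the out-degree of $u_2$ by adding an edge $(u_2,y)$ for some $y\in B$ not already an out-neighbour of $u_2$. Such a $y$ exists because $\dout(u_2)\le D$, while $|B|\ge n/2$. Symmetrically, we obtain a bad edge $(u_1,v_1)\in E_s$ with $v_1\notin\HD$, redirect it to $(u_1,w)$, and patch the in-degree of $v_1$ from another vertex in $B$.

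The main obstacle is the degree patching when $D\ge n/2$: a vertex of degree at most $D$ might then be adjacent to all of $B$, so no suitable $y$ need exist. In that regime $D^2/n\ge D/2$. Using $|E_t|\le|V_t|\Delta$ naively fails, because $\Delta$ may be large owing to vertices in $\HD$. Instead we bound the edges of $E_t$ as
\begin{align*}
|E_t| \le |V_t|\hD + |V_t|D,
\end{align*}
counting separately those with tail in $\HD$ and those with tail outside $\HD$. The right-hand side is at most $2|V_t|\lceil\hD+D^2/n\rceil$, so this regime falls back into Case 1. It then remains to check the constants in the lemma's hypothesis so that the two cases cover everything, and to verify that the few modified bad vertices and edges are hit with total probability below $1/2$. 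That union bound gives the contradiction with goodness.
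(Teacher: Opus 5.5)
Your plan is the paper's. For $D\ge n/2$ you find bad vertices $u\in V_s$, $v\in V_t$ and join them. For $D<n/2$, on each side you either reroute a bad edge whose far endpoint lies outside $\HD$ through a bad vertex $w$ and patch the lost degree, or you extract a bad vertex of $V_s$ (resp.\ $V_t$). Your comparison of $|V_t|$ with $|E_t|/(2\lceil \hD+D^2/n\rceil)$ repackages the paper's dichotomy in Cases~2.1/2.2.

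However, two steps fail as written. First, the bound $|E_t|\le |V_t|\hD+|V_t|D$ is false. An edge of $E_t$ with tail outside $\HD$ can have its head in $\HD\cap V_t$, and such a head can absorb up to $n-1$ of these edges. Take $D=n/2$, $\hD=1$, $V_t=\{t\}$, $\indeg(t)=n-1$, with all in-edges of $t$ in $E_t$. Then $|E_t|=n-1$, which exceeds $2|V_t|\lceil 1+n/4\rceil$. So this instance falls into your Case~2, where, as you note yourself, patching is unavailable. The repair is what the paper does: treat $D\ge n/2$ separately using only the trivial bound $|E_t|\le |V_t|\,n$ together with $\lceil \hD+D^2/n\rceil\ge n/4$. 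This gives $Q_A<\frac25|V_t|$, and via $|E_t|\le|E_s|\le|V_s|\,n$ also $Q_A<\frac25|V_s|$.

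Second, your Case~1 is a hypothesis on $V_t$ alone, so ``symmetrically'' does not give a $1/5$-bad vertex in $V_s$. Nothing bounds $|V_s|$ from below in terms of $|E_t|/\lceil \hD+D^2/n\rceil$. For instance, $V_s=\{s\}$ with $s$ of large out-degree, and $A$ may visit $s$ with probability $1$. The same objection applies to ``symmetrically, we obtain a bad edge $(u_1,v_1)\in E_s$ with $v_1\notin\HD$'' in Case~2, which needs $|E_s|>2|V_s|\lceil \hD+D^2/n\rceil$.

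To fix this, run the dichotomy independently on the two sides, comparing $|V_s|$ with $|E_s|/(2\lceil \hD+D^2/n\rceil)$ on the $s$-side. In the vertex case, attach the bad vertex to $w$ by adding $(u,w)$ or $(w,v)$, rather than joining across directly, exactly as in \cref{lem:QA-d} and the paper's Case~2.2. The direct edge $(u_1,v_2)$ is justified only when both sides are in the vertex case, e.g.\ when $D\ge n/2$.

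Finally, with $1/4$-bad edges the union bound no longer gives a hitting probability below $1/2$. You only need it below $4/5$ to contradict success probability $9/10$ on both $G$ and $G'$, and your constants give that.
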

\begin{proof}
    Assume, for contradiction, that $Q_A<n/20$. Then more than $n/2$ vertices are $1/10$-bad.

    \emph{Case 1.} Assume $D\ge n/2$.
    Since $\lceil \hD +D^2/n\rceil\ge n/4$, we have
    \[
        Q_A<\frac{|E_t|}{10\lceil \hD +D^2/n\rceil}
        \leq \frac{|V_t|n}{10\lceil \hD +D^2/n\rceil}
        \leq \frac{2}{5}|V_t|.
    \]
    Similarly, using $|E_t|\le |E_s|$, we get $Q_A<\frac25|V_s|$. 
    So there are $2/5$-bad vertices $u\in V_s$ and $v\in V_t$.
    If $(u,v)\in E$, this gives a bad path from $V_s$ to $V_t$.
    Otherwise we add the edge $(u,v)$, again giving a bad path. We can then apply Construction~(\hyperref[cons:a]{a}) to construct $G'$, contradicting the assumption that $A$ is a good algorithm.  
    % By~\Cref{bad path lemma that shuyi was talking bout}, we have a contradiction.

    \emph{Case 2.} Assume the opposite, namely $D<n/2$.
    Since $Q_A < |E_t|/10 \leq |E_s|/10$, there are $1/10$-bad edges in both $E_s$ and $E_t$.
    Let $w$ be a $1/10$-bad vertex.
    We will now create a bad path from $V_s$ to $V_t$ through $w$, starting with creating a path from $V_s$ to $w$.

    \emph{Case 2.1.} Assume that there exists a $1/10$-bad edge $(u_1,v_1) \in E_s$ with $v_1 \not\in \HD$.
    If $u_1=w$ or $(u_1,w)\in E$, we have a bad path from $V_s$ to $w$.
    Otherwise, replace $(u_1,v_1)$ by $(u_1,w)$.
    If $v_1$ is $1/10$-good, we need to fix its in-degree.
    Since $v_1 \not\in \HD$, the in-degree of $v_1$ is at most $D<n/2$, so there is a $1/10$-bad vertex $x\neq v_1$ with $(x,v_1)\not\in E$.
    We can add $(x,v_1)$, fixing the in-degree of $v_1$.
    We have now created a bad path from $V_s$ to $w$ and changed the degrees of only a few $1/10$-bad vertices.

    \emph{Case 2.2.} Assume the opposite, namely that every edge $(u_1,v_1) \in E_s$ with $v_1 \notin H$ is $1/10$-good.
    There are fewer than $|E_s|/\lceil \hD +D^2/n\rceil$ such edges.
    The remaining edges $(u_1,v_1) \in E_s$ lie in $V_s \times \HD$, so there are at most $|V_s| \hD $ of them.
    In total,
    \[
        |E_s|<\frac{|E_s|}{\lceil \hD +D^2/n\rceil}+h|V_s|.
    \]
    If $\lceil \hD +D^2/n\rceil\ge 2$, this implies $|E_s|<2 \hD  |V_s|\le 2\lceil \hD +D^2/n\rceil |V_s|$, and hence
    \[
        Q_A<\frac{|E_t|}{10\lceil \hD +D^2/n\rceil}
        \le \frac{|E_s|}{10\lceil \hD +D^2/n\rceil}
        <\frac{|V_s|}{5}.
    \]
    If $\lceil \hD +D^2/n\rceil=1$, then $\hD =0$ is impossible in the current case, because then $\HD=\emptyset$ and the case assumption would imply that all edges of $E_s$ are $1/10$-good, contradicting $Q_A<|E_s|/10$. So instead, $\hD =1$ and $D=0$, in which case $|E_s|\le |V_s|$, giving again $Q_A<|V_s|/5$.
    This means that there is a $1/5$-bad vertex $u$ in $V_s$.
    If $u=w$ or $(u,w) \in E$, we already have a bad path from $u$ to $w$.
    Otherwise we add an edge $(u,w)$.
    We have thus again created a bad path from $V_s$ to $w$ and changed the degrees of only the $1/5$-bad vertices $u$ and $w$.

    Together, Case 2.1 and 2.2 tell us that we can connect $V_s$ to $w$ by a bad path.
    Symmetrically, we can connect $w$ to $V_t$ by a bad path, giving the wished-for bad path from $V_s$ to $V_t$.
    By a union bound, $A$ distinguishes the modified graph from the original with probability less than $1/2$, which contradicts the assumption that $A$ is a good algorithm that outputs the correct answer on every instance with probability at least $9/10$. 
    \end{proof}

Combining~\cref{lem:Et-h-D,lem:QA-h-D}, we get~\cref{thm:directed-fixed-h-D}.

%%% Local Variables:
%%% mode: LaTeX
%%% TeX-master: "main"
%%% End:

%% file: no_degree.tex
\section{No degree queries}
\label{sec:nodegree}

In this section, we show that the instance-non-optimality of bidirectional Dijkstra on directed graphs is mainly caused by degree queries (i.e., $\indeg$ and $\outdeg$ for directed graphs, and $\deg$ for undirected graphs). Specifically, we now assume that the only operation available to an algorithm is to take a previously seen vertex and request its next in- or out-neighbor, together with the weight of the corresponding edge. We refer to this as the \emph{no-degree} model.

We retain the distinction between the order-dependent and order-oblivious models: in the former, the ordering is considered part of the input, whereas in the latter, it is not. Consequently, in the order-dependent model, an instance-smart algorithm may exploit the ordering by stopping its requests for subsequent in- or out-neighbors when the most important edges appear near the beginning of an adjacency list. In contrast, in the order-oblivious model, an instance-smart algorithm can only treat the order in which incident edges are returned as uniformly random.

In the no-degree model, we show that bidirectional Dijkstra is instance-optimal in almost all settings, regardless of whether the graph is directed or undirected and whether the model is order-dependent or order-oblivious. The only exception is the order-dependent setting on directed graphs, where bidirectional Dijkstra remains instance-non-optimal.

\subsection{Order-Oblivious Instance-Optimality for Directed Graphs}

We now use the following construction of $G'$, which does not preserve degrees.
\begin{theorem}\label{thm:oblivious-directed-nodegree}
    In the order-oblivious no-degree model, \Cref{alg:bi-dj-new} is instance-optimal on directed graphs.
\end{theorem}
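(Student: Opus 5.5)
We argue by contradiction in the usual framework of \Cref{sec:framework}. Fix a graph $G$ and a good algorithm $A$, and suppose $Q_A<|E_t|/40$. We aim to build $G'$ with $d_{G'}(s,t)<d_G(s,t)$ such that, with probability at least $1/2$, $A$ sees identical answers on $G$ and $G'$. In the no-degree model degrees are not observable except by scanning an entire adjacency list. So instead of preserving degrees, we will insert a single new edge into an adjacency list at a position $A$ is unlikely to reach. This is why the theorem is announced as using a construction ``which does not preserve degrees.''

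The key step is a no-degree analogue of \Cref{lem:oblivious-edge}. Since $Q_A<|E_t|/40$, we can pick a $1/10$-bad edge $(u_1,v_1)\in E_s$ and a $1/10$-bad edge $(u_2,v_2)\in E_t$, so $u_1\in V_s$ and $v_2\in V_t$. The swapping argument of \Cref{lem:oblivious-edge} works verbatim with neighbor queries only, because the $k$-th query into a list returns a uniformly random unseen entry. Hence the probability that $A$ reads a given entry of the out-list of $u_1$ is at most about $4/10$. In particular, with constant probability $A$ does not exhaust the out-list of $u_1$. We then form $G'$ by adding a new edge $(u_1,v_2)$ of tiny weight $\eps<\frac{d(s,t)-d(s,u_1)-d(v_2,t)}{2}$, using \Cref{lem:dist-st}.

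To $A$, inserting this edge at a uniformly random position of the out-list of $u_1$ and of the in-list of $v_2$ looks like the following. A query into the out-list of $u_1$ returns the new edge with probability $1/(\deg+1)$ given it is unseen. Equivalently, $G$ with the random order is coupled to $G'$ with a random order so that the two runs agree until $A$ hits the inserted entry in either list. Coupling by inserting at a random slot, the probability that $A$ reaches that slot in $u_1$'s list is at most the expected fraction of the list it reads. This fraction is comparable to the probability of reading the bad edge $(u_1,v_1)$, hence $O(1/10)$, and the same holds at $v_2$.

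If $(u_1,v_2)$ already exists in $G$, we instead lower its weight. This edge joins a vertex having a bad out-edge to a vertex having a bad in-edge, so by \Cref{lem:oblivious-edge-bad} it is $2/5$-bad. This is Construction~(\hyperref[cons:a]{a}), and the degenerate cases where the four endpoints coincide are handled the same way.

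The main obstacle is making the coupling for an inserted edge rigorous. We must show that the chance $A$ reaches a uniformly random extra slot in a list is bounded by a constant multiple of the chance it reads a fixed bad entry, which requires relating ``the number of entries read'' to ``a specific entry read'' under random order. I would prove it by writing the probability of reaching the new slot as $\mathbb{E}[K/(\deg+1)]$, where $K$ is the number of reads in the list. I would then note, by exchangeability, that $\mathbb{E}[K]/\deg$ equals the probability that a fixed entry is read, and that this is at most $4\cdot(1/10)$ by the swapping bound. A union bound over the two lists then gives total failure probability below $1/2$, contradicting goodness. This establishes $Q_A=\Omega(|E_t|)$, and with \Cref{lem:time_bj} gives instance-optimality.
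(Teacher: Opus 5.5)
Your proof is correct and uses the same construction as the paper. You split on whether $(u_1,v_2)$ is already an edge; if so, you lower its weight, since it is bad by \Cref{lem:oblivious-edge-bad}. Otherwise you insert $(u_1,v_2)$ at uniformly random positions in the out-list of $u_1$ and the in-list of $v_2$, relying on the fact that without degree queries the change is invisible unless the new entry is read.

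The difference lies in how you bound the chance that $A$ reaches the new entry. The paper notes that the new edge lands after $(u_1,v_1)$ (resp.\ after $(u_2,v_2)$) with probability $1/2$, and multiplies this by the probability that the bad edges are unread. This is short, but it tacitly treats the two events as independent, which they are not: conditioning on $(u_1,v_1)$ being unread pushes it later in the list. Your coupling bound avoids any independence assumption and is the rigorous form of this step:
$\Pr[\text{new slot reached}]\le\mathbb{E}[K']/(\dout(u_1)+1)\le\max_e\Pr[e\text{ read from }u_1\text{'s list}]\le 2\Pr[(u_1,v_1)\text{ read}]$.
Here $K'$ also counts a final null answer, which can occur only if the whole list, including $(u_1,v_1)$, was read.

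Two small fixes are needed. First, $\mathbb{E}[K]/\dout(u_1)$ is the \emph{average} over entries of the read probability, not equal to it for a fixed entry, since an adaptive algorithm can treat entries differently. Because the swap bound holds for every entry, the bound on the average is all you need. Second, with your constants the union bound gives $4/10+4/10$, which is not below $1/2$. Either use the factor $2$ from the swap argument directly, or start from $1/40$-bad edges, which $Q_A<|E_t|/40$ already provides.
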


\begin{proof}
Assume for contradiction that $Q_A<|E_t|/20$ edges, where we recall from~\Cref{eqn:querynumber} that $Q_{A}$ denotes the expected number of vertices and edges visited by $A$. We also recall from~\Cref{subsec:property_bj} that we use $E_s$ and $E_t$ to denote the sets of edges visited by the forward and backward searches, respectively, in the bidirectional Dijkstra's algorithm shown in~\Cref{alg:bi-dj-new}. By~\cref{lem:time_bj}, $|E_t|\le |E_s|\le |E_t|+1$, and the number of queries performed by bidirectional Dijkstra is $\Theta(|E_s|)=\Theta(|E_t|)$.
Then there exists a $1/20$-bad edge $(u_1,v_1)\in E_s$ and a $1/20$-bad edge $(u_2,v_2)\in E_t$.

If $(u_1,v_2)$ exists in $G$, then, by~\Cref{lem:oblivious-edge-bad}, $(u_1,v_2)$ is $1/5$-bad, since $u_1$ has a $1/20$-bad outgoing edge $(u_1,v_1)$ and $v_2$ has a $1/20$-bad incoming edge $(u_2,v_2)$. We can then apply Construction~(\hyperref[cons:a]{a}) to construct $G’$. 
Specifically, we construct $G’$ by changing the edge weight $\ell(u_1,v_2)$ so that $\ell_{G'}(u_1,v_2)<d_G(s,t)-d_G(s,u_1)-d_G(v_2,t)$. This is possible by~\cref{lem:dist-st}, which shows that $d_G(s,u_1)+d_G(v_2,t)<d_G(s,t)$ for $u_1\in V_s$ and $v_2\in V_t$. Consequently, $d_{G’}(s,t)<d_G(s,t)$.
Since the edge weight $\ell(u_1,v_2)$ is the only difference between $G$ and $G’$, and $(u_1,v_2)$ is $1/5$-bad, i.e., it is visited by $A$ with probability at most $1/5$, $A$ can distinguish $G$ from $G'$ with probability at most $1/5$. This contradicts the assumption that $A$ is a good algorithm that outputs the correct answer on every instance with probability at least $9/10$.

Now suppose that $(u_1,v_2)$ does not exist in $G$. We construct $G’$ by adding $(u_1,v_2)$ and setting $\ell(u_1,v_2)<d_G(s,t)-d_G(s,u_1)-d_G(v_2,t)$. Specifically, we insert $(u_1,v_2)$ at a uniformly random position in the out-adjacency list of $u_1$. Analogously, we insert $(u_1,v_2)$ at a uniformly random position in the in-adjacency list of $v_2$. Recall that, in the order-oblivious model, the order in which the graph oracle returns edges in an adjacency list is considered uniformly random.
In particular, $(u_1,v_2)$ appears after $(u_1,v_1)$ in the out-adjacency list of $u_1$ with probability $1/2$. Since $(u_1,v_1)$ is $1/20$-bad, it is visited by $A$ with probability at most $1/20$. Analogously, $(u_2,v_2)$ is visited by $A$ with probability at most $1/20$. Thus, the probability that either of the two edges is visited is at most $1/10$. So that all three edges remain unvisited with probability at least $9/40 \ge 1/5$.

Since degree queries are not available, on this event, $A$ has not visited all edges in the out-adjacency list of $u_1$ or the in-adjacency list of $v_2$, and therefore cannot detect the change in the out-degree of $u_1$ and in-degree of $v_2$. As a result, $A$ cannot distinguish $G$ from $G’$ with probability at least $1/5$, contradicting the assumption that $A$ is a good algorithm. 
\end{proof}

\subsection{Order-Dependent Instance-Non-Optimality for Directed Graphs}
We now show that disallowing degree queries does not eliminate instance-non-optimality in the order-dependent model on directed graphs.
\begin{theorem}\label{thm:directed-dependent-no-degree-lower}
    In the order-dependent no-degree model, the instance-optimality ratio of every algorithm is $\Omega(\lceil\Delta/\sqrt n\rceil)$ on directed graphs.
\end{theorem}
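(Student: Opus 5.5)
We would build, for each $\Delta\ge\sqrt n$, a family of directed instances on $n$ vertices with maximum degree $\Theta(\Delta)$. On one fixed instance $G$ of the family an instance-smart algorithm that knows the adjacency orders spends $O(n)$ queries, while every good algorithm spends $\Omega(\Delta\sqrt n)$ expected queries on some member. This gives the ratio $\Omega(\Delta\sqrt n/n)=\Omega(\Delta/\sqrt n)$; for $\Delta<\sqrt n$ the bound is trivial.

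\textbf{Construction.} Following \cref{fig:directed-fixed}, take $V=\{s,t\}\cup V_1\cup V_2\cup V_3\cup V_4\cup V_5$ with $|V_3|=|V_4|=\Theta(\sqrt n)$ and $|V_1|=|V_2|=\Theta(\Delta)$, and let $V_5$ pad the vertex count on a path out of $t$.
\begin{itemize}
\item The source $s$ has weight-$1$ out-edges exactly to $V_3$, and $t$ has weight-$1$ in-edges exactly from $V_4$.
\item $V_3\times V_4$ is complete, giving $\Theta(n)$ edges that carry the candidate middle weights.
\item Each vertex of $V_3$ also has out-edges to all of $V_1$, and each vertex of $V_4$ has in-edges from all of $V_2$.
\item $V_1$ has empty out-lists and $V_2$ has empty in-lists.
\end{itemize}
The point of the sets $V_1,V_2$ is to make both the out-lists of $V_3$ and the in-lists of $V_4$ have length $\Theta(\Delta)$, for $\Theta(\Delta\sqrt n)$ total list length.

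\textbf{Smart algorithm.} The instance-smart algorithm must certify that $d(s,t)$ equals $2+\min\ell(V_3\times V_4)$. Its intended certificate uses positive weights: every path into a $V_4$ vertex must use an in-edge of that vertex, so if those in-lists are known, $d(s,x)\ge 1$ holds for every $x\in V_3$ because all out-edges of $s$ have weight $1$. We want this certificate to cost only $O(n)$ for the smart algorithm by exploiting order: in $G$ the $V_3\times V_4$ edges are placed first in the relevant lists, so the smart algorithm reads the $\Theta(n)$ relevant entries at their known positions.

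\textbf{Lower bound for a generic algorithm.} The argument is indistinguishability, as in the proof of \cref{thm:directed-oblivious-m-any}. Place a single lighter edge $(u,v)$, $u\in V_3$, $v\in V_4$, uniformly among the $\Theta(\Delta\sqrt n)$ positions in the tails of the $V_3$ out-lists and $V_4$ in-lists, replacing a $V_1$ or $V_2$ edge there. Because there are no degree queries, an unread tail is indistinguishable from one ending differently. So any algorithm making $o(\Delta\sqrt n)$ queries misses the planted edge with probability at least $9/10$, which contradicts goodness.

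\textbf{Main obstacle.} The hard part is reconciling these two sides, and it is exactly where degree queries were used before. Without degree queries, the smart algorithm also cannot know that a list has ended without reading it to the end. The step I would attack first is therefore to redesign the tails so the smart algorithm never needs to finish them: arrange that everything appearing after the known prefix provably cannot lie on a short $s$-$t$ path. The first thing to try is making all tail edges go to dead-end sets whose own lists are short. The difficulty is that the generic algorithm could then exploit the same certificate. If that fails, I would randomize, per matched pair, which of the two sides (the $V_3$ out-list or the $V_4$ in-list) is the long one. The smart algorithm, knowing the choice, always reads the short side in $O(\sqrt n)$ per vertex. A generic algorithm must discover the short side by interleaved scanning, and I would try to show this costs it $\Omega(\Delta)$ on a constant fraction of the $\sqrt n$ vertices. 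Verifying that this interleaving does not collapse the gap is the crux of the proof.
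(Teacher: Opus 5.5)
Your construction and smart algorithm follow the paper's. The gap is where you suspect it: the lower bound against a generic good algorithm.

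\textbf{The planted-edge argument fails.} Since $V_3\times V_4$ is complete and the graph is simple, a lighter $(u,v)$ cannot be placed in a tail without removing the existing $(u,v)$ from the front block. Any algorithm reading that block sees that $v$ is missing after $O(\sqrt n)$ queries, and then only has to scan $u$'s out-list and $v$'s in-list. This is the same fact that makes the smart certificate sound: once every vertex of $V_4$ has appeared in $u$'s out-list, nothing in the tail can create a shorter path. The certificate can be checked from the observations alone. Consider the algorithm that reads each $V_3$ out-list until all of $V_4$ has appeared, verifies the remaining structure, and falls back to \Cref{alg:bi-dj-new} on any mismatch. It is good and makes $O(n)$ queries on $G$, so $G$ is not a hard instance.

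\textbf{Your proposed repair does not help either.} If each pair has one short list, scanning its two lists alternately costs at most twice the short one. More generally, take \emph{any} ordering of this graph. Advance every $V_3$ out-list and $V_4$ in-list at a rate equal to its number of still-unseen $V_3\times V_4$ pairs. This costs at most twice the cheapest collection of prefixes covering all pairs. To see this, charge each pair $e$ the time $y_e$ until it is seen. List $x$ is then read to depth $\sum_{e\ni x}y_e$. For any cover with depths $d^*_x$, the pairs it covers through $x$ satisfy $\sum y_e\le d^*_x$. Every good algorithm must see each $V_3\times V_4$ edge with constant probability, so the same charging shows it pays a constant fraction of what this strategy pays. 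Hence no reordering of this construction gives a super-constant ratio. A lower bound would need the smart algorithm's advantage to come from information that cannot be recovered by scanning lists from the front.

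You cannot fill this gap from the paper. Its proof uses the same front-loaded instance and settles the generic side only by asserting that an instance-optimal algorithm must treat the adjacency order as uniformly random. With sequential access and no degree queries, a front-loaded order is not a hidden ``lucky number'' the way the random-access position is in \Cref{thm:directed-fixed-lower}. So your objection applies to the paper's assertion as well.
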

\begin{proof}
    The claim is trivial if $\Delta < \sqrt n$, so assume the opposite.
    Take the construction of~\Cref{thm:directed-fixed-lower} but give every vertex in $V_3$ an edge to every vertex in $V_4$.
    Set $|V_1|=|V_2|=\Theta(\Delta)$, $|V_3|=|V_4|=\Theta(\sqrt n)$, and $|V_5|=n-\Theta(\Delta+\sqrt n)$, so that the number of vertices is $n$ and the maximum in- and out-degree is $\Delta$. In particular, we place the edges between $V_3$ and $V_4$ at the beginning of the out-adjacency list of every vertex in $V_3$ and the in-adjacency list of every vertex in $V_4$.

    An instance-smart algorithm can first visit all edges incident to $s$.
    From this, it can identify the set $V_1$ as the set of neighbors of $s$ with no out-edges, and identify $V_3$ as the others.
    Similarly, it can visit all in-neighbors of $t$, to identify the set $V_2$ as those with no in-edges, and identify $V_4$ as the others.
    It can also identify $V_5$ as an induced path by walking along out-edges starting from $t$.
    It has now counted all $n$ vertices, and certified that no path from $s$ to $t$ visits $V_1$ and $V_2$.
    
    Note that we now consider the order-dependent model, in which the ordering is part of the input graph. In particular, the edges between $V_3$ and $V_4$ are placed at the beginning of the out-adjacency list of every vertex in $V_3$. Hence, an instance-smart algorithm can inspect only the edges from $V_3$ to $V_4$, without visiting $V_1$ or $V_2$, by terminating the neighbor queries as soon as they start returning vertices in $V_1$. As a result, the instance-smart algorithm requires only $O(|V_3||V_4|)=O(n)$ queries to inspect all edges from $V_3$ to $V_4$. Since every $s$–$t$ path must go from $s$ to $V_3$, then to $V_4$, and finally to $t$, the algorithm can then determine the shortest-path distance.

    In contrast, consider any good algorithm. The algorithm must query every edge from $V_3$ to $V_4$ to ensure the correctly calculate the shortest $s$-$t$ path. Recall that an algorithm that aims to be instance-optimal must treat the ordering of the incident edges it receives as uniformly random, rather than relying on an instance-specific ordering. Consequently, on this graph, such an algorithm requires $\Omega(\sqrt{n}\Delta)$ queries. Comparing this with the $O(n)$ queries required by the instance-smart algorithm yields an instance-optimality ratio of $\Omega(\sqrt n\Delta/n)=\Omega(\Delta/\sqrt n)$.
    \end{proof}

\subsection{Order-Dependent Instance-Optimality for Undirected Graphs}

\begin{theorem}\label{thm:order-dependent-undirected-nodegree}
    In the order-dependent no-degree model,~\Cref{alg:bi-dj-new} is instance-optimal on undirected graphs.
\end{theorem}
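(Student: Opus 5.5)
We follow the framework of \Cref{sec:framework}. Fix a good algorithm $A$ and an instance $G$, and assume for contradiction that $Q_A<|E_t|/c$ for a suitable constant $c$ (say $c=40$). By \Cref{lem:time_bj} it suffices to derive a contradiction.

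In the no-degree model, $A$ can only learn the length of an adjacency list by reading the list to its end. So changing the degree of a vertex $v$ is invisible to $A$, as long as $A$ does not reach the end of $v$'s list (and does not see the edited entries). This lets us use edits that do not preserve degrees. In the order-dependent instance complexity, the order is the one most favourable to $A$. We therefore place any new edge at the \emph{end} of the relevant lists, and replace a removed edge in place by a new neighbour, or simply delete it if it is the last entry.

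\textbf{Step 1.} Since $Q_A<|E_t|/c$, more than half the edges of $E_s$ and of $E_t$ are $1/10$-bad. We need to quantify "not read to the end". For the forward search, call a vertex $u\in V_s$ \emph{exhausted} if $A$ reads the whole list of $u$ with probability at least $1/10$. Reading $u$'s whole list costs $\deg(u)$ queries. Since \Cref{alg:bi-dj-new} visits at most $\deg(u)$ edges at $u$, we get
\[
Q_A\ \ge\ \frac{1}{10}\sum_{u\in V_s\text{ exhausted}}|E_s\cap\delta(u)|.
\]
Hence at least half of $E_s$ consists of $1/10$-bad edges $(u_1,v_1)$ with $u_1$ non-exhausted. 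Symmetrically, we find a $1/10$-bad $(u_2,v_2)\in E_t$ with $v_2$ non-exhausted.

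\textbf{Step 2.} If the two edges share a vertex, they already form a bad path of length at most $2$ from $u_1$ to $v_2$. We lower its weights and apply Construction~(\hyperref[cons:a]{a}), using \Cref{lem:dist-st}. If $(u_1,v_2)\in E$, we would like to reweight it. But in the order-dependent model that edge may be good; \Cref{lem:oblivious-edge-bad} is unavailable here. In that case we instead add a path of length $2$, $u_1\to w\to v_2$, through a fresh $1/10$-bad vertex $w$; such a $w$ exists when $Q_A<n/10$, or else use the degree-free case below.

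\textbf{Step 3.} If $(u_1,v_2)\notin E$, simply append the new edge $(u_1,v_2)$, of tiny weight, to the ends of the lists of $u_1$ and $v_2$. The probability that $A$ notices is at most the probability that $A$ reaches the end of $u_1$'s or $v_2$'s list. By choosing non-exhausted endpoints, this is at most $2/10$. Then $d_{G'}(s,t)<d_G(s,t)$ by \Cref{lem:dist-st}, contradicting goodness. Adding an edge could create parallel edges only if $(u_1,v_2)$ already exists, which was handled in Step~2.

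\textbf{Main obstacle.} The subtle point is Step 1. We must justify that appending at the end is legitimate, and that the costs charged against exhausted vertices are genuinely covered by \Cref{alg:bi-dj-new}'s own edge count. The problem is that the adversarial order for bidirectional Dijkstra and the favourable order for $A$ differ. I would handle this by noting that \Cref{alg:bi-dj-new} relaxes only a prefix of each list, but the analysis above uses only $|E_s\cap\delta(u)|\le\deg(u)$, which holds regardless of order.

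The case where $(u_1,v_2)$ is an existing good edge is the second delicate point. For it I expect to use Step~1 with pairs ranging over $1/10$-bad edges. If every such pair $(u_1,v_2)$ were adjacent via a good edge, we would get $\Omega(|V_s'||V_t'|)$ good edges among non-exhausted endpoints. That is at least $\Omega(|E_t|)$ unless the degrees are small, and small degrees contradict $Q_A<|E_t|/c$.
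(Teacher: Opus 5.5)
Your Steps 1--3 are fine as far as they go: appending $(u_1,v_2)$ at the ends of the lists of two non-exhausted endpoints goes unnoticed with probability at least $4/5$. The gap is the case where, for every admissible pair, $(u_1,v_2)$ already exists as a $1/10$-good edge.

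Your fallback through a fresh bad vertex $w$ needs $Q_A<n/10$. Your count for the remaining regime does not work. If every pair in $V_s'\times V_t'$ is joined by a good edge, you get $g\ge |V_s'||V_t'|$. But $|E_s|$ is only bounded by $\sum_{u\in V_s'}\deg(u)$, and the dangerous case is when degrees are \emph{large}, not small. For example, take $|V_s'|=|V_t'|=\sqrt n$ with each of these vertices of degree $\Theta(n)$. Then $|E_t|=\Theta(n^{3/2})$ is possible while $Q_A=\Theta(n)$ and every vertex is $1/10$-good. The forced good edges number only $n$. Nothing in your constructions or your count excludes this configuration, so your argument gives at best $|E_t|/Q_A=O(\sqrt n)$, not $O(1)$.

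The missing idea is this. In the linked-list no-degree model, reaching the end of a vertex's list requires visiting every edge in that list. So the end of the list of \emph{any} endpoint of a $1/20$-bad edge is automatically hidden, which makes your ``exhausted'' notion unnecessary. In particular, you may take $w=v_1$, the far endpoint of a bad forward edge $(u_1,v_1)\in E_s$.

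Let $U_s\subseteq V_s$ and $U_t\subseteq V_t$ be the vertices incident to a bad edge, and assume by symmetry $|U_s|\le|U_t|$. For $v_2\in U_t$, proceed as follows:
\begin{itemize}
\item If $v_1=v_2$, or $(v_1,v_2)$ is bad, Construction~(a) applies to the path $u_1-v_1-v_2$.
\item If $(v_1,v_2)\notin E$, append it at the ends of the lists of $v_1$ and $v_2$. It is then hidden behind the bad edges in those lists.
\end{itemize}
The only remaining case is that $(v_1,v_2)$ is a good edge for every $v_2\in U_t$. Then $v_1$ has at least $|U_t|\ge|U_s|$ good incident edges. This is at least the number of bad $E_s$-edges with far endpoint $v_1$, since their near endpoints are distinct vertices of $U_s$. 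Summing over $v_1$ gives $b_s\le 2g$, hence $|E_s|=O(g)=O(Q_A)$. This charge to the far endpoints, combined with the $|U_s|\le|U_t|$ symmetry, is what your product bound $|V_s'||V_t'|$ cannot supply.
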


\begin{proof}
Let $U_s\subseteq V_s$ and $U_t\subseteq V_t$ denote the sets of vertices in $V_s$ and $V_t$, respectively, that are incident to a $1/20$-bad edge. Assume without loss of generality that $|U_s|\le |U_t|$.

Assume, for contradiction, that $Q_A<|E_t|/20$, where we recall from~\Cref{eqn:querynumber} that $Q_{A}$ denotes the expected number of vertices and edges visited by $A$. Then there exists a $1/20$-bad edge $(u_1,v_1)\in E_s$ and a $1/20$-bad edge $(u_2,v_2)\in E_t$.

If $v_1=v_2$, or if $(v_1,v_2)$ is a $1/20$-bad edge in $G$, then we can simply apply Construction~(\hyperref[cons:a]{a}) to construct $G’$.

If $v_1\neq v_2$ and $(v_1,v_2)$ does not exist in $G$, we construct $G’$ by adding $(v_1,v_2)$ and setting $\ell_{G'}(v_1,v_2)<d_G(s,t)-d_G(s,u_1)-d_G(v_2,t)$. This is possible by~\cref{lem:dist-st}, which shows that $d_G(s,u_1)+d_G(v_2,t)<d_G(s,t)$ for $u_1\in V_s$ and $v_2\in V_t$. Consequently, $d_{G’}(s,t)<d_G(s,t)$.

Recall that we are now considering the order-dependent model, in which the ordering of edges within each adjacency list is part of the input. We place the newly added edge $(v_1,v_2)$ at the ``bottom'' of the adjacency lists of $v_1$ and $v_2$, in the sense that the probability that $A$ reaches $(v_1,v_2)$ when scanning either adjacency list is no larger than the probability of reaching any existing edge in that list. Since $(u_1,v_1)$ and $(u_2,v_2)$ are both $1/20$-bad, i.e., each is visited by $A$ with probability at most $1/20$, the new edge $(v_1,v_2)$ is visited through the adjacency list of either $v_1$ or $v_2$ with probability at most $1/20$, and hence is visited with probability at most $1/10$ in total.

Moreover, degree queries are not available. Therefore, unless $A$ scans an entire adjacency list, it cannot detect the change in the degree of $v_1$ or $v_2$ caused by adding $(v_1,v_2)$. It follows that $A$ can distinguish $G$ from $G’$ only if it visits $(u_1,v_1)$, $(u_2,v_2)$, or the newly added edge $(v_1,v_2)$. By a union bound, this occurs with probability at most $1/20+1/20+1/10=1/5$. Thus, $A$ cannot distinguish $G$ from $G’$ with sufficiently high probability, contradicting the assumption that $A$ is a good algorithm.

It is worth noting that this is precisely where the argument fails to extend to the order-dependent no-degree model on directed graphs. In the directed setting, the bad edge $(u_1,v_1)$ belongs to the in-adjacency list of $v_1$, whereas the new edge $(v_1,v_2)$ must be inserted into the out-adjacency list of $v_1$. For example, $v_1$ may have out-degree zero, in which case there is no bad edge in its out-adjacency list that can be used to hide the newly added edge $(v_1,v_2)$. Thus, even without degree queries, adding $(v_1,v_2)$ may be detectable by $A$.

It therefore remains only to consider the case where $(v_1,v_2)$ exists in $G$ and is $1/20$-good. In fact, this implies that, for every $1/20$-bad edge ${u_1,v_1}$ with $u_1\in U_s$ and every $v_2\in U_t$, there exists a $1/20$-good edge ${v_1,v_2}$. Hence, $v_1$ is incident to at least $|U_t|\ge |U_s|$ $1/20$-good edges, which is at least the number of $1/20$-bad edges between $v_1$ and $U_s$.

Let $g$ denote the number of $1/20$-good edges in $E$, and let $g_s$ and $b_s$ denote the numbers of distinct $1/20$-good and $1/20$-bad undirected edges in $E_s$, respectively. Summing over all such vertices $v_1$, we obtain $2g\ge b_s$, where the factor $2$ accounts for the fact that each $1/20$-good edge may be counted from both endpoints. Clearly, $g\ge g_s$. Moreover, each undirected edge occurs at most twice in $E_s$. Therefore, $|E_s|\le 2(g_s+b_s)\le 6g$.

Since every $1/20$-good edge is visited by $A$ with probability at least $1/20$, $A$ queries at least $(1/20)g\ge |E_s|/120$ edges in expectation, contradicting the assumption that $Q_A<|E_s|/20$.\end{proof}

\section{Zero-Weight Graphs and Unit-Weight Graphs}\label{sec:zero-unit-weight}
Throughout the paper, we consider only graphs with positive edge weights. In this section, we briefly discuss the cases where zero-weight edges are allowed, or the input graph is promised to have a smallest valid positive edge weight (e.g., an unweighted graphs). 

We note that~\cite{sosa_bidirectional_HaeuplerHRTT25} also establishes the instance-non-optimality of bidirectional Dijkstra on zero-weight and unweighted graphs, under the assumption that parallel edges are allowed. They obtain lower bounds of $\Omega(n/\log n)$ and $\Omega(\Delta)$ on the instance-optimality ratio for zero-weight and unweighted graphs, respectively. Both lower bounds are established in the order-dependent model, allowing an instance-smart algorithm to be told the correct path. To get an order-oblivious result, the footnote of~\cite[Section~5.1]{sosa_bidirectional_HaeuplerHRTT25} suggests coding the identifier of a vertex by giving it a neighbor who is a star whose degree is the identifier. Here we do not allow degree queries, and we only use constant degrees.

\begin{theorem}\label{thm:zero-weight-m}
When zero-weight edges are allowed, the instance-optimality ratio of every algorithm is $\Omega(\min\{n,m/\log n\})$ in the order-oblivious no-degree model, for both directed and undirected graphs.
\end{theorem}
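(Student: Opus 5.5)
We follow the approach of \cite{sosa_bidirectional_HaeuplerHRTT25} for zero weights: build a family of instances in which a "secret" zero-length $s$--$t$ path exists. An instance-smart algorithm that knows the path only has to walk along it, since no path can be shorter than $0$. Any good algorithm that does not know the path must instead essentially explore the graph. The new difficulty is that in the order-oblivious no-degree model, vertex indices and degrees carry no information. Hence the secret path cannot be described by indices, and it must be followable using only constant-degree local structure and edge weights.

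\textbf{Construction.} Start from a base graph $H$ on $\Theta(n)$ vertices with $\Theta(m)$ edges, all of positive weight (say $1$). We want no algorithm to find a particular hidden short path without exploring a constant fraction of $H$. Concretely:
\begin{itemize}
\item attach to $s$ a binary tree of zero-weight edges of depth $\log n$, whose $\Theta(n)$ leaves are the vertices of $H$;
\item do the same for $t$ with reversed edges;
\item pick a uniformly random pair of leaves $a$ (on the $s$ side) and $b$ (on the $t$ side), and set $G_{a,b}$ to be the graph in which a single edge $(a,b)$ has weight $0$, while all other connecting edges have weight $1$.
\end{itemize}
In $G_{a,b}$ the distance $d(s,t)$ is $0$; in the comparison instance with no zero edge it is positive. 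The instance-smart algorithm for $G_{a,b}$ knows the root-to-leaf branch choices encoding $a$ and $b$. To make these choices followable in a no-degree oblivious model, we distinguish the two children of each tree node by edge weight rather than by list position, e.g.\ left child at weight $0$ and right child via a zero-weight edge to a distinguishable gadget. With these choices it walks $O(\log n)$ edges, scans for the edge $(a,b)$ (cost $O(m/n)$ if $a$'s list has average length), and certifies distance $0$. This gives $\tau=O(\log n+m/n)$.

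\textbf{Lower bound for any good algorithm.} Consider $a,b$ uniformly random among $\Theta(n)$ leaves, together with the random adjacency orders. Conditioned on no zero-weight cross edge seen so far, each neighbor query at an $H$-vertex reveals the zero edge with probability $O(1/m)$, and each tree exploration narrows the candidates only by what has been visited. A standard revealing-query argument, as in the proof of \Cref{thm:directed-oblivious-m-any}, then shows that $\Omega(\min\{n,m\})$ queries are needed to find the zero edge with probability $9/10$. Otherwise the algorithm cannot distinguish $G_{a,b}$ from the all-positive instance, so it cannot be good. The ratio is therefore $\Omega(\min\{n,m\}/(\log n+m/n))$, which we then tune to $\Omega(\min\{n,m/\log n\})$. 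For instance, when $m\ge n\log n$ we let $H$ use only $\Theta(n)$ edges and place the surplus edges in an irrelevant dense component, so that the zero edge hides among $\Theta(m/\log n)$ candidate leaf pairs while the smart cost stays $O(\log n)$.

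\textbf{Main obstacle.} We expect the main obstacle to be encoding the path with constant degrees and without degree queries. The order-oblivious model randomizes both list orders and vertex indices, so branch identities must be carried by weights or by small constant-size distinguishable gadgets (e.g.\ a pendant path of length $1$ versus $2$). We must check two things about these gadgets:
\begin{itemize}
\item they do not leak information to a generic algorithm beyond a single branch bit per query;
\item the graph stays simple, with zero weights used only along the tree and on the secret edge.
\end{itemize}
The undirected case is handled identically, since the argument never uses edge directions beyond defining the trees.
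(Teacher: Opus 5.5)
\textbf{The construction matches the paper's, but the final counting has a genuine gap.} Your construction is essentially the paper's. It uses two zero-weight binary trees hanging from $s$ and $t$, distinguishes siblings by a constant-size local gadget, and hides a single zero-weight edge between the bottoms of the two trees. The paper's gadget is one extra pendant leaf on exactly one sibling of each pair, which a no-degree algorithm detects by scanning a constant-length list. Drop the edge-weight variant of the encoding. Every tree edge must have weight $0$, since the secret path has length $0$, and making only the path edges zero would reveal the path to any algorithm.

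\textbf{Where the counting fails.} You state that a good algorithm needs $\Omega(\min\{n,m\})$ queries. With $\tau=O(\log n+m/n)$ this gives only $\Omega(n/(\log n+m/n))=\Omega(\min\{n/\log n,\,n^2/m\})$. For $m=n^{3/2}$ that is $\Omega(\sqrt n)$, not the claimed $\Omega(n)$. Your proposed tuning does not repair this, for two reasons.
\begin{enumerate}
\item A component disconnected from $s$ and $t$ is never touched by a good algorithm. It raises $m$, and hence the target bound $\min\{n,m/\log n\}$, without raising anyone's cost.
\item If $H$ has only $\Theta(n)$ edges, the zero edge hides among $\Theta(n)$ candidates, not $\Theta(m/\log n)$. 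That caps the ratio at $\Omega(n/\log n)$.
\end{enumerate}

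\textbf{The fix is to use your own per-query estimate properly.}
\begin{itemize}
\item Keep all $\Theta(m)$ edges of $H$ as candidates. For example, make $H$ bipartite between the $s$-side and $t$-side leaves, with every leaf of degree $\Theta(m/n)$.
\item Choose the zero edge as a uniformly random edge of $H$, not a random pair of leaves, which need not be adjacent.
\item Conditioned on the zero edge not having been found, it is uniform among the unrevealed candidate edges. The gadget pattern is the same for every choice, so it carries no information. Each neighbor query therefore hits the zero edge with probability $O(1/m)$ until $\Omega(m)$ queries have been made.
\item Hence every good algorithm needs $\Omega(m)$ queries. The instance-smart algorithm pays $O(\log n)$ to walk both trees and $O(m/n)$ to scan the relevant leaf's list.
\end{itemize}
The ratio is then $\Omega(m/(\log n+m/n))=\Omega(\min\{n,m/\log n\})$. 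This is exactly the paper's bound $\Omega(m/(\log n+d))$ with bottom-vertex degree $d=\Theta(m/n)$.
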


\begin{proof}
Consider the graph $G$ illustrated in~\Cref{fig:undirected-zero-weight}. The graph contains two tree structures rooted at $s$ and $t$, respectively. We first describe the tree rooted at $s$.

\begin{figure}[!h]
        \centering
        \includegraphics[width=0.9\linewidth]{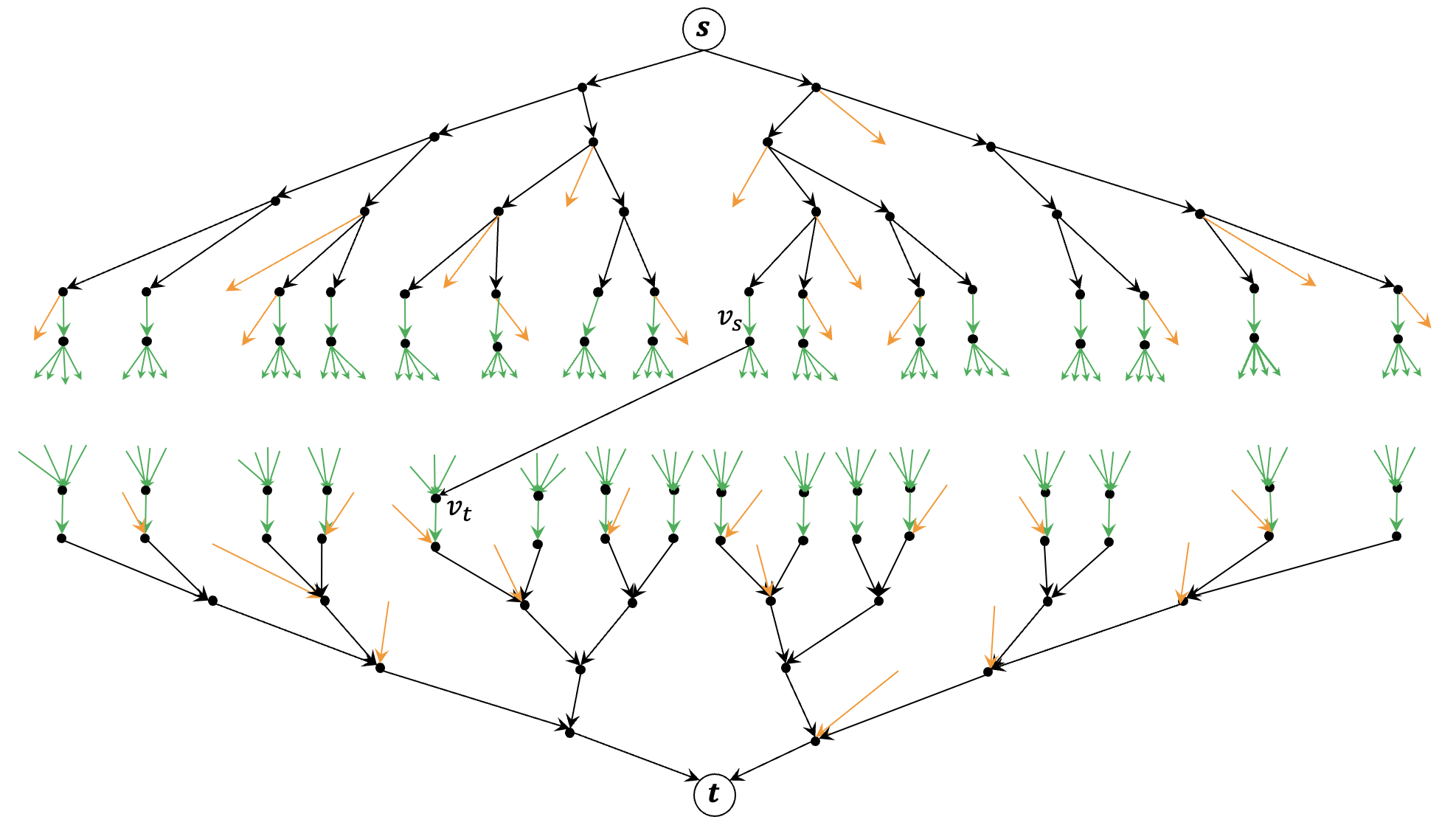}
        \vspace{-1mm}
        \caption{Lower-bound construction for graphs with zero or unit weights.}
        \label{fig:undirected-zero-weight}
    \end{figure}
Starting from a complete binary tree, we use vertex degrees to encode every root-to-leaf path. For each pair of sibling vertices, we attach an additional leaf to exactly one of them (see the orange edges in~\Cref{fig:undirected-zero-weight}), so that the two siblings have distinct degrees. This degree distinction indicates which of the two siblings lies on the encoded path. For each leaf of the original binary tree, we further add a new adjacent vertex and attach additional leaves to this new vertex so that it has out-degree $d$ (see the green edges in~\Cref{fig:undirected-zero-weight}). Consequently, a path from the root to any out-degree-$d$ vertex at the bottom level can be uniquely specified by the sequence of degrees encountered along the path.

Let $v_s$ denote an arbitrary out-degree-$d$ vertex at the bottom level of the tree rooted at $s$. 
The tree rooted at $t$ is constructed symmetrically: we choose an arbitrary in-degree-$d$ vertex $v_t$ at the bottom level. We assign weight $0$ to every edge in the graph. 
All other leaf vertices in the tree rooted at $s$ are disconnected from the tree rooted at $t$. Finally, we connect $v_s$ and $v_t$ by a zero-weight edge.

An instance-smart algorithm can ``know'' the degree sequence corresponding to the unique zero-length $s\to v_s\to v_t\to t$ path and therefore follow it directly. It thus requires only $O(\log n+d)$ queries to identify the zero-length $s$–$t$ path: $O(\log n)$ queries suffice to navigate from $s$ to $v_s$ and from $t$ to $v_t$ by following the degree-encoded paths, while an additional $O(d)$ queries suffice to identify the zero-weight edge between $v_s$ and $v_t$ by querying their neighbors.

In contrast, for an algorithm that aims to be instance-optimal over all instances, knowing the path code corresponding to one particular instance provides no useful information for other instances. Such an algorithm must therefore explore the whole graph to identify the zero-length path, resulting in $\Omega(m)$ query time. This yields an instance-optimality ratio of $\Omega(m/(\log{n}+d))=\Omega(\min \{ n, m/\log{n}\})$. 

The above construction is described in the context of directed graphs, but it extends straightforwardly to undirected graphs by replacing every directed edge with an undirected edge. The instance-smart algorithm can still navigate from $s$ to $v_s$ and from $t$ to $v_t$ by following the degree-encoded paths. The only difference is that, since the edges are now undirected, at each vertex the algorithm excludes the edge leading back to its parent and uses the degrees of the remaining neighbors to determine which vertex to visit next.
\end{proof}

\begin{theorem}
On unweighted graphs (or positively weighted graphs with a smallest valid positive weight), the instance-optimality ratio of every algorithm is $\Omega(\Delta)$ in the order-oblivious no-degree model, for both directed and undirected graphs.
\end{theorem}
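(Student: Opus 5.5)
We reuse the construction of \Cref{thm:zero-weight-m} (\Cref{fig:undirected-zero-weight}), now with unit weights, and let the instance-smart algorithm verify a specific short path. The difference from the zero-weight case is that a short path is not automatically optimal. The algorithm must also certify that nothing shorter exists, and we design the instance so that this certification costs $O(\Delta)$ times less than what an oblivious algorithm needs.

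\textbf{Instance.} Take the two degree-coded trees rooted at $s$ and $t$, of depth $k=\Theta(\log n)$. Replace the single edge $v_s v_t$ by the following gadget: every bottom vertex $v$ of the $s$-tree gets $\Theta(\Delta)$ pendant children, and exactly one chosen vertex $v_s$ has, among its $\Delta$ out-neighbors, one vertex $w$ joined to $v_t$. This gives an $s$--$t$ path of length $L=2k+O(1)$.

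\textbf{Why every algorithm can certify cheaply.} With unit weights, any $s$--$t$ path must pass through both tree depths. Two facts follow. Every path has length at least $L$, since the trees are layered and bottom vertices sit at depth $k$ on each side. Also, no path can be shorter than the found one, because the only connection between the trees is at the bottom. This holds once the algorithm verifies the ancestors' adjacency near the encoded path. I would add a few distance-level checks so that the lower bound $\ge L$ holds on all instances.

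\textbf{Cost for the instance-smart algorithm.} It walks the coded path, following degrees as in \Cref{thm:zero-weight-m} with $O(1)$ degree per tree vertex, in $O(\log n)$ queries. Because the tree has constant degree, it can certify the distance lower bound using only the local structure. It then scans the $\Delta$ children of $v_s$ to find $w$ and checks the edge $w v_t$. The total is $O(\log n+\Delta)$.

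\textbf{Cost for any good algorithm.} If the algorithm is ignorant of which bottom vertex is $v_s$, a hiding argument applies. There are $\Theta(n/\Delta)$ candidate bottom vertices on each side. The connecting edge can sit under any pair, and the order is random, so finding it requires scanning $\Omega(\Delta)$ children per candidate. Proving that the algorithm must actually find the edge is the obstacle. Any unscanned candidate could carry an alternative length-$L$ or shorter connection, and the distance is only forced to be $L$ if the edge is found. To make this work, I would make the unique connection the only one achieving length $L$, so that all other pairs have no connection, and use a $G$ versus $G'$ swap that moves $w$ under a random unvisited candidate. This gives an $\Omega(n)$ or $\Omega(m)$ cost, with ratio $\Omega(\min\{n,m\}/(\log n+\Delta))$.

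\textbf{Main obstacle.} I doubt the ratio can be tuned to exactly $\Omega(\Delta)$ this way. The better approach is likely to mirror the multigraph argument of~\cite{sosa_bidirectional_HaeuplerHRTT25}: replace parallel edges by length-2 paths through fresh unit-weight middle vertices, so that a smallest weight prevents the instance-smart algorithm from having to rule out shorter parallel alternatives. Then $\Delta$ is the factor lost per vertex on the path, and the degree-coded trees supply order-obliviousness without degree queries. Checking that this gadget keeps the graph simple while preserving the $\Theta(\Delta)$ gap is where I expect the difficulty to lie.
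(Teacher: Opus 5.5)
The degree-coded trees and the hiding argument are the right ingredients, but your proposal does not establish $\Omega(\Delta)$. In fact the first construction, as stated, gives no gap at all, because your cost estimate for the instance-smart algorithm is wrong on two counts.

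First, a distance lower bound cannot be certified from ``local structure'' near the coded path. A competing valid instance may add a single edge between any unexplored internal vertex of the $s$-tree and any unexplored internal vertex of the $t$-tree, creating a shorter $s$--$t$ path. In the no-degree model such an edge can only be ruled out by exhausting the adjacency list of one of its endpoints. So the certificate must exhaust the lists of all vertices above the bottom level on both sides. This costs $\Theta(N_b)$ queries, where $N_b$ is the number of bottom vertices per tree, not $O(\log n)$.

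Second, and fatally, routing the connection through an extra vertex $w$ makes the known path one edge longer than a hypothetical direct edge between some $s$-bottom vertex and some $t$-bottom vertex. To exclude all such edges, the smart algorithm must exhaust the $\Theta(\Delta)$-long lists of every bottom vertex on one side. That is exactly the $\Theta(N_b\Delta)$ work an uninformed algorithm spends looking for the connection, so the ratio on your instance is $O(1)$. Your fallback, subdividing parallel edges into length-$2$ paths through fresh middle vertices, breaks for the same reason: a direct edge of minimum weight would beat every length-$2$ detour, and it can only be excluded by full scans.

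The missing idea, which is what the paper does, is to use the smallest valid weight $\ell_{\min}$ to make the known connection unbeatable. Keep the single edge $(v_s,v_t)$ directly between the two bottom levels, give every edge weight $\ell_{\min}$, and give the bottom vertices degree $\Delta$. The smart algorithm exhausts the constant-degree levels strictly above the bottom on both sides. After that, any $s$--$t$ path needs:
\begin{itemize}
\item at least as many edges as the depth of $v_s$ to leave the exhausted part of the $s$-tree;
\item at least as many edges as the depth of $v_t$ after its last entry into the exhausted part of the $t$-tree;
\item at least one edge in between, since the two bottom sets are verified disjoint.
\end{itemize}
So every $s$--$t$ path has at least as many edges as the coded path $P$ through $(v_s,v_t)$, each of weight at least $\ell_{\min}$. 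Hence even an unseen edge between the two bottom levels cannot beat $P$. The only bottom list that must be scanned is that of $v_s$, located via the degree code, for $O(N_b+\Delta)$ queries in total.

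An algorithm that does not know $v_s$ has to find the one cross edge hidden among the $\Delta$-long lists of all bottom vertices, which costs $\Omega(N_b\Delta)$. The smart algorithm thus pays $O(1)$ per bottom vertex where any other algorithm pays $\Delta$. With $N_b=\Omega(\Delta)$ this yields the $\Omega(\Delta)$ ratio; the paper takes $\Theta(n)$ bottom vertices, i.e.\ $O(n)$ versus $\Omega(n\Delta)$.
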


\begin{proof}
Let $\ell_{\min}$ denote the smallest valid positive edge weight, or the unit weight, say $1$, if we restrict our attention to unweighted graphs. We consider the same graph as illustrated in~\Cref{fig:undirected-zero-weight}, but set the weight of every edge to $\ell_{\min}$ and replace each degree-$d$ vertex at the bottom level with a degree-$\Delta$ vertex. There is a unique $s$–$t$ path $P$ that passes through the edge $(v_s,v_t)$.
All other leaf vertices in the tree rooted at $s$ are disconnected from the tree rooted at $t$. 

An instance-smart algorithm can still ``know'' the path code corresponding to $P$ and follow it directly. However, it must additionally verify that no missing part of the tree structure could give rise to a shorter path. After inspecting all vertices in the tree rooted at $s$ at or above the level containing $v_s$, as well as all vertices in the tree rooted at $t$ at or above the level containing $v_t$, the algorithm can certify that no shorter path exists. Indeed, since all edge weights are the smallest valid positive edge weight $\ell_{\min}$, even if there were an edge connecting the two trees at the levels containing $v_s$ and $v_t$, the resulting path could not be shorter than $P$. Thus, the instance-smart algorithm requires only $O(n)$ queries.

In contrast, an algorithm that aims to be instance-optimal over all instances cannot directly identify the unique $s$–$t$ path $P$. It must instead scan the edges to locate $P$, requiring $\Omega(n\Delta)$ queries in expectation. This yields an instance-optimality ratio lower bound of $\Omega(n\Delta/n)=\Omega(\Delta)$. 
\end{proof}

%% file: paper.bib
@article{MitzenmacherV22,
  author       = {Michael Mitzenmacher and
                  Sergei Vassilvitskii},
  title        = {Algorithms with predictions},
  journal      = {Commun. {ACM}},
  volume       = {65},
  number       = {7},
  pages        = {33--35},
  year         = {2022},
  url          = {https://doi.org/10.1145/3528087},
  doi          = {10.1145/3528087},
  bibsource    = {dblp computer science bibliography, https://dblp.org}
}

@book{Dantzig1963,
author="Dantzig, George Bernard",
title="Linear Programming and Extensions",
address="Santa Monica, CA",
year="1963",
doi="10.7249/R366",
publisher="RAND Corporation"
}

@article{FredmanT87_fibonacci,
  author       = {Michael L. Fredman and
                  Robert Endre Tarjan},
  title        = {Fibonacci heaps and their uses in improved network optimization algorithms},
  journal      = {J. {ACM}},
  volume       = {34},
  number       = {3},
  pages        = {596--615},
  year         = {1987},
  url          = {https://doi.org/10.1145/28869.28874},
  doi          = {10.1145/28869.28874},
  bibsource    = {dblp computer science bibliography, https://dblp.org}
}

@article{DriscollGST88_relaxedheap,
  author       = {James R. Driscoll and
                  Harold N. Gabow and
                  Ruth Shrairman and
                  Robert Endre Tarjan},
  title        = {Relaxed Heaps: An Alternative to Fibonacci Heaps with Applications
                  to Parallel Computation},
  journal      = {Commun. {ACM}},
  volume       = {31},
  number       = {11},
  pages        = {1343--1354},
  year         = {1988},
  url          = {https://doi.org/10.1145/50087.50096},
  doi          = {10.1145/50087.50096},
  bibsource    = {dblp computer science bibliography, https://dblp.org}
}

@article{PageRank_TW,
  author       = {Mikkel Thorup and
                  Hanzhi Wang},
  title        = {Instance Optimality in PageRank Centrality Estimation},
  journal      = {CoRR},
  volume       = {abs/2512.16087},
  year         = {2025},
  url          = {https://doi.org/10.48550/arXiv.2512.16087},
  doi          = {10.48550/ARXIV.2512.16087},
  eprinttype   = {arXiv},
  eprint       = {2512.16087},
  bibsource    = {dblp computer science bibliography, https://dblp.org}
}

@phdthesis{Pohl69,
  author       = {Ira Pohl},
  title        = {Bi-directional and heuristic search in path problems},
  school       = {Stanford University, {USA}},
  year         = {1969},
  url          = {https://searchworks.stanford.edu/view/2197829},
  bibsource    = {dblp computer science bibliography, https://dblp.org}
}

@article{Dreyfus69,
  author       = {Stuart E. Dreyfus},
  title        = {An Appraisal of Some Shortest-Path Algorithms},
  journal      = {Oper. Res.},
  volume       = {17},
  number       = {3},
  pages        = {395--412},
  year         = {1969},
  url          = {https://doi.org/10.1287/opre.17.3.395},
  doi          = {10.1287/OPRE.17.3.395},
  bibsource    = {dblp computer science bibliography, https://dblp.org}
}

@article{networkscience,
    author = {Barabási, Albert-László},
    title = {Network science},
    journal = {Philosophical Transactions of the Royal Society A: Mathematical, Physical and Engineering Sciences},
    volume = {371},
    number = {1987},
    pages = {20120375},
    year = {2013},
    month = {03},
    issn = {1364-503X},
    doi = {10.1098/rsta.2012.0375},
    url = {https://doi.org/10.1098/rsta.2012.0375},
    eprint = {https://royalsocietypublishing.org/rsta/article-pdf/doi/10.1098/rsta.2012.0375/414334/rsta.2012.0375.pdf},
}

@inproceedings{BergHRRW25_esa_instanceopt_convexhull,
  author       = {Sarita de Berg and
                  Ivor {van der Hoog} and
                  Eva Rotenberg and
                  Daniel Rutschmann and
                  Sampson Wong},
  editor       = {Anne Benoit and
                  Haim Kaplan and
                  Sebastian Wild and
                  Grzegorz Herman},
  title        = {Instance-Optimal Imprecise Convex Hull},
  booktitle    = {33rd Annual European Symposium on Algorithms, {ESA} 2025, Warsaw,
                  Poland, September 15-17, 2025},
  series       = {LIPIcs},
  volume       = {351},
  pages        = {25:1--25:15},
  publisher    = {Schloss Dagstuhl - Leibniz-Zentrum f{\"{u}}r Informatik},
  year         = {2025},
  url          = {https://doi.org/10.4230/LIPIcs.ESA.2025.25},
  doi          = {10.4230/LIPICS.ESA.2025.25},
  bibsource    = {dblp computer science bibliography, https://dblp.org}
}

@article{Nicholson66,
  author       = {T. A. J. Nicholson},
  title        = {Finding the Shortest Route between Two Points in a Network},
  journal      = {Comput. J.},
  volume       = {9},
  number       = {3},
  pages        = {275--280},
  year         = {1966},
  url          = {https://doi.org/10.1093/comjnl/9.3.275},
  doi          = {10.1093/COMJNL/9.3.275},
  bibsource    = {dblp computer science bibliography, https://dblp.org}
}

@article{Dijkstra59,
  author       = {Edsger W. Dijkstra},
  title        = {A note on two problems in connexion with graphs},
  journal      = {Numerische Mathematik},
  volume       = {1},
  pages        = {269--271},
  year         = {1959},
  url          = {https://doi.org/10.1007/BF01386390},
  doi          = {10.1007/BF01386390},
  bibsource    = {dblp computer science bibliography, https://dblp.org}
}

@article{instance_optimal_Roughgarden19,
  author       = {Tim Roughgarden},
  title        = {Beyond worst-case analysis},
  journal      = {Commun. {ACM}},
  volume       = {62},
  number       = {3},
  pages        = {88--96},
  year         = {2019},
  url          = {https://doi.org/10.1145/3232535},
  doi          = {10.1145/3232535},
  bibsource    = {dblp computer science bibliography, https://dblp.org}
}

@inproceedings{FOCS_best_HaeuplerHRTT24,
  author       = {Bernhard Haeupler and
                  Richard Hlad{\'{\i}}k and
                  V{\'{a}}clav Rozhon and
                  Robert E. Tarjan and
                  Jakub Tetek},
  title        = {Universal Optimality of Dijkstra Via Beyond-Worst-Case Heaps},
  booktitle    = {65th {IEEE} Annual Symposium on Foundations of Computer Science, {FOCS}
                  2024, Chicago, IL, USA, October 27-30, 2024},
  pages        = {2099--2130},
  publisher    = {{IEEE}},
  year         = {2024},
  url          = {https://doi.org/10.1109/FOCS61266.2024.00125},
  doi          = {10.1109/FOCS61266.2024.00125},
  bibsource    = {dblp computer science bibliography, https://dblp.org}
}

@article{siamcomp_DagumKLR00,
  author       = {Paul Dagum and
                  Richard M. Karp and
                  Michael Luby and
                  Sheldon M. Ross},
  title        = {An Optimal Algorithm for Monte Carlo Estimation},
  journal      = {{SIAM} J. Comput.},
  volume       = {29},
  number       = {5},
  pages        = {1484--1496},
  year         = {2000},
  url          = {https://doi.org/10.1137/S0097539797315306},
  doi          = {10.1137/S0097539797315306},
  bibsource    = {dblp computer science bibliography, https://dblp.org}
}

@article{instnce-optimal_AfshaniBC17,
  author       = {Peyman Afshani and
                  J{\'{e}}r{\'{e}}my Barbay and
                  Timothy M. Chan},
  title        = {Instance-Optimal Geometric Algorithms},
  journal      = {J. {ACM}},
  volume       = {64},
  number       = {1},
  pages        = {3:1--3:38},
  year         = {2017},
  url          = {https://doi.org/10.1145/3046673},
  doi          = {10.1145/3046673},
  bibsource    = {dblp computer science bibliography, https://dblp.org}
}

@inproceedings{sosa_bidirectional_HaeuplerHRTT25,
  author       = {Bernhard Haeupler and
                  Richard Hlad{\'{\i}}k and
                  V{\'{a}}clav Rozhon and
                  Robert E. Tarjan and
                  Jakub Tetek},
  editor       = {Ioana Oriana Bercea and
                  Rasmus Pagh},
  title        = {Bidirectional Dijkstra's Algorithm is Instance-Optimal},
  booktitle    = {2025 Symposium on Simplicity in Algorithms, {SOSA} 2025, New Orleans,
                  LA, USA, January 13-15, 2025},
  pages        = {202--215},
  publisher    = {{SIAM}},
  year         = {2025},
  url          = {https://doi.org/10.1137/1.9781611978315.16},
  doi          = {10.1137/1.9781611978315.16},
  bibsource    = {dblp computer science bibliography, https://dblp.org}
}

@inproceedings{stoc_ValiantV16_instance_optimal,
  author       = {Gregory Valiant and
                  Paul Valiant},
  editor       = {Daniel Wichs and
                  Yishay Mansour},
  title        = {Instance optimal learning of discrete distributions},
  booktitle    = {Proceedings of the 48th Annual {ACM} {SIGACT} Symposium on Theory
                  of Computing, {STOC} 2016, Cambridge, MA, USA, June 18-21, 2016},
  pages        = {142--155},
  publisher    = {{ACM}},
  year         = {2016},
  url          = {https://doi.org/10.1145/2897518.2897641},
  doi          = {10.1145/2897518.2897641},
  bibsource    = {dblp computer science bibliography, https://dblp.org}
}

@inproceedings{focs_NarayananRTT24_thorup,
  author       = {Shyam Narayanan and
                  V{\'{a}}clav Rozhon and
                  Jakub Tetek and
                  Mikkel Thorup},
  title        = {Instance-Optimality in I/O-Efficient Sampling and Sequential Estimation},
  booktitle    = {65th {IEEE} Annual Symposium on Foundations of Computer Science, {FOCS}
                  2024, Chicago, IL, USA, October 27-30, 2024},
  pages        = {658--688},
  publisher    = {{IEEE}},
  year         = {2024},
  url          = {https://doi.org/10.1109/FOCS61266.2024.00048},
  doi          = {10.1109/FOCS61266.2024.00048},
  bibsource    = {dblp computer science bibliography, https://dblp.org}
}

@inproceedings{focs_BlancCW25_clement,
  author       = {Guy Blanc and
                  Cl{\'{e}}ment L. Canonne and
                  Erik Waingarten},
  title        = {Instance-Optimal Uniformity Testing and Tracking},
  booktitle    = {66th {IEEE} Annual Symposium on Foundations of Computer Science, {FOCS}
                  2025, Sydney, Australia, December 14-17, 2025},
  pages        = {1351--1365},
  publisher    = {{IEEE}},
  year         = {2025},
  url          = {https://doi.org/10.1109/FOCS63196.2025.00071},
  doi          = {10.1109/FOCS63196.2025.00071},
  bibsource    = {dblp computer science bibliography, https://dblp.org}
}

@article{siamcomp/ValiantV17,
  author       = {Gregory Valiant and
                  Paul Valiant},
  title        = {An Automatic Inequality Prover and Instance Optimal Identity Testing},
  journal      = {{SIAM} J. Comput.},
  volume       = {46},
  number       = {1},
  pages        = {429--455},
  year         = {2017},
  url          = {https://doi.org/10.1137/151002526},
  doi          = {10.1137/151002526},
  bibsource    = {dblp computer science bibliography, https://dblp.org}
}

@incollection{books_ValiantV20,
  author       = {Gregory Valiant and
                  Paul Valiant},
  editor       = {Tim Roughgarden},
  title        = {Instance Optimal Distribution Testing and Learning},
  booktitle    = {Beyond the Worst-Case Analysis of Algorithms},
  pages        = {506--526},
  publisher    = {Cambridge University Press},
  year         = {2020},
  url          = {https://doi.org/10.1017/9781108637435.029},
  doi          = {10.1017/9781108637435.029},
  bibsource    = {dblp computer science bibliography, https://dblp.org}
}

@inproceedings{colt_ChenL16_openproblem,
  author       = {Lijie Chen and
                  Jian Li},
  editor       = {Vitaly Feldman and
                  Alexander Rakhlin and
                  Ohad Shamir},
  title        = {Open Problem: Best Arm Identification: Almost Instance-Wise Optimality
                  and the Gap Entropy Conjecture},
  booktitle    = {Proceedings of the 29th Conference on Learning Theory, {COLT} 2016,
                  New York, USA, June 23-26, 2016},
  series       = {{JMLR} Workshop and Conference Proceedings},
  pages        = {1643--1646},
  publisher    = {JMLR.org},
  year         = {2016},
  url          = {http://proceedings.mlr.press/v49/chen16b.html},
  bibsource    = {dblp computer science bibliography, https://dblp.org}
}

@inproceedings{nips_LiRNJJ22,
  author       = {Zhaoqi Li and
                  Lillian J. Ratliff and
                  Houssam Nassif and
                  Kevin Jamieson and
                  Lalit Jain},
  editor       = {Sanmi Koyejo and
                  S. Mohamed and
                  A. Agarwal and
                  Danielle Belgrave and
                  K. Cho and
                  A. Oh},
  title        = {Instance-optimal {PAC} Algorithms for Contextual Bandits},
  booktitle    = {Advances in Neural Information Processing Systems 35: Annual Conference
                  on Neural Information Processing Systems 2022, NeurIPS 2022, New Orleans,
                  LA, USA, November 28 - December 9, 2022},
  year         = {2022},
  url          = {http://papers.nips.cc/paper\_files/paper/2022/hash/f4821075019a058700f6e6738eea1365-Abstract-Conference.html},
  bibsource    = {dblp computer science bibliography, https://dblp.org}
}

@inproceedings{dang_neurips_23,
author = {Dang, Trung and Lee, Jasper C.H. and Song, Maoyuan and Valiant, Paul},
title = {Optimality in mean estimation: beyond worst-case, beyond sub-Gaussian, and beyond 1 + $\alpha$ moments},
year = {2023},
publisher = {Curran Associates Inc.},
address = {Red Hook, NY, USA},
booktitle = {Proceedings of the 37th International Conference on Neural Information Processing Systems},
articleno = {182},
numpages = {27},
location = {New Orleans, LA, USA},
series = {NIPS '23}
}

@article{FaginLN03_instance_optimal,
  author       = {Ronald Fagin and
                  Amnon Lotem and
                  Moni Naor},
  title        = {Optimal aggregation algorithms for middleware},
  journal      = {J. Comput. Syst. Sci.},
  volume       = {66},
  number       = {4},
  pages        = {614--656},
  year         = {2003},
  url          = {https://doi.org/10.1016/S0022-0000(03)00026-6},
  doi          = {10.1016/S0022-0000(03)00026-6},
  bibsource    = {dblp computer science bibliography, https://dblp.org}
}

@article{hart1968formal,
  title={A formal basis for the heuristic determination of minimum cost paths},
  author={Hart, Peter E and Nilsson, Nils J and Raphael, Bertram},
  journal={IEEE transactions on Systems Science and Cybernetics},
  volume={4},
  number={2},
  pages={100--107},
  year={1968},
  publisher={IEEE}
}
